\documentclass[journal]{IEEEtran}
\usepackage{orcidlink}
\usepackage{cite}
\usepackage{epsfig}
\usepackage[caption=false,font=footnotesize]{subfig}
\usepackage{graphicx}
\usepackage{amssymb}
\usepackage{array}
\newcolumntype{C}[1]{>{\centering\let\newline\\\arraybackslash\hspace{0pt}}m{#1}}
\usepackage{pifont}
\usepackage[cmex10]{amsmath, nccmath}
\usepackage{mathtools}
\usepackage{algorithm}
\usepackage{algorithmic}
\usepackage{etoolbox}
\AtBeginEnvironment{algorithmic}{\renewcommand{\baselinestretch}{1}\selectfont}
\usepackage{upgreek}
\usepackage{graphicx}
\usepackage{enumitem}
\setlist{leftmargin=*}
\usepackage{bm}
\newtheorem{theorem}{Theorem}
\newtheorem{remark}{Remark}

\newtheorem{lemma}{Lemma}
\newtheorem{definition}{Definition}
\newtheorem{corollary}{Corollary}

\newcommand{\E}[1]{\mathbb E\left[#1\right]}
\newcommand{\Prob}[1]{\mathbb P\left[#1\right]}

\newcommand{\mc}[1]{\mathcal{#1}}
\newcommand{\mb}[1]{\mathbf{#1}}
\newcommand{\mr}[1]{\mathrm{#1}}
\newcommand{\ms}[1]{\mathsf{#1}}

\newcommand{\thmref}[1]{Theorem~\ref{#1}}

\newcommand{\defnref}[1]{Definition~\ref{#1}}
\newcommand{\secref}[1]{Section~\ref{#1}}
\newcommand{\lemref}[1]{Lemma~\ref{#1}}

\newcommand{\remref}[1]{Remark~\ref{#1}}
\newcommand{\corref}[1]{Corollary~\ref{#1}}
\newcommand{\appref}[1]{Appendix~\ref{#1}}
\newcommand{\algref}[1]{Algorithm~\ref{#1}}
\newcommand{\figref}[1]{Fig.~\ref{#1}}

\DeclareMathOperator*{\argmax}{arg\,max}

\usepackage{hyperref}
\usepackage{url}
\usepackage{cases}
\hypersetup{colorlinks=true, linkcolor=blue, citecolor=blue, urlcolor = blue}
\usepackage{ifthen}
\newif\ifshowtodo
\showtodofalse

\renewcommand{\IEEEproofindentspace}{1\parindent}

\newcommand{\VersionLength}{long}
\providecommand{\ver}{\ifthenelse{\equal{\VersionLength}{long}}}
\allowdisplaybreaks
\begin{document}
\title{Posterior Matching over Binary-Input Memoryless Symmetric Channels: Non-Asymptotic Bounds and Low-Complexity Encoding}
\author{Recep Can Yavas~\orcidlink{0000-0002-5640-515X},~\IEEEmembership{Member,~IEEE}
\thanks{R. C. Yavas is with the Department of Electrical and Electronics Engineering at Bilkent University, Ankara, Turkey (e-mail: ryavas@bilkent.edu.tr).}}
\IEEEoverridecommandlockouts
\maketitle 
\begin{abstract}
We study variable-length feedback (VLF) codes over binary-input memoryless symmetric (BMS) channels using posterior matching with small-enough-difference (SED) partitioning. Prior analyses of SED-based schemes rely on bounded log-likelihood ratio (LLR) increments, restricting their scope to discrete-output channels such as the binary symmetric channel (BSC). We remove this restriction and provide an analysis of posterior matching that covers a broad class of BMS channels, including continuous-output channels such as the binary-input AWGN channel. We derive a novel non-asymptotic achievability bound on the expected decoding time that decomposes into communication, confirmation, and recovery terms with explicit dependence on the channel capacity~$C$, the KL divergence~$C_1$, and the Bhattacharyya parameter of the channel. The proof develops new stopping-time and overshoot bounds for submartingales and random walks with unbounded increments, drawing on tools from renewal theory. On the algorithmic side, we propose a low-complexity encoder that enforces the exact SED partition at every step by grouping messages according to their log-likelihood ratios that are assumed to land on a lattice, and applying a batched correction step that restores the partition balance. The resulting encoder complexity is polynomial in the number of transmitted bits. For continuous-output channels, the lattice structure is enforced through output quantization satisfying an exact induced-lattice constraint; the associated capacity loss is $O(\log B / B^2)$ for a $B$-level quantizer. These results yield a VLF coding scheme for BMS channels that simultaneously achieves strong non-asymptotic performance and practical encoder complexity.
\end{abstract}
\begin{IEEEkeywords}
Variable-length feedback codes, non-asymptotic bounds, posterior matching, Gaussian channel.
\end{IEEEkeywords}
\section{Introduction}
Feedback does not increase the capacity of discrete memoryless channels (DMCs) \cite{shannon1956zero}.
It nevertheless simplifies capacity-achieving coding schemes in several settings including Horstein's scheme for the BSC and the Schalkwijk--Kailath scheme for the Gaussian channel \cite{horstein, schalkwijk}. It also improves finite-blocklength performance: Wagner \emph{et al.} \cite{wagner2020} show that even one bit of feedback improves the second-order achievable rate for DMCs that admit multiple capacity-achieving input distributions with distinct dispersions. The effect of feedback is even stronger in variable-length feedback (VLF) coding, where the transmission duration is a stopping time that depends on the channel outputs.
 
Burnashev~\cite{burnashev1976data} characterizes the optimal reliability function of VLF codes over a DMC. For every rate $R$ less than the capacity $C$, the optimal error exponent is given by
\begin{align}
    E(R) = \lim_{\epsilon \to 0} -\frac{1}{\E\tau } \log \epsilon  = C_1 \left(1 - \frac{R}{C} \right), \label{eq:burnashev}
\end{align}
where
\begin{align}
    C_1 = \max_{x, x' \in \mc{X}} D(P_{Y|X = x} \| P_{Y|X = x'}) \label{eq:C1intro}
\end{align}
is the Kullback--Leibler (KL) divergence between the output distributions induced by the two most distinguishable channel inputs,
$\tau$ is the random decoding time, and $\epsilon$ is the average error probability. For any $R < C$, Burnashev's error exponent \eqref{eq:burnashev} is larger than the corresponding exponent for fixed-length codes without feedback \cite{Gallager1965}. Significant gains are already possible with the most economical form of feedback, known as stop-feedback (or decision feedback): the encoder transmits a pre-determined codeword and the receiver sends back only a one-bit stop/continue signal. Forney~\cite{Forney1968} shows that stop-feedback achieves an error exponent strictly between those of Gallager and Burnashev. For a non-vanishing error probability $\epsilon \in (0, 1)$, Polyanskiy \emph{et al.} \cite{polyanskiy2011feedback} derive the following achievability and converse bounds on $\log M^*(N, \epsilon)$, the logarithm of the maximum achievable codebook size compatible with an average stopping time $\E{\tau} \leq N$ and an average error probability not exceeding $\epsilon$
\begin{IEEEeqnarray}{rCl}
    \frac{N C}{1-\epsilon} - \log N + O(1) &\leq& \log M^*(N, \epsilon) \IEEEnonumber \\
    &\leq& \frac{N C}{1-\epsilon} + \frac{h_{\mr{b}}(\epsilon)}{1-\epsilon},\IEEEeqnarraynumspace \label{eq:pol}
\end{IEEEeqnarray}
where $h_{\mr{b}}(\epsilon) \triangleq - \epsilon \log \epsilon - (1-\epsilon) \log(1-\epsilon)$ is the binary entropy function. Hence, the $\epsilon$-capacity is $\frac{C}{1-\epsilon}$, and the second-order term in the achievable $\log M$ is $O\left(\log N\right)$. The lower bound in \eqref{eq:pol} employs stop-feedback together with an information-density threshold rule.
 
Burnashev's achievability proof uses a two-phase scheme with full noiseless feedback: a communication phase that aims to increase the posterior of the transmitted message, and a confirmation phase that verifies or rejects the tentative decision. Yamamoto and Itoh \cite{YamamotoItoh} give a simpler and more modular construction that also achieves the optimal exponent. Their scheme alternates between fixed-length communication blocks, which can use any capacity-achieving code, and fixed-length confirmation blocks, in which the transmitter sends one of two control sequences formed from the most distinguishable input pair achieving \eqref{eq:C1intro}. The receiver performs a binary hypothesis test on the noisy control sequence and feeds its decision back to the transmitter; if the estimate is rejected, a new epoch begins. Berlin \emph{et al.} \cite{berlin} give a converse showing that this communication/confirmation structure is implicit in any scheme that achieves the optimal exponent. Chen \emph{et al.} \cite{chen2023} derive a non-asymptotic achievability bound for VLF codes with finitely many feedback instances using a Yamamoto--Itoh-type construction with optimized phase lengths. Yavas and Tan \cite{yavasTan2025} further improve the achievability bound in \eqref{eq:pol}: their refinement of the Yamamoto--Itoh scheme, in which both the communication and confirmation blocks have variable length with optimized average durations, achieves the second-order term $-\frac{C}{C_1} \log N$, where $C_1 > C$ for all non-trivial channels. Examples that extend the non-asymptotic VLF framework to finitely many feedback instances and to multi-terminal settings include \cite{kim2015VLF, williamson2015VLConvolution, vakilinia2016, yavas2021VLF, yavas2023VLF, Yang2022, truong2018Journal}. Extensions to the universal setting where the channel parameters are unknown appear in \cite{yavasTan2025, tchamkerten2006variable}.
 
Although the Yamamoto--Itoh scheme and its variants receive the full channel output at every step, they do not exploit this information within the communication phase: the encoder transmits a pre-determined codeword regardless of the outputs observed so far. Shayevitz and Feder \cite{shayevitz} formalize the idea of exploiting the full output at every time step through \emph{posterior matching}, placing the works of Horstein and Schalkwijk--Kailath under a unified framework. Consider a communication scheme with feedback and suppose that the receiver has observed the sequence $Y^t$ induced by the message $W$ up to time $t$. Shayevitz and Feder identify the ``missing information'' about $W$ as a random variable $U$ that is statistically independent of $Y^t$ and, together with $Y^t$, determines $W$ almost surely. They propose that the next transmitted symbol $X_{t+1}$ should be a fixed function of $U$ that follows a capacity-achieving input distribution $P_X^*$, so as to convey the most information on average. In \cite[Th.~III.1]{shayevitz}, they show that when $W \in [0, 1]$ is a continuous random variable, passing the CDF of the posterior through the inverse CDF of $P_X^*$ achieves these properties, i.e.,
\begin{align}
    X_{t+1} = F_X^{-1}(F_{W|Y^t}(W|Y^t)). \label{eq:shayevitz}
\end{align}
In the special case where the input is binary and $P_X^*$ is the uniform distribution, this reduces to
\begin{align}
    X_{t+1} = 1\{W \geq \mr{median}(F_{W|Y^t}(\cdot | Y^t))\}, \label{eq:Xtshay}
\end{align}
which partitions the message set $[0, 1]$ into two sets whose total posteriors are exactly $\frac{1}{2}$.
 
In the standard communication scenario with $M$ messages, partitioning the set $\{1, \dots, M\}$ into two sets with exactly equal total posterior is in general not possible.
For symmetric binary-input DMCs, Naghshvar \emph{et al.} \cite{naghshvar} introduce the \emph{small-enough-difference} (SED) partitioning rule for $\{1, \dots, M\}$ such that the total posteriors of the two bins are sufficiently close. Their construction attains Burnashev's optimal error exponent for symmetric binary-input channels and avoids explicit communication/confirmation switching. Naghshvar \emph{et al.} \cite{naghshvar2015EJS} also develop an extrinsic Jensen--Shannon divergence and submartingale framework for non-asymptotic analysis, and they extend the scope of the method beyond the original symmetric setting.
 
Yang \emph{et al.} \cite{yang2022SED} extend the SED framework to binary asymmetric channels (BACs)\footnote{BACs refer to arbitrary binary-input binary-output channels.}, including the BSC as a special case. Their generalized SED scheme uses a deterministic two-phase operation in the BAC setting: a communication phase based on an SED-type partition condition and a confirmation phase driven by the most likely message. They derive non-asymptotic achievability bounds for BACs, and for the BSC they give a refined analysis that combines submartingale arguments, surrogate processes, and a generalized Markov-chain first-passage analysis. This refined analysis yields a tighter average decoding time bound than that in \cite{naghshvar2015EJS}. Compared to the Yamamoto--Itoh-based schemes, SED-based schemes employ smoother back-and-forth transitions between communication and confirmation modes.
 
Antonini \emph{et al.} \cite{antonini2024} develop a low-complexity sequential posterior-matching framework for the BSC that substantially improves the practicality of the SED line. They introduce the \emph{small-enough-absolute-difference} (SEAD) partition criterion and derive a new analysis that yields stronger achievable-rate guarantees than \cite[Th.~7]{yang2022SED}. Their main algorithmic contribution is the thresholding-of-ordered-posteriors (TOP) algorithm, together with an initial systematic transmission stage. The resulting encoder organizes messages by type (i.e., by Hamming distance from the received systematic word), orders messages according to their posterior values, and applies a threshold partition without swap operations. The posteriors of the messages are updated in groups rather than individually. This structure reduces the encoder complexity to the order $O(\log^2 M)$ whereas earlier SED implementations from \cite{naghshvar2015EJS, yang2022SED} have complexity on the order $O(M \log M)$, which scales exponentially in the number of transmitted bits. The complexity $O(\log^2 M)$ makes their scheme practical at message sizes of interest, while preserving near-capacity finite-blocklength performance.

Despite the advances above, the existing analyses of SED-based posterior matching schemes, including \cite{naghshvar2015EJS, yang2022SED, antonini2024}, all require the log-likelihood ratio (LLR) increments to be almost surely bounded. This restricts their applicability to channels with discrete output alphabets. For the practically important class of continuous-output binary-input memoryless symmetric (BMS) channels, such as the binary-input additive white Gaussian noise (BI-AWGN) channel, the LLR increments are unbounded, and the standard bounded-increment arguments break down.
 
The only non-asymptotic VLF achievability bound applicable to continuous-output channels is the Yamamoto--Itoh-based result of Yavas and Tan \cite{yavasTan2025}. That scheme, however, has encoder complexity that scales exponentially in the number of transmitted bits. On the other hand, Antonini \emph{et al.}'s low-complexity framework \cite{antonini2024} achieves polynomial encoder complexity, but only for the BSC. No existing scheme provides both strong non-asymptotic guarantees and practical encoder complexity for a broad class of BMS channels, including BI-AWGN. The present paper fills this gap.

\subsection{Main Contributions}
Our contributions are as follows.
 
\begin{enumerate}[leftmargin=*, label=\arabic*)]
\item \emph{Non-asymptotic achievability bound for a broad class of BMS channels.}
We present the first analysis of SED-based posterior matching that applies to a broad class of BMS channels, including channels with continuous output alphabets. We derive a non-asymptotic upper bound on the expected decoding time (\thmref{thm:main}) that decomposes into three interpretable terms: a communication term with drift governed by the channel capacity~$C$, a confirmation term with drift governed by the KL divergence~$C_1$, and a recovery term controlled by the Bhattacharyya parameter~$Z(P_{Y|X})$. The analysis does not require bounded LLR increments. Instead, we develop new stopping-time and overshoot bounds for submartingales and random walks with potentially unbounded increments, drawing on tools from renewal theory. These tools, stated as standalone results in \secref{sec:supporting}, may be of independent interest.
 
\item \emph{Low-complexity encoder via lattice-TOP with SED repair.}
We propose a new encoder architecture that satisfies the \emph{exact} SED partition condition at every time step while retaining the efficiency of grouped-posterior operations. The encoder works whenever the channel LLR takes values on a finite lattice: messages sharing the same cumulative LLR label are grouped together, and the partition operates on groups rather than individual messages. A TOP initialization produces an SEAD-valid partition at low cost, followed by a batched repair procedure that restores the full SED condition by transferring messages from minimum-posterior groups. The per-round encoder cost at round $t$ is $O(B^2 t^2)$, or $O(B t \log(Bt))$ with a heap-based implementation, where $B$ is the number of lattice points on the LLR scale. Moreover, the stopping time $\tau$ is $O(K)$ with high probability, so the cumulative encoder cost up to decoding is $O(B^2 K^3)$, or $O(B K^2 \log(BK))$ with a heap, with high probability, where $K = \log_2 M$ is the number of transmitted bits. This is polynomial in $K$, in contrast to the $O(M \log M) = O(2^K K)$ complexity of prior exact-SED implementations \cite{naghshvar2015EJS, yang2022SED}. Because the encoder satisfies the exact SED condition, the non-asymptotic bound derived in contribution~1) applies directly.
 
\item \emph{Output quantization with exact induced-lattice constraint.}
For continuous-output BMS channels, we design a symmetric output quantizer that maps the channel output to a finite alphabet while enforcing the property that the induced LLR on each quantized symbol equals the corresponding lattice value $k\delta$ exactly.  This output quantization serves a dual purpose.  First, it addresses a practical constraint: a real-world feedback link has finite capacity, so the decoder can only convey finitely many bits per use; quantizing the output to $B$ levels requires only $\lceil \log_2 B \rceil$ feedback bits per channel use, in contrast to the infinite-precision feedback assumed by the continuous-output model. Second, and equally important for our scheme, the quantization creates the lattice-valued LLR structure that the low-complexity encoder in contribution~2) exploits. Thus, a single design choice, the $B$-level quantizer, simultaneously makes the feedback link practical and reduces the encoder complexity from exponential to polynomial.
 
We derive a capacity-loss expansion showing that the induced capacity satisfies $C - C_{B,\delta} = O(\delta^2)$, where $\delta$ is the lattice spacing, and the coefficient of $\delta^2$ is a channel-dependent constant. For the BI-AWGN channel, we show that the optimal lattice spacing yields a capacity loss of $O(\log B / B^2)$ (\corref{cor:awgn_theta}). We also provide a numerical design algorithm that minimizes the capacity-loss given the noise power $\sigma^2$ and the quantization level $B$.
\end{enumerate}
 
Table~\ref{tab:comparison} below summarizes the scope, complexity, and achievable-rate characteristics of the present work alongside the most closely related prior results. 
 
\begin{table*}[t]
\renewcommand{\baselinestretch}{1}\selectfont
\centering
\caption{Comparison of VLF coding schemes for binary-input channels. Here $K = \log_2 M$ is the number of transmitted bits and $B$ is the number of quantization levels. The second- and third-order terms refer to the achievable expansion $\log M^*(N, \epsilon) \geq \frac{NC}{1-\epsilon} + \text{(second order)} + \text{(third order)} + o(1)$ as $N \to \infty$ with $\epsilon \in (0, 1/2)$ fixed.}
\label{tab:comparison}
\renewcommand{\arraystretch}{1.4}
\begin{tabular}{l c c c c c c c}
\hline\hline
 & \textbf{Channel} & \textbf{Bounded LLR} & \textbf{Coding} & \textbf{Encoder} & \textbf{Second-order} & \textbf{Third-order} & \textbf{Burnashev} \\[-2pt]
\textbf{Reference} & \textbf{class} & \textbf{required} & \textbf{scheme} & \textbf{complexity} & \textbf{term} & \textbf{term} & \textbf{exponent} \\
\hline
Polyanskiy \emph{et al.}~\cite{polyanskiy2011feedback} & DMC & Yes & stop-feedback & exponential in $K$ & $- \log N$ & --- & No \\
Naghshvar \emph{et al.}~\cite{naghshvar2015EJS} & Discrete BMS & Yes & SED & exponential in $K$ & $ -\left(1 + \frac{C}{C_1} \right)\log N $ & --- & Yes \\[2pt]
Yang \emph{et al.}~\cite{yang2022SED} & BAC & Yes & SED & exponential in $K$ & $-\frac{C}{C_1}\log N$ & $O(1)$ & Yes \\[2pt]
Antonini \emph{et al.}~\cite{antonini2024} & BSC & Yes & SEAD & $O(K^2)$ & $-\frac{C}{C_1}\log N$ & $O(1)$ & Yes \\[2pt]
Yavas--Tan~\cite{yavasTan2025} & DMC$^{\,a}$ & No & Yam.-Itoh & exponential in $K$ & $-\frac{C}{C_1}\log N$ & $-\log\log N$ & Yes \\[2pt]
\hline
\textbf{This paper} & \textbf{BMS}$^{\,a}$ & \textbf{No} & \textbf{SED} & $\bm{O(B K^2 \log(BK))}$ & $\bm{-\frac{C}{C_1}\log N}$ & $\bm{O(1)}$ & \textbf{Yes} \\
\hline\hline
\vspace{1pt}
\end{tabular}
\raggedright
\footnotesize
$^{a}$\,The theorem follows for binary-input continuous-output channels such as BI-AWGN as well. \quad
\end{table*}

\subsection{Paper Organization}
\secref{sec:notation} introduces the notation and defines BMS channels. \secref{sec:problemform} formulates the VLF coding problem and reviews the SED partitioning scheme and its variants. \secref{sec:mainresult} states our main non-asymptotic achievability bound. \secref{sec:analysis} proves this bound, beginning with supporting results on stopping times and ruin probabilities and then analyzing the communication, confirmation, and recovery phases separately. \secref{sec:algorithm} describes the low-complexity encoder based on lattice-LLR grouping and SED repair. \secref{sec:quantization} develops the output quantization framework and its specialization to the BI-AWGN channel. Supporting proofs are collected in the appendices. \secref{sec:extension} discusses extensions of the paper and summarizes our results. 

\section{Notation and Definitions} \label{sec:notation}
For $n \in \mathbb{N}$, we denote $[n] \triangleq \{1, \dots, n\}$ and the length-$n$ vector $x^n \triangleq (x_1, \dots, x_n)$. We write $\mathbb{N}_0 \triangleq \{0, 1, 2, \dots\}$ for the set of non-negative integers. The minimum of numbers $a$ and $b$ is denoted by $a \wedge b$. For a random variable $X$ on alphabet $\mc{X}$, its distribution is denoted by $P_X$. We use $P_X$ to denote the probability mass function (PMF) when $X$ is discrete and the probability density function (PDF) when $X$ is continuous. For identically distributed random variables $X$ and $Y$, we write $X \overset{d}{=} Y$. The sign of $X$ is denoted by $\mr{sgn}(X)$. The KL divergence between $P$ and $Q$ is denoted by $D(P \| Q)$.

For any random variable $X$, define $X^+ \triangleq \max\{0, X\}$ and $X^- \triangleq -\min\{0, X\}$. The \emph{positive} and \emph{negative mean-excess overshoot parameters} are
\begin{align}
    \eta_0^{+}(P_X) &\triangleq \sup_{x \geq 0} \E{X - x \mid X \geq x}, \label{eq:eta0pos} \\
    \eta_0^{-}(P_X) &\triangleq \sup_{x \geq 0} \E{-(X + x) \mid X < -x}, \label{eq:eta0negdef}
\end{align}
defined when $\Prob{X \geq 0} > 0$ and $\Prob{X < 0} > 0$, respectively.
Given a random variable $X$ with mean $\E{X} > 0$, we define the \emph{overshoot parameter}
\begin{align}
    \eta(P_X) 
    &\triangleq \min\left \{ \frac{\E{(X^+)^2}}{\E{X}}, \frac{3 \E{|X|^3}}{\E{X^2}}, \eta_0^+(P_X)\right\}. \label{eq:etadef}
\end{align}
The first term in the minimum in \eqref{eq:etadef} is Lorden's non-asymptotic bound \cite{lorden1970}, the second is Mogul'skii's bound \cite{mogul1974}, and the third arises from \thmref{thm:hitting_time_drift_overshoot} below. The term $\eta(P_X)$ is the best of three available universal upper bounds on the expected overshoot $\E{S_\tau - a}$ of the random walk $S_n = \sum_{i=1}^n X_i$ past level $a > 0$.

All logarithms have base $e$. The standard normal PDF and cumulative distribution function (CDF) are denoted by $\phi(\cdot)$ and $\Phi(\cdot)$, respectively. 

Given a random variable $X$ with distribution $P_X$ and the parameter $\lambda > 0$ given by the solution of $\E{e^{-\lambda X}} = 1$, we define the \emph{exponentially tilted} random variable $\bar{X}$ via its distribution
\begin{align}
    \E{g(\bar{X})} = \E{g(X) e^{-\lambda X}} \label{eq:bartilt}
\end{align}
for any measurable function $g$ for which the right-hand side is well-defined. Equivalently, $P_{\bar{X}}$ has Radon--Nikodym derivative $\frac{dP_{\bar{X}}}{dP_X}(x) = e^{-\lambda x}$. If $\E{X} > 0$ and $\Prob{X < 0} > 0$, then $\E{\bar{X}} < 0$, so $-\bar{X}$ has positive mean. Hence, the overshoot parameter $\eta(P_{-\bar{X}})$ is well-defined via \eqref{eq:etadef} applied to $-\bar{X}$.

Since our goal is to generalize posterior matching schemes to arbitrary BMS channels, we fix the input alphabet as 
\begin{align}
    \mathcal{X} = \{-, +\}. \label{eq:inputalph}
\end{align}
For additive channels such as the Gaussian channel, we identify $-$ and $+$ as $-1$ and $+1$, respectively. The output alphabet $\mathcal{Y} \subseteq \mathbb{R}$ may be uncountable as in the BI-AWGN channel.

To simplify notation, the conditional distributions of a binary-input channel $P_{Y|X}$ are denoted by
\begin{align}
    P_{-} \triangleq P_{Y|X=-}, \quad \quad P_{+} \triangleq P_{Y|X = +}. \label{eq:Pplusminus}
\end{align}
The Bhattacharyya parameter, the \emph{ruin parameter}, and the LLR random variable are
\begin{align}
    Z(P_{Y|X}) &\triangleq \int \sqrt{P_+(y)\, P_-(y)}\, dy, \label{eq:Zbhatt} \\
    \chi(P_{Y|X}) &\triangleq 1 - \sqrt{1 - Z(P_{Y|X})}, \label{eq:chidef} \\
    \Lambda &\triangleq \log \frac{P_{+}(Y)}{P_-(Y)}, \quad Y \sim P_+. \label{eq:Lambdadef}
\end{align}

\begin{definition}[BMS channel] \label{def:BMS}
    A channel $(\mathcal{X}, \mathcal{Y}, P_{Y|X})$ is called BMS if 
    \begin{enumerate}
        \item $\mathcal{X} = \{-, +\}$,
        \item there exists an involutive permutation $\sigma$ of the output alphabet $\mathcal{Y} \subseteq \mathbb{R}$ (i.e., $\sigma$ is equal to its inverse) such that $P_{+}(y) = P_{-}(\sigma(y))$,
        \item the $n$-th extension of the channel is memoryless: $P_{Y^n|X^n}(y^n|x^n) = \prod_{i =1}^n P_{x_i}(y_i)$.
    \end{enumerate}
    To simplify notation, we relabel the output symbols so that
    \begin{align}
        P_{+}(y) &= P_{-}(-y), \quad \forall\, y \in \mathcal{Y}, \label{eq:PYXcond1} \\
        P_{+}(y) &\geq P_{-}(y), \quad\quad \text{if } y \geq 0. \label{eq:PYXcond2}
    \end{align}
    In other words, under the uniform input distribution, the output $Y$ is more likely to be generated by $X = \mr{sgn}(Y)$.
\end{definition}

Examples of BMS channels include the BSC, the binary erasure channel (BEC)\footnote{For the BSC and BEC, we relabel the input symbols 0 and 1 as $-$ and $+$, respectively; the output symbol 0 is relabeled as $-1$. For the BEC, the erasure symbol $\mathsf{e}$ is relabeled as 0.}, and additive channels of the form $Y = X + Z$, where $Z$ is independent of $X$ and $P_Z(z)$ is a decreasing function of $|z|$. In particular, the BI-AWGN channel satisfies this with $Z \sim \mathcal{N}(0, \sigma^2)$, i.e.,
\begin{align}
    P_Z(z) = \frac{1}{\sqrt{2 \pi \sigma^2}} e^{-\frac{z^2}{2 \sigma^2}}, \quad z \in \mathbb{R}.
\end{align}

By symmetry, the unique capacity-achieving input distribution for BMS channels is $P_X = \textrm{Uniform}(\{-, +\})$, and the corresponding output distribution is the mixture $P_{Y} = \frac{1}{2} P_{-} + \frac{1}{2} P_{+}$, which is symmetric around $0$.

The capacity and the KL divergence between the conditional output distributions are
\begin{align}
    C &= D(P_{+} \| P_{Y}) = D(P_{-} \| P_{Y}), \label{eq:capBMS} \\
    C_1 &= D(P_{+} \| P_{-}) = D(P_{-} \| P_{+}) = \E{\Lambda}, \label{eq:C1BMS}
\end{align}
with $C_1 > C$ for all non-trivial BMS channels.

\begin{remark}[BMS symmetry of the tilted LLR] \label{rem:tiltedLLR}
For a BMS channel, the LLR $\Lambda$ satisfies $\E{e^{-\Lambda}} = \int P_+(y) \frac{P_-(y)}{P_+(y)} \, dy = 1$, so $\lambda = 1$. The tilted random variable $\bar{\Lambda}$ satisfies, for any measurable $g$,
\begin{align}
    \E{g(\bar{\Lambda})} &= \E{g(\Lambda) e^{-\Lambda}}  \\
    &= \int g\!\left(\log \frac{P_+(y)}{P_-(y)}\right) P_-(y) \, dy. \label{eq:tiltedLLR}
\end{align}
By the BMS symmetry $P_+(y) = P_-(-y)$, the substitution $y \to -y$ in \eqref{eq:tiltedLLR} gives $\E{g(\bar{\Lambda})} = \int g(-\log \frac{P_+(u)}{P_-(u)}) P_+(u)\,du = \E{g(-\Lambda)}$, so $\bar{\Lambda} \overset{d}{=} -\Lambda$. Therefore,
\begin{align}
    \eta(P_{-\bar{\Lambda}}) = \eta(P_\Lambda). \label{eq:etatilteq}
\end{align}
\end{remark}

We employ the standard $o(\cdot)$, $O(\cdot)$, $\Omega(\cdot)$, and $\Theta(\cdot)$ notations for asymptotic relationships of functions.

\section{Problem Formulation and Preliminaries} \label{sec:problemform}
Following \cite{polyanskiy2011feedback, yavas2021VLF}, we define VLF codes as follows. 
\begin{definition}[VLF code {\cite[Def.~1]{polyanskiy2011feedback}}]
    \label{def:VLF}
    Fix $\epsilon \in (0, 1)$, $N > 0$, and a positive integer $M$. An $(N, M, \epsilon)$-VLF code consists of 
    \begin{enumerate}
        \item a common randomness random variable $V$ with $|\mc{V}| \leq 2$ and distribution $P_V$,
        \item encoding functions $\mathsf{f}_t \colon \mc{V} \times [M] \times \mc{Y}^{t-1} \to \mc{X}$ such that
        \begin{align}
            X_t(W) = \ms{f}_t(V, W, Y^{t-1}) \quad \forall t \in \mathbb{N}_0, \label{eq:encoder}
        \end{align}
        where $W$ is the equiprobable message on $[M]$,
        \item a stopping time $\tau \in \mathbb{N}_0$ with respect to the filtration generated by $\{V, Y^{t}\}_{t \geq 0}$, meeting the average decoding time constraint
        \begin{align}
            \E{\tau} \leq N, \label{eq:avtime}
        \end{align}
        \item a decoding function $\ms{g}_{\tau} \colon \mc{V} \times \mc{Y}^\tau \to [M]$ giving the estimate
        \begin{align}
            \hat{W} = \ms{g}_{\tau}(V, Y^\tau) \label{eq:decoder}
        \end{align}
        that satisfies the average error probability constraint
        \begin{align}
        \Prob{\hat{W} \neq W} \leq \epsilon. \label{eq:errorconst}
        \end{align}
    \end{enumerate}
\end{definition}

We define the minimum achievable expected stopping time and the maximum achievable codebook size as
\begin{align}
    N^*(M, \epsilon) &\triangleq \min\{ N \geq 0 \colon \exists \, (N, M, \epsilon)\text{-VLF code}\}, \label{eq:Nstar} \\
    M^*(N, \epsilon) &\triangleq \max\{M \in \mathbb{N} \colon \exists \, (N, M, \epsilon)\text{-VLF code}\}. \label{eq:Mstar}
\end{align}

\subsection{SED Scheme and Its Variants} \label{sec:YangSED}
In this section, we review the SED coding scheme introduced by Naghshvar \emph{et al.} \cite{naghshvar2015EJS}, together with the variants by Yang \emph{et al.} \cite{yang2022SED} and Antonini \emph{et al.} \cite{antonini2024}. These are deterministic VLF schemes.\footnote{Naghshvar et al.'s scheme is designed for binary-input DMCs; Yang et al.'s scheme is designed for binary-input binary-output channels; and Antonini et al.'s scheme is designed for the BSC.}

Let $\rho_m(t)$ be the posterior probability of the message $m \in [M]$ at time $t \in \mathbb{N}_0$ upon observing the output sequence $Y^t$
\begin{align}
    \rho_m(t) = \Prob{W = m | Y^t}. \label{eq:rhom}
\end{align}
Since $W$ is uniform on $[M]$, $\rho_m(0) = \frac{1}{M}$ for all $m$. The posterior vector $\boldsymbol{\rho}(t) = (\rho_1(t), \dots, \rho_M(t))$ is a sufficient statistic for estimating $W$ at time $t$ \cite{naghshvar2015EJS, burnashev1976data}.
By Bayes' rule, after observing the output symbol $Y_t = y_t$, the posteriors are updated as 
\begin{align}
    \rho_m(t) = \frac{\rho_m(t-1) P_{x_t(m)}(y_t)}{\sum \limits_{j \in [M]} \rho_j(t-1) P_{x_t(j)}(y_t)} \quad m \in [M], \label{eq:bayesupdate}
\end{align}
where $x_{t}(j) = \mathsf{f}_t(v, j, y^{t-1})$ is the input symbol for message $j \in [M]$ at time $t \in \mathbb{N}_0$. Through noiseless feedback, the transmitter knows $y^{t-1}$ and therefore knows $\boldsymbol{\rho}(t-1)$.

Given a target error probability $\epsilon \in (0, \frac{1}{2})$, the SED scheme operates as follows.
\begin{enumerate}
    \item \textit{Partitioning of messages:} At time $t \geq 1$, we partition $[M]$ into two bins $S_{-}(t-1)$ and $S_+(t-1)$. We use Yang \emph{et al.}'s variant \cite[eq.~(43)]{yang2022SED}, which requires
    \begin{align}
        - \min\limits_{j \in S_{+}(t-1)} \rho_j(t-1) &\leq \pi_{-}(t-1) - \pi_{+}(t-1) \\
        &\leq \min\limits_{j \in S_{-}(t-1)} \rho_j(t-1), \label{eq:SED_Yang}
    \end{align}
where 
\begin{align}
    \pi_{-}(t-1) &\triangleq \sum_{j \in S_{-}(t-1)} \rho_j(t-1), \label{eq:piminus} \\
    \pi_{+}(t-1) &\triangleq \sum_{j \in S_{+}(t-1)} \rho_j(t-1) \label{eq:piplus}
\end{align}
are the total posteriors of the bins. This condition \eqref{eq:SED_Yang} ensures that if $\rho_m(t-1) > \frac{1}{2}$ for some message $m$, then either $S_{-}(t-1) = \{m\}$ or $S_+(t-1) = \{m\}$. 

Naghshvar \emph{et al.}'s SED variant in \cite{naghshvar2015EJS} requires
\begin{align}
    0 \leq \pi_{-}(t-1) - \pi_{+}(t-1) \leq \min\limits_{j \in S_{-}(t-1)} \rho_j(t-1), \label{eq:nagh}
\end{align}
which is stricter than \eqref{eq:SED_Yang}. Antonini \emph{et al.}'s SEAD variant in \cite{antonini2024} requires
\begin{align}
     &\left \lvert \pi_{-}(t-1) - \pi_{+}(t-1) \right \rvert \leq \min\limits_{j \in S_{-}(t-1)} \rho_j(t-1), \label{eq:SED_antonini1} \\
     &\rho_j(t-1) \geq \frac{1}{2} \Longrightarrow S_{-}(t-1) = \{j\} \text{ or } S_+(t-1) = \{j\}, \label{eq:SED_antonini2}
\end{align}
which is more relaxed than \eqref{eq:SED_Yang}. The SED partitioning in \eqref{eq:SED_Yang} leads to a uniform upper bound on the conditional expected stopping time $\E{\tau | W = m}$ while the SEAD partitioning in \eqref{eq:SED_antonini1}--\eqref{eq:SED_antonini2} does not directly provide such a bound.
\item \textit{Encoding function:} At time $t$, the transmitter sends the sign of the bin containing $W$:
\begin{align}
    X_t(W) = \begin{cases}
        + &\text{if } W \in S_{+}(t-1) \\
        - &\text{if } W \in S_{-}(t-1).
    \end{cases} \label{eq:Xt}
\end{align} 
\item \textit{Stopping time:} The decoder stops at
\begin{align}
    \tau \triangleq \inf\left\{t \geq 0 \colon \max_{m \in [M]} \rho_m(t) \geq 1 - \epsilon\right\}, \label{eq:tau}
\end{align}
which is a stopping time with respect to the filtration $\mc{F}_t = \sigma(V, Y^t)$.
\item \textit{Decoding function:} At time $\tau$, the decoder outputs
\begin{align}
    \hat{W} = \argmax_{m \in [M]} \rho_m(\tau). \label{eq:hatW}
\end{align}
\end{enumerate}
As long as the encoder and decoder use the same deterministic algorithm satisfying the SED partitioning rule, the scheme is deterministic and the common randomness $V$ can be dropped. Also, because $\epsilon < \frac{1}{2}$, the estimate $\hat{W}$ in \eqref{eq:hatW} is unique.

\section{Main Result} \label{sec:mainresult}
The following theorem is the main result of the paper, giving a non-asymptotic achievability bound for VLF codes over BMS channels with $C_1 < \infty$ and
$\eta(P_{\Lambda}) < \infty$. For discrete-output BMS channels, these conditions reduce to mutual absolute continuity of $P_+$ and $P_-$ (equivalently, every output symbol is accessible from both inputs). The BI-AWGN channel satisfies the conditions for any $\sigma^2 < \infty$.
\begin{theorem}[Achievability bound for BMS channels] \label{thm:main}
    Fix a positive integer $M$ and error probability $\epsilon \in (0, \frac{1}{2})$. Fix a BMS channel $P_{Y|X}$ with capacity $C > 0$, KL divergence $C_1 = D(P_+ \| P_-) < \infty$\footnote{An important example where $C_1 = \infty$ is the BEC. In that case, a simple coding scheme that retransmits each information bit until they are not erased achieves $\log M^*(N, \epsilon) \geq \left\lfloor \frac{NC}{1-\epsilon} \right\rfloor$. See \cite[Th.~7]{polyanskiy2011feedback}.}, ruin parameter $\chi(P_{Y|X}) \in (0, 1)$, and LLR random variable $\Lambda$ defined in \eqref{eq:Lambdadef} satisfying $\eta(P_{\Lambda}) < \infty$.
    Define 
    \begin{align}
        &N_0(M, \epsilon') \triangleq \frac{\log (M-1) + \log 2}{C} + \frac{\log \frac{1-\epsilon'}{\epsilon'} + \eta(P_{\Lambda})}{C_1} \notag \\
        & + \frac{\min\!\big\{\alpha(\epsilon') \cdot \eta(P_{\Lambda}),\, \chi(P_{Y|X}) \cdot \eta_0^{-}(P_{\Lambda})\big\} + \chi(P_{Y|X}) \log 2}{(1-\chi(P_{Y|X})) \, C}, \label{eq:N0def}
    \end{align}
    where $\eta_0^-(P_{\Lambda})$ and $\eta(P_{\Lambda})$ are given in \eqref{eq:eta0negdef}--\eqref{eq:etadef} evaluated for the distribution of $\Lambda$, and
    \begin{align}
         &\alpha(\epsilon') \triangleq \max\!\left\{\frac{\chi(P_{Y|X})}{1 - e^{- \frac{1}{2}\log \frac{1-\epsilon'}{\epsilon'}}},\; e^{- \frac{1}{2}\log \frac{1-\epsilon'}{\epsilon'}}\right\}. \label{eq:Reps_halfsplit}
    \end{align}
    Then,
    \begin{align}
        N^*(M, \epsilon) &\leq \min_{\epsilon_0 \in [0, \epsilon)} \, (1-\epsilon_0) \, N_0\!\left(M, \frac{\epsilon - \epsilon_0}{1-\epsilon_0}\right). \label{eq:Nach}
    \end{align}
    \end{theorem}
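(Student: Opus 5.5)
Two steps. First, the outer minimization over $\epsilon_0$ comes from a standard time-sharing argument. Second, and this is the main work, we show that the SED scheme of \secref{sec:YangSED} with target $\epsilon'$ achieves $\E{\tau} \leq N_0(M,\epsilon')$.

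\emph{Time-sharing.} Use the common randomness $V$ to stop at $t=0$ with probability $\epsilon_0$, declaring an arbitrary message. Otherwise, with probability $1-\epsilon_0$, run the SED scheme with $\epsilon' = \frac{\epsilon-\epsilon_0}{1-\epsilon_0}$. The error probability is at most $\epsilon_0 + (1-\epsilon_0)\epsilon' = \epsilon$. The expected time is $(1-\epsilon_0)\E{\tau}$. This gives \eqref{eq:Nach} once the bound on $\E{\tau}$ is in place.

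\emph{Bound on $\E{\tau}$.} Fix the true message $m$ and bound $\E{\tau\mid W=m}$ uniformly in $m$. Track the log-posterior-ratio $U_m(t) = \log\frac{\rho_m(t)}{1-\rho_m(t)}$, which starts at $-\log(M-1)$ and at which we stop once it reaches $\log\frac{1-\epsilon'}{\epsilon'}$.

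\begin{itemize}
\item \emph{Confirmation regime ($\rho_m>1/2$).} The SED rule \eqref{eq:SED_Yang} isolates $m$ in its own bin. By Bayes' rule \eqref{eq:bayesupdate}, $U_m$ then increments by exactly an i.i.d.\ copy of $\Lambda$. So $U_m$ is a random walk with drift $C_1$.
\item \emph{Communication regime ($\rho_m\le 1/2$).} Following Naghshvar et al.\ and Yang et al., the SED condition makes $U_m(t)$ a submartingale with drift at least $C$ per step.
\end{itemize}

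The expected time is then split into three pieces.

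\begin{enumerate}
\item \emph{Communication term.} Apply a stopping-time bound for submartingales with unbounded increments (the supporting results of \secref{sec:supporting}) to the first time $U_m$ crosses $0$. The distance to travel is at most $\log(M-1)$. The $\log 2$ accounts for the level-$0$ boundary and the overshoot convention. This gives $\frac{\log(M-1)+\log 2}{C}$.
\item \emph{Confirmation term.} From level $\ge 0$, Wald's identity together with the overshoot bound $\eta(P_\Lambda)$ gives $\frac{\log\frac{1-\epsilon'}{\epsilon'}+\eta(P_\Lambda)}{C_1}$ for reaching the threshold.
\item \emph{Recovery term.} The walk may drop below $0$ before reaching the threshold.
 \begin{itemize}
 \item Its ruin probability is controlled by the exponential-martingale argument: $e^{-U}$ is a martingale because $\E{e^{-\Lambda}}=1$ (\remref{rem:tiltedLLR}). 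Combined with a Bhattacharyya-type estimate, this caps the probability of falling back at $\chi(P_{Y|X})$. An alternative estimate uses a starting point at half the threshold, which yields the factor $\alpha(\epsilon')$.
 \item Each fall-back costs an undershoot below $0$. Its expected size is at most $\eta_0^-(P_\Lambda)$, or $\eta(P_{-\bar\Lambda})=\eta(P_\Lambda)$ via the tilted walk, by \eqref{eq:etatilteq}. Recovering from this undershoot costs roughly (undershoot$+\log 2$)$/C$ in the communication regime.
 \item Summing the geometric series of returns gives the factor $\frac{1}{1-\chi}$ and the third term of \eqref{eq:N0def}. The min over the two undershoot estimates appears because either bound may be used.
 \end{itemize}
\end{enumerate}

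\emph{Main obstacle.} The hardest step is the communication-phase stopping-time bound and the recovery bookkeeping without bounded increments. In the communication regime the increments of $U_m$ are not i.i.d.: they depend on the partition. I would first establish a Lorden-type inequality for submartingales whose increments are stochastically dominated, in the overshoot sense, by $\Lambda$, giving expected time $\le (\text{distance}+\text{overshoot})/C$. I would then check that the SED condition indeed yields both the drift $C$ and this domination.
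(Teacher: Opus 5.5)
Your time-sharing step and three-phase skeleton match the paper. However, your plan does not justify the communication-phase constant $\log 2$, which also reappears in the recovery cost, and the route you sketch cannot produce it. The true increment $w_m(t,Y_{t+1})$ in \eqref{eq:wtY} equals $\Lambda$ exactly whenever $\tilde{\pi}_{x_m,m}(t)=0$, and this can happen with $\rho_m<\frac12$ under a valid SED partition. For example, with posteriors $(0.45,0.45,0.1)$, the partition $S_+=\{2\}$, $S_-=\{1,3\}$ satisfies \eqref{eq:SED_Yang}, and message $2$ then moves by $\Lambda$, which is unbounded for BI-AWGN. So the overshoot of $U_m$ past $0$ really is $\Lambda$-sized. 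A Lorden-type bound for increments dominated by $\Lambda$ would give something like $\frac{\log(M-1)+\eta_0^+(P_\Lambda)}{C}$, which is a different and weaker bound. The missing idea is the surrogate submartingale $U'_m\le U_m$, constructed as follows. Keep the increment on outputs with $\mr{sgn}(y)=-\mr{sgn}(x_m)$. On outputs with $\mr{sgn}(y)=\mr{sgn}(x_m)$, replace the correction $\log\frac{P_Y(y)}{\sum_x\tilde\pi_{x,m}(t)P_x(y)}$ by $-\frac{P_{-x_m}(y)}{P_{x_m}(y)}\log\frac{P_Y(y)}{\sum_x\tilde\pi_{x,m}(t)P_{-x}(y)}$. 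BMS symmetry (a change of measure to $P_{-x_m}$ plus $y\mapsto -y$) makes the two corrections cancel in expectation, so the drift is exactly $C$. Monotonicity \eqref{eq:PYXcond2} and AM--GM give $U'_m\le U_m$. The modified increment is at most $(1+r)\log\frac{2}{1+r}\le\log 2$ with $r=P_{-x_m}(y)/P_{x_m}(y)\in[0,1]$. Only then does \thmref{thm:hitting_time_drift_overshoot} with $\eta_0=\log 2$ give the communication term, and, restarted at $U_m(T_{\mr{down}}^{(i)})$, the $(\text{undershoot}+\log 2)/C$ recovery cost.

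The recovery bookkeeping also rests on mechanisms that do not give the stated constants.
\begin{enumerate}
\item The exponential martingale cannot give $\chi(P_{Y|X})$. $e^{-U}$ yields Lundberg's $\psi(u)\le e^{-u}$, which is vacuous near $0$, where $U_m(T_{\mr{up}}^{(i)})$ may sit. The Bhattacharyya martingale $e^{-S_n/2}Z^{-n}$, with $Z\triangleq Z(P_{Y|X})$, gives $\psi(0)\le Z$, which is weaker than $\chi$. The value $1-\sqrt{1-Z}$ comes from Spitzer's identity \eqref{eq:spitzer-ruin} combined with $\Prob{S_n<0}\le Z^n/2$ (the half-Bhattacharyya bound applied to the $n$-fold product channel), so that $\sum_n\frac1n\Prob{S_n<0}\le\frac12\log\frac{1}{1-Z}$.
\item The tilted walk does not bound the conditional undershoot by $\eta(P_\Lambda)$. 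Change of measure, Chebyshev's other inequality, and domination by a one-sided overshoot give only $\E{D_-\mid\mc{E}_-}\le\eta(P_\Lambda)/\bar p_{\mr{ruin}}(u)$.
\item The factor $\alpha(\epsilon')$ is not an alternative ruin estimate; it is always $\ge\chi(P_{Y|X})$, so it would be useless as one. It bounds the ratio $p_{\mr{ruin}}(u)/\bar p_{\mr{ruin}}(u)$ that multiplies $\eta(P_\Lambda)$. The proof uses the reflection identity $\bar p_{\mr{ruin}}(u,A)=1-p_{\mr{ruin}}(A-u,A)$, Lundberg's bound for $u\le A/2$, and $f_A(u)\le e^{-u}$ for $u\ge A/2$, with $A=\log\frac{1-\epsilon'}{\epsilon'}$. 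This is why the third term carries $\min\{\alpha(\epsilon')\eta(P_\Lambda),\chi\,\eta_0^-(P_\Lambda)\}$.
\item Your confirmation term bounds a single passage from level $\ge 0$, but confirmation stints are interleaved with recoveries. You need the concatenated walk $\tilde U$ of confirmation-phase increments, with $U_m(t(i))\ge\tilde U(i)$ because every recovery excursion has nonnegative net displacement. This charges the confirmation cost once rather than once per return.
\end{enumerate}
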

    The bound $N_0(M, \epsilon')$ is attained by Yang \emph{et al.}'s SED rule described in \secref{sec:YangSED}, and the improvement from the minimization over $\epsilon_0$ is achieved via time-sharing with a stop-at-time-zero strategy \cite{polyanskiy2011feedback, yavas2023VLF, yavasTan2025}.

The three terms in $N_0(M, \epsilon')$ have natural interpretations corresponding to three phases of the SED coding scheme.
\begin{enumerate}
\item \emph{Communication phase.} The first term, $\frac{\log(M-1) + \log 2}{C}$, captures the initial phase during which the posterior of the true message rises from its initial value $\frac{1}{M}$ toward $\frac{1}{2}$, equivalently toward the non-negative region of the log-posterior-odds scale. The drift of the underlying submartingale in this phase is at least the channel capacity~$C$. 
\item \emph{Confirmation phase.} The second term, $\frac{\log \frac{1-\epsilon'}{\epsilon'} + \eta(P_\Lambda)}{C_1}$, corresponds to the phase during which the log-posterior odds of the true message climbs from zero to the decoding threshold $\log \frac{1-\epsilon'}{\epsilon'}$. In this phase, the true message occupies a singleton bin, so the increments are i.i.d.\ log-likelihood ratios with drift $C_1$. 
\item \emph{Recovery phase.} The third term accounts for fallbacks: each time the confirmation-phase random walk drops below zero, the process must recover to the non-negative region before confirmation can resume. The factor $\frac{1}{1-\chi(P_{Y|X})}$ arises from the geometric sum over the number of such recovery events, whose probability is bounded by the ruin parameter $\chi(P_{Y|X})$. 
\end{enumerate}

The minimization over $\epsilon_0$ in \eqref{eq:Nach} is a time-sharing improvement between the SED scheme and a dummy scheme that decodes at time~0; the construction is detailed in the proof of \thmref{thm:main} in \secref{sec:proofmain}.

\begin{corollary}[Asymptotic expansion (finite-$\epsilon$ regime)] \label{cor:asymptotic}
    Under the same setup as \thmref{thm:main}, for any $\epsilon \in (0, \frac{1}{2})$,
    \begin{align}
        \log M^*(N, \epsilon) &\geq \frac{NC}{1-\epsilon} - \frac{C}{C_1} \log N + K_\epsilon + o(1) \label{eq:asymp}
    \end{align}
    as $N \to \infty$, where
    \begin{align}
        K_\epsilon &\triangleq -\frac{C}{C_1}\!\left(1 + \log \frac{C_1}{1-\epsilon}\right) - \frac{C}{C_1}\,\eta(P_\Lambda) \notag \\
        &\quad - \frac{\chi(P_{Y|X}) \cdot \min\!\big\{\eta(P_\Lambda),\, \eta_0^{-}(P_\Lambda)\big\}}{1-\chi(P_{Y|X})} - \frac{\log 2}{1-\chi(P_{Y|X})}. \label{eq:Keps}
    \end{align}
    The optimal stop-at-time-zero parameter satisfies
    \begin{align}
        \epsilon_0^* = \epsilon - \frac{(1-\epsilon)^2}{NC_1} + O(N^{-2}). \label{eq:eps0star}
    \end{align}
\end{corollary}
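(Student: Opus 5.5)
The proof inverts the bound of \thmref{thm:main}, choosing the time-sharing parameter $\epsilon_0$ so that the residual error probability $\epsilon' = \frac{\epsilon-\epsilon_0}{1-\epsilon_0}$ decays like $1/N$. Concretely, for given $N$ I will fix $\epsilon' = \epsilon'_N \triangleq \frac{1-\epsilon}{N C_1}$ and define $\epsilon_0$ through $\epsilon' = \frac{\epsilon-\epsilon_0}{1-\epsilon_0}$, that is,
\begin{align}
\epsilon_0 = \frac{\epsilon-\epsilon'}{1-\epsilon'}.
\end{align}
For large $N$ this lies in $[0,\epsilon)$, and $1-\epsilon_0 = \frac{1-\epsilon}{1-\epsilon'}$. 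Expanding then gives $\epsilon_0 = \epsilon - \epsilon'(1-\epsilon) + O(\epsilon'^2) = \epsilon - \frac{(1-\epsilon)^2}{NC_1} + O(N^{-2})$, which is \eqref{eq:eps0star}.

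Next I justify the choice of $\epsilon'$ and handle the $\epsilon'$-dependence of $\alpha$. By \eqref{eq:Nach}, any $M$ with $(1-\epsilon_0) N_0(M,\epsilon') \le N$ satisfies $M^*(N,\epsilon) \ge M$, since $N^*$ is an infimum and $N_0$ is increasing in $M$. Multiplying by $\frac{C}{1-\epsilon_0} = \frac{C(1-\epsilon')}{1-\epsilon}$, the condition $(1-\epsilon_0)N_0(M,\epsilon')\le N$ reads
\begin{align}
\log(M-1) \le \frac{NC(1-\epsilon')}{1-\epsilon} - \log 2 - \frac{C}{C_1}\Big(\log\tfrac{1-\epsilon'}{\epsilon'} + \eta(P_\Lambda)\Big) - \frac{\min\{\alpha(\epsilon')\eta(P_\Lambda), \chi\eta_0^-(P_\Lambda)\} + \chi\log 2}{1-\chi},
\end{align}
with $\chi = \chi(P_{Y|X})$. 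The leading $\epsilon'$-dependence of the right-hand side is $-\frac{NC\epsilon'}{1-\epsilon} - \frac{C}{C_1}\log\frac{1}{\epsilon'}$, which is maximized at exactly $\epsilon' = \frac{1-\epsilon}{NC_1}$, justifying the choice. As $\epsilon'\to 0$ we have $e^{-\frac12\log\frac{1-\epsilon'}{\epsilon'}} = \sqrt{\epsilon'/(1-\epsilon')} \to 0$, so $\alpha(\epsilon') \to \chi$ by \eqref{eq:Reps_halfsplit}. Hence the min term converges to $\chi\min\{\eta(P_\Lambda), \eta_0^-(P_\Lambda)\}$ with error $o(1)$.

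Now I evaluate the right-hand side at $\epsilon' = \frac{1-\epsilon}{NC_1}$:
\begin{itemize}
\item $\frac{NC\epsilon'}{1-\epsilon} = \frac{C}{C_1}$;
\item $\log\frac{1-\epsilon'}{\epsilon'} = \log N + \log\frac{C_1}{1-\epsilon} + o(1)$;
\item the constants $-\log 2 - \frac{\chi\log 2}{1-\chi}$ combine to $-\frac{\log 2}{1-\chi}$.
\end{itemize}
Collecting terms, the right-hand side equals $\frac{NC}{1-\epsilon} - \frac{C}{C_1}\log N + K_\epsilon + o(1)$, matching \eqref{eq:Keps}.

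It remains to choose an integer $M$. I take $M$ to be the largest integer with $\log(M-1)$ not exceeding the right-hand side. Since $\log M - \log(M-1) = O(1/M)$ and the floor costs $\log\frac{M}{M-1} = o(1)$ because $M\to\infty$, we get $\log M^*(N,\epsilon) \ge \log M$, which is \eqref{eq:asymp}.

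The main obstacle is bookkeeping rather than any deep step. I must check that the $o(1)$ remainders (from $\alpha(\epsilon')\to\chi$, from $\log(1-\epsilon')$, and from integer rounding) are uniform. I also need the monotonicity of $N_0$ in $M$, so that the largest admissible $M$ is indeed achievable.
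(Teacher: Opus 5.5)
Your proposal is correct and follows essentially the same route as the paper's proof in \appref{app:cor}. The time-sharing parameter is chosen so that the residual error $\epsilon'$ is of order $\frac{1-\epsilon}{NC_1}$, $\alpha(\epsilon')$ is replaced by its limit $\chi(P_{Y|X})$ at a cost of $O(\sqrt{\epsilon'}) = o(1)$, and $\log\frac{1-\epsilon'}{\epsilon'}$ and $\frac{NC(1-\epsilon')}{1-\epsilon}$ are expanded to collect $K_\epsilon$. The differences are cosmetic: the paper parametrizes by $\delta = \epsilon - \epsilon_0$, obtains $\delta^*$ from a stationarity condition, and then inverts for $\log M$, whereas you fix $N$, plug in $\epsilon' = \frac{1-\epsilon}{NC_1}$ directly, and handle $\log(M-1)$ and the integer rounding explicitly, which is a slightly cleaner achievability argument. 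In both versions the ``optimality'' of $\epsilon_0^*$ is only with respect to the leading-order objective, with $\alpha(\epsilon')$ replaced by $\chi(P_{Y|X})$.
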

\begin{IEEEproof}
    See \appref{app:cor}.
\end{IEEEproof}

From \thmref{thm:main}, substituting $M = \lfloor e^{NR}\rfloor$ and letting $\epsilon \to 0$ in \eqref{eq:N0def} (with $\alpha(\epsilon') \leq 1$ from \lemref{lem:exit_ratio}(i)) also recovers Burnashev's optimal error exponent $E(R) = C_1(1 - R/C)$, as in \cite{naghshvar2015EJS, yang2022SED, antonini2024, yavasTan2025}.

\begin{remark} \label{rem:log2}
    For the BSC with the crossover probability $p \in (0, \frac{1}{2})$, the term $\log 2$ is improved to $\frac{\log (2(1-p))}{1-p}$. For the BI-AWGN channel with noise power $\sigma^2$, it is improved to 
    \begin{align}
        (1 + r) \log \frac{2}{1+r},
    \end{align}
    where $r = \frac{\Phi(-\frac{1}{\sigma})}{\Phi(\frac{1}{\sigma})}$. For its proof, see \appref{app:log2}.

    More generally, the bound $N_0(M, \epsilon')$ in \thmref{thm:main} uses $\chi(P_{Y|X})$ as a surrogate for the ruin probability $\psi(0)$. When a tighter value of $\psi(0)$ is available, it may be substituted for $\chi(P_{Y|X})$ throughout \eqref{eq:N0def}, since $\psi(0) \leq \chi(P_{Y|X})$ is the only property of $\chi$ used in the proof. For the BSC, the classical gambler's ruin \cite[Ch.~XIV]{feller1968} gives $\psi(0) = p/(1-p)$, which is substantially smaller than $\chi(P_{Y|X})$ for small $p$. For the BI-AWGN channel, $\psi(0)$ can be numerically evaluated via Spitzer’s formula (\thmref{thm:spitzer-lundberg-sandwich}(i) below), since $\mathbb{P}_0[S_n < 0] = \Phi(-\sqrt{n}/\sigma)$ and the series converges rapidly.
\end{remark}

\begin{remark}[Parameters for the BSC and BI-AWGN]
For the BSC with crossover probability $p\in(0,\tfrac12)$,
\begin{align}
C &= \log 2-h_{\mathrm b}(p), \\
C_1 &= (1-2p)\log\frac{1-p}{p}, \notag \\
Z(P_{Y|X}) &= 2\sqrt{p(1-p)}, \notag \\
\eta(P_\Lambda) &= \log\frac{1-p}{p}, \\
\chi(P_{Y|X}) &= 1 - \sqrt{1 - 2\sqrt{p(1-p)}}, \notag \\
\eta_0^{-}(P_\Lambda) &= \log\frac{1-p}{p}, \notag
\end{align}
where $h_{\mathrm b}(p)\triangleq -p\log p-(1-p)\log(1-p)$.

For the BI-AWGN channel $Y=X+Z$ with $X\in\{-1,+1\}$ and $Z\sim\mathcal{N}(0,\sigma^2)$, with $G\sim\mathcal{N}(0,1)$ and $\Lambda=\frac{2}{\sigma^2}+\frac{2}{\sigma}G$,
\begin{align}
C &= \log 2-\E{\log\!\left(1+e^{-\Lambda}\right)}, \\
C_1 &= \frac{2}{\sigma^2}, \notag \\
Z(P_{Y|X}) &= \exp\!\left(-\frac{1}{2\sigma^2}\right), \notag \\
\eta(P_\Lambda) &\leq \frac{2}{\sigma^2}+\frac{2}{\sigma}\,\frac{\phi(1/\sigma)}{\Phi(1/\sigma)},\footnotemark \\
\chi(P_{Y|X}) &= 1 - \sqrt{1 - e^{-1/(2\sigma^2)}}, \notag \\ 
\eta_0^{-}(P_\Lambda) &= \frac{2}{\sigma}\,\frac{\phi(1/\sigma)}{\Phi(-1/\sigma)} - \frac{2}{\sigma^2}. \label{eq:gausseta}
\end{align}
\footnotetext{Numerical evaluation shows that $\eta(P_\Lambda)$ equals the mean-excess bound $\eta_0^+(P_\Lambda)$ for most values of $\sigma^2$ of interest, including $\sigma^2 = 1$.}
\end{remark}

\subsection{Comparison with the Literature}
In \cite[Remark~7]{naghshvar2015EJS}, for general DMCs, Naghshvar \emph{et al.} derive the achievability bound
\begin{align}
    N^*(M, \epsilon) &\leq \frac{\log M + \log \log \frac{M}{\epsilon}}{C} + \frac{\log \frac{1}{\epsilon} + 1}{C_1} + \frac{96 e^{2C_2}}{C C_1}, \label{eq:nagheq}
\end{align}
where 
\begin{align}
    C_2 = \max\limits_{y \in \mc{Y}} \log \frac{\max \limits_{x \in \mc{X}} P_{Y|X}(y|x)}{\min \limits_{x \in \mc{X}} P_{Y|X}(y|x)}.
\end{align}
While this result achieves Burnashev's exponent, the constant (third) term in \eqref{eq:nagheq} is non-negligible at practical code sizes (e.g., for the BSC(0.11), the third term exceeds $11 \times 10^3$ channel uses). Moreover, the higher-order rate is worse than that of \eqref{eq:N0def} due to the $\frac{\log \log \frac{M}{\epsilon}}{C}$ term.

In \cite[Th.~3]{antonini2024}, for the BSC with crossover probability $p$, Antonini \emph{et al.} derive the achievability bound
\begin{align}
    N^*(M, \epsilon) &\leq \frac{\log(M-1) + \frac{\log(2 (1-p))}{1-p}}{C}  + \frac{C_2}{C_1} \left \lceil \frac{\log \frac{1-\epsilon}{\epsilon}}{C_2} \right \rceil \notag \\
    &\quad + e^{-C_2} \frac{1 - \frac{\epsilon}{1-\epsilon} e^{-C_2}}{1 - e^{-C_2}}  \cdot \left( \frac{\log (2 (1-p))}{(1-p) C} - \frac{C_2}{C_1}\right).
\end{align}
 
Yavas and Tan's asymptotic achievability bound in \cite[Th.~2]{yavasTan2025} yields for an arbitrary DMC with $C < \infty$ and $C_1 > 0$
\begin{align}
    \log M^*(N, \epsilon) \geq \frac{NC}{1-\epsilon} - \frac{C}{C_1} \log N - \log \log N + O(1). \label{eq:yavasTanbound}
\end{align}
Our bound in \corref{cor:asymptotic} matches \eqref{eq:yavasTanbound} in the first- and second-order terms and improves the third-order term from $-\log \log N$ to $O(1)$.
 
\figref{fig:bsc_sim} compares the non-asymptotic bounds for the BSC$(0.11)$ with $\epsilon = 10^{-3}$.  Monte Carlo simulation of the SED scheme ($50{,}000$ trials per message size, $\log_2 M$ from $20$ to $120$) is shown alongside the bounds of Yang \emph{et al.}~\cite[Th.~7]{yang2022SED}, Antonini \emph{et al.}~\cite[Th.~1]{antonini2024}, \thmref{thm:main}, Yavas--Tan~\cite[Th.~1]{yavasTan2025}, and Polyanskiy \emph{et al.}'s stop-feedback bound \cite[eq.~(102)]{polyanskiy2011feedback}. The empirical error rate in the simulations is below $10^{-3}$ at all instants. For \thmref{thm:main}, we substitute the exact ruin probability $\psi(0) = p/(1-p)$ for $\chi(P_{Y|X})$ (cf.\ Remark~\ref{rem:log2}). The bound of~\cite{antonini2024} is tighter for the BSC but does not extend to continuous-output channels. 
 
\figref{fig:awgn_sim} shows the analogous comparison for the BI-AWGN channel with $\sigma^2 = 1$ and $\epsilon = 10^{-3}$, using the lattice quantizer with $B = 31$ output levels designed in \secref{sec:awgn_numeric}. Two instances of \thmref{thm:main} are shown: one with the unquantized channel parameters, and one with the lattice channel's induced $C$ and~$C_1$; both use $\psi(0)$ from Spitzer's formula in place of~$\chi$.

\figref{fig:awgn_varyB} shows the achievable rate as a function of the number of output levels~$B$ for $\log_2 M = 40$ and $\epsilon = 10^{-3}$, using the optimized lattice channel for each $B$. For the range of message sizes and error probabilities in the figures, the optimal stop-at-time-zero parameter satisfies $\epsilon_0^* = 0$.
 
\begin{figure}[t]
    \centering
    \includegraphics[width=\columnwidth]{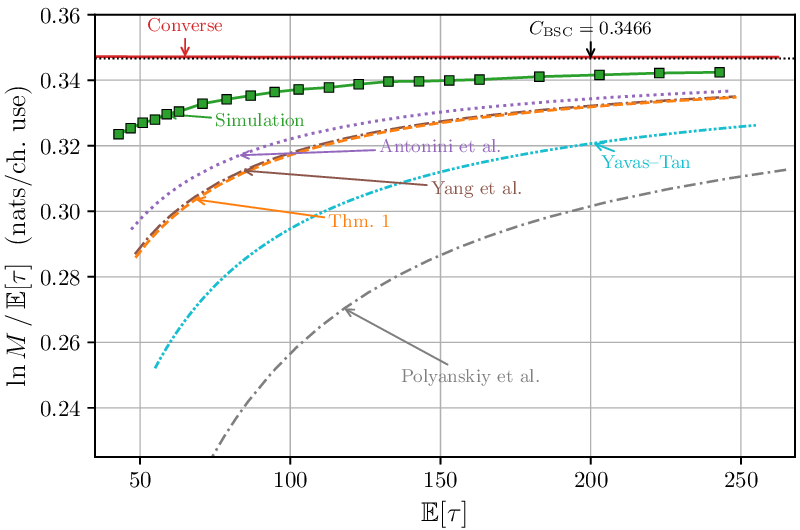}
    \caption{Rate vs.\ expected decoding time, BSC$(0.11)$, $\epsilon = 10^{-3}$.}
    \label{fig:bsc_sim}
\end{figure}

\begin{figure}[t]
    \centering
    \includegraphics[width=\columnwidth]{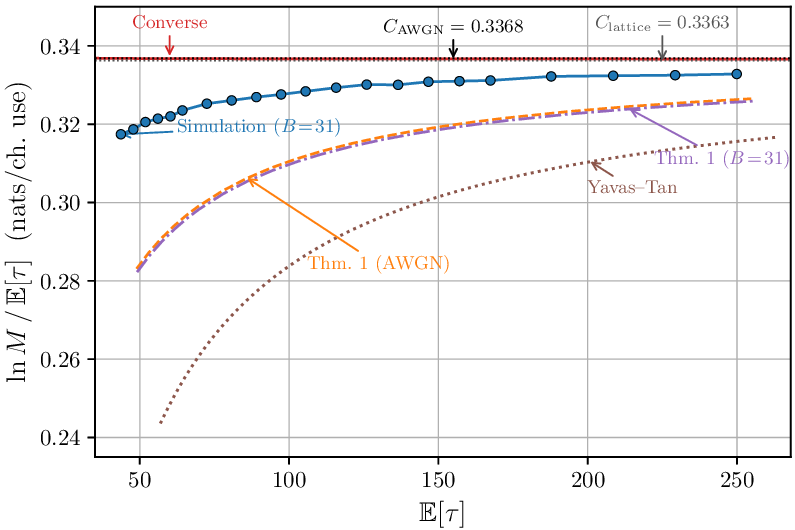}
    \caption{Rate vs.\ expected decoding time, BI-AWGN ($\sigma^2 = 1$, $B = 31$), $\epsilon = 10^{-3}$.}
    \label{fig:awgn_sim}
\end{figure}

\begin{figure}[t]
    \centering
    \includegraphics[width=\columnwidth]{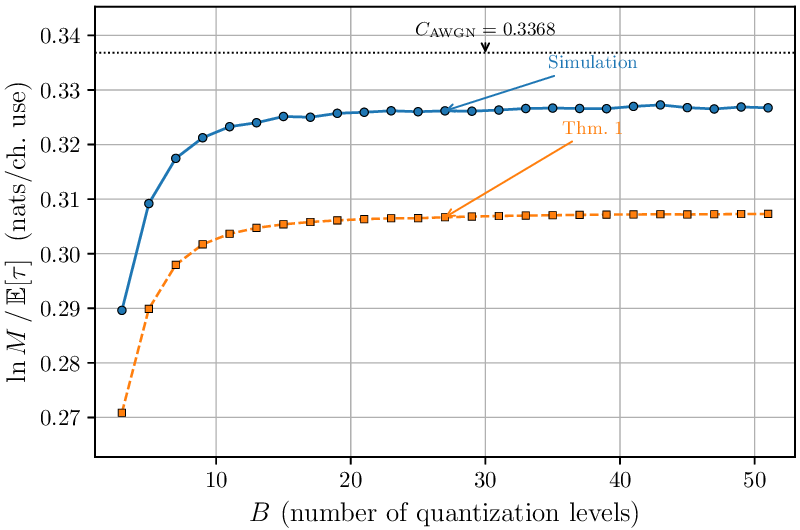}
    \caption{Rate vs.\ number of quantization levels $B$, BI-AWGN ($\sigma^2 = 1$, $\log_2 M = 40$, $\epsilon = 10^{-3}$).}
    \label{fig:awgn_varyB}
\end{figure}

\section{Analysis of the SED Scheme over Arbitrary BMS Channels} \label{sec:analysis}
We prove \thmref{thm:main} by first establishing $N_0(M, \epsilon')$ for any $\epsilon' \in (0, \frac{1}{2})$, then applying the stop-at-time-zero improvement. We begin with four auxiliary results on submartingale stopping times, first-passage times, ruin probabilities, and two-sided exit overshoots, then analyze the communication, confirmation, and recovery phases separately.

\subsection{Supporting Results on Stopping Times and Ruin Probabilities} \label{sec:supporting}
The first result is a general stopping-time bound for submartingales with positive drift and a uniform one-step mean-overshoot condition.
\begin{theorem}[Submartingale stopping-time bound]
\label{thm:hitting_time_drift_overshoot}
Let $(\Omega,\mathcal{F},\{\mathcal{F}_n\}_{n\ge 0},\mathbb{P})$ be a filtered probability space.
Let $\{S_n\}_{n\ge 0}$ be an $\{\mathcal{F}_n\}$-adapted, integrable process with $S_0 \in \mathbb{R}$ and increments
$\Delta S_n \triangleq S_n - S_{n-1}$, $n\ge 1$.
Fix $b\in\mathbb{R}$, and define the first-passage time
\begin{align}
\tau_b \triangleq \inf\big\{n\ge 1: S_n > b\big\}, \label{eq:taub}
\end{align}
with the convention $\inf\emptyset = \infty$.
Assume that the following two conditions hold.

\textit{(A1) Uniform conditional drift:}
there exists $\mu>0$ such that, for all $n\ge 1$,
\begin{align}
\mathbb{E}\!\left[\Delta S_n \mid \mathcal{F}_{n-1}\right] \ge \mu,
\qquad \text{a.s.}
\label{eq:A1_drift}
\end{align}

\textit{(A2) Uniform one-step mean-overshoot bound:}
there exists $\eta_0\in(0,\infty)$ such that, for all $n\ge 1$ and all $y\ge 0$,
\begin{align}
&\mathbb{E}\!\left[\Delta S_n - y \,\middle|\, \mathcal{F}_{n-1},\, \Delta S_n > y\right]
\le \eta_0,
\qquad \text{a.s.} \label{eq:A2_mean_excess}
\end{align}
on the event $\{\mathbb{P}[\Delta S_n>y | \mathcal{F}_{n-1}] > 0\}$.
Then $\tau_b<\infty$ almost surely and
\begin{align}
\mathbb{E}[\tau_b]
\le
\frac{(b-S_0)^+ + \eta_0}{\mu}.
\label{eq:Etau_bound}
\end{align}
In particular, if $S_0\le b$ then $\mathbb{E}[\tau_b]\le (b-S_0+\eta_0)/\mu$.
\end{theorem}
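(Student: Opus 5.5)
Write $\tau=\tau_b$ throughout. The plan is the classical optional-stopping argument for submartingales with drift (a Wald/Lorden type bound), using (A2) to control the overshoot at the crossing time.

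\textbf{Step 1 (compensator).} Define the process
\begin{align}
M_n \triangleq S_n - S_0 - \sum_{k=1}^n \mathbb{E}[\Delta S_k\mid\mathcal{F}_{k-1}].
\end{align}
This is a martingale. By (A1), $S_{n\wedge\tau}-S_0 \ge M_{n\wedge\tau} + \mu\,(n\wedge\tau)$. Applying optional stopping at the bounded time $n\wedge\tau$ gives
\begin{align}
\mu\,\mathbb{E}[n\wedge\tau] \le \mathbb{E}[S_{n\wedge\tau}] - S_0 .
\end{align}
If $S_0>b$, then $\tau=1$ could still fail, since $\tau$ requires $n\ge 1$. In that case I will treat $S_0>b$ by replacing $b$ with $\max(b,S_0)$ only in the bounding step; the $(b-S_0)^+$ form suggests exactly this.

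\textbf{Step 2 (bound on $S_{n\wedge\tau}$).} Split the expectation as
\begin{align}
\mathbb{E}[S_{n\wedge\tau}] = \mathbb{E}[S_{\tau}\,1\{\tau\le n\}] + \mathbb{E}[S_n\,1\{\tau>n\}].
\end{align}
On $\{\tau>n\}$ with $n\ge 1$ we have $S_n\le b$; for $n=0$ the value is $S_0$. In either case this term is at most $\max(b,S_0)$ times the probability of the event.

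On $\{\tau=k\}$ we have $S_{k-1}\le \max(b,S_0)$ (using $S_0$ when $k=1$), and $S_k-\max(b,S_0) \le \Delta S_k - y$ with $y=\max(b,S_0)-S_{k-1}\ge 0$. Here $y$ is $\mathcal{F}_{k-1}$-measurable and the event is $\{\Delta S_k>y\}$. Applying (A2) conditionally on $\mathcal{F}_{k-1}$, summed over $k\le n$ on $\{\tau\ge k\}\in\mathcal{F}_{k-1}$, yields
\begin{align}
\mathbb{E}[(S_\tau-\max(b,S_0))1\{\tau=k\}] \le \eta_0\,\mathbb{P}[\tau=k].
\end{align}
Hence $\mathbb{E}[S_{n\wedge\tau}] \le \max(b,S_0)+\eta_0$.

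The measurability bookkeeping is the main obstacle. The mean-excess condition holds only on the event where $\mathbb{P}[\Delta S_k>y\mid\mathcal{F}_{k-1}]>0$, and $y$ is random. I would handle this using a regular conditional distribution of $\Delta S_k$ given $\mathcal{F}_{k-1}$. This gives
\begin{align}
\mathbb{E}[(\Delta S_k-y)^+\mid\mathcal{F}_{k-1}] \le \eta_0\,\mathbb{P}[\Delta S_k>y\mid\mathcal{F}_{k-1}]
\end{align}
a.s., which holds trivially where the probability is zero. The substitution of the $\mathcal{F}_{k-1}$-measurable $y$ is justified because $y$ is fixed under the conditional law.

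\textbf{Step 3 (conclusion).} Combining Steps 1 and 2 gives
\begin{align}
\mu\,\mathbb{E}[n\wedge\tau] \le (b-S_0)^+ + \eta_0 .
\end{align}
Letting $n\to\infty$ and applying monotone convergence yields \eqref{eq:Etau_bound}. Finiteness of $\mathbb{E}[\tau]$ then implies $\tau<\infty$ almost surely.
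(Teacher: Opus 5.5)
Your proposal is correct and takes essentially the same route as the paper. Both arguments apply optional stopping to the compensated submartingale at the bounded time $\tau_b\wedge n$, bound the overshoot by conditioning on $\mathcal{F}_{k-1}$ on the predictable event $\{\tau_b\ge k\}$ and invoking (A2), and then pass to the limit by monotone convergence.

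Your replacement of $b$ by $\max(b,S_0)$ is more careful than the paper's proof. The paper disposes of the case $S_0>b$ by asserting $\tau_b=1$, which does not hold because $\tau_b$ only inspects $n\ge 1$ and $S_1$ may fall to $b$ or below. Your argument instead gives $\mathbb{E}[\tau_b]\le \eta_0/\mu$ in that case directly. It uses two facts: on $\{\tau_b\ge k\}$ the event $\{\Delta S_k>y\}$ forces $\tau_b=k$, and non-positive overshoots only strengthen the upper bound.
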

\begin{IEEEproof}
    See \appref{app:stopUniform}.
\end{IEEEproof}

The next result, from the renewal theory literature, provides a non-asymptotic bound on the expected first-passage time of an i.i.d.\ random walk with positive drift.
 \begin{theorem}[First-passage time of an i.i.d.\ random walk {\cite[Th.~1]{lorden1970}}, {\cite{mogul1974}}, {\cite[Cor.~3.1]{tchamkertenBurnashev}}]\label{thm:lordenBound}
        Let $X, X_1, X_2, \dots$ be i.i.d.\ random variables with $\E{X} = \mu > 0$ and $\E{|X|^3} < \infty$. Let $S_n = \sum_{i = 1}^n X_i$ and $\tau = \inf\{n \geq 1 \colon S_n \geq a\}$. Then, for any $a > 0$,
        \begin{align}
            \E{\tau} \leq \frac{1}{\mu} \left(a + \eta(P_X) \right), \label{eq:lordenbound}
        \end{align}
        where $\eta(P_X)$ is the overshoot parameter defined in \eqref{eq:etadef}. For $X \sim \mc{N}(\mu, \sigma^2)$, it also holds that
        \begin{align}
            \E{\tau} \leq \frac{a}{\mu} + 2 + \frac{4 \sigma}{\mu}. \label{eq:gaussspecial}
        \end{align}
    \end{theorem}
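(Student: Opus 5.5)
The plan is to reduce the bound to a bound on the expected overshoot, via Wald's identity. Since $\mu = \E{X} > 0$, the strong law gives $\tau < \infty$ a.s. Applying Wald's identity to the bounded stopping times $\tau \wedge n$ yields $\E{S_{\tau\wedge n}} = \mu\,\E{\tau \wedge n}$. Before the passage we have $S_{\tau\wedge n} < a$, so
\begin{align}
\mu\,\E{\tau\wedge n} \le a + \E{(S_\tau - a)\,1\{\tau \le n\}},
\end{align}
and we can let $n\to\infty$ by monotone convergence once the overshoot is shown to be integrable. This gives
\begin{align}
\E{\tau} = \frac{a + \E{S_\tau - a}}{\mu}.
\end{align}
It therefore suffices to show that the overshoot $R_a \triangleq S_\tau - a$ satisfies $\E{R_a} \le \eta(P_X)$. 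Because $\eta$ is a minimum, I will bound $\E{R_a}$ separately by each of the three quantities in \eqref{eq:etadef}.

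For the first two terms I will invoke the classical results rather than reprove them. Lorden's theorem \cite[Th.~1]{lorden1970} gives $\sup_{a\ge 0}\E{R_a} \le \E{(X^+)^2}/\E{X}$, using only $\E{(X^+)^2}<\infty$. Mogul'skii's bound \cite{mogul1974} gives $\sup_{a}\E{R_a} \le 3\E{|X|^3}/\E{X^2}$, and this is where the third-moment hypothesis enters. If I were to reprove Lorden's bound, I would use the ladder-height decomposition: the overshoot of $S_n$ equals the overshoot of the renewal process of strict ascending ladder heights $H$. The renewal-theoretic estimate $\E{R_a} \le \E{H^2}/\E{H}$ then combines with Lorden's comparison $\E{H^2}/\E{H} \le \E{(X^+)^2}/\E{X}$. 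I would cite this comparison rather than reprove it.

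For the third term I will apply \thmref{thm:hitting_time_drift_overshoot} with $\mathcal{F}_n = \sigma(X_1,\dots,X_n)$ and $S_0 = 0$. Condition (A1) holds with drift $\mu$ by independence. Condition (A2) holds with $\eta_0 = \eta_0^+(P_X)$, because the conditional law of $\Delta S_n = X_n$ given $\mathcal{F}_{n-1}$ is just $P_X$. The main technical care is in matching conventions, and I expect this to be the main obstacle. \thmref{thm:hitting_time_drift_overshoot} uses strict passage ($>b$) and conditions on $\{X>y\}$, whereas the present statement uses $\ge a$ and \eqref{eq:eta0pos} conditions on $\{X\ge x\}$. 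I will handle this by applying the theorem with $b = a - \delta$ and letting $\delta \downarrow 0$. The passage time over $a-\delta$ is at most the passage time $\ge a$, and monotone convergence gives the limit. I will also check that $\sup_{y\ge0}\E{X-y \mid X>y}$ is bounded by $\eta_0^+$, by approximating $\{X>y\}$ with $\{X\ge y+\delta'\}$. The result is $\E{\tau} \le (a+\eta_0^+(P_X))/\mu$, and taking the minimum of the three bounds gives \eqref{eq:lordenbound}.

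For the Gaussian case $X\sim\mathcal{N}(\mu,\sigma^2)$ I will use the third term. The Gaussian density is log-concave, so its mean residual life $x\mapsto\E{X-x\mid X\ge x}$ is nonincreasing, and the supremum is attained at $x=0$. There,
\begin{align}
\eta_0^+ = \mu + \sigma\,\frac{\phi(\mu/\sigma)}{\Phi(\mu/\sigma)} \le \mu + 2\sigma\phi(0) < 2\mu + 4\sigma,
\end{align}
where the middle inequality uses $\Phi(\mu/\sigma)\ge\tfrac12$. Dividing by $\mu$ gives \eqref{eq:gaussspecial}; alternatively this case can be cited from \cite[Cor.~3.1]{tchamkertenBurnashev}.
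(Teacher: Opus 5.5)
Your argument for the main bound \eqref{eq:lordenbound} follows the paper's route. Wald's identity reduces it to the overshoot bound $\mathbb{E}[S_\tau - a] \le \eta(P_X)$. The first two terms of \eqref{eq:etadef} are cited from Lorden and Mogul'skii, and the third comes from \thmref{thm:hitting_time_drift_overshoot}. Your handling of strict versus non-strict passage is correct, but the $b = a-\delta$ limit is unnecessary. Since $\inf\{n\colon S_n \ge a\} \le \inf\{n\colon S_n > a\}$, you can apply \thmref{thm:hitting_time_drift_overshoot} at $b = a$ directly. The only conversion still needed is your $\delta'$-approximation, which passes from $\{X>y\}$ in (A2) to $\{X \ge x\}$ in \eqref{eq:eta0pos}.

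You genuinely differ from the paper on the Gaussian bound \eqref{eq:gaussspecial}. The paper simply cites \cite{tchamkertenBurnashev}. You instead derive it from the third term of $\eta$. Log-concavity of the Gaussian density makes the mean residual life nonincreasing, so $\eta_0^+(P_X) = \mu + \sigma\phi(\mu/\sigma)/\Phi(\mu/\sigma) \le \mu + \sqrt{2/\pi}\,\sigma$. This argument is self-contained and sharper, giving $\mathbb{E}[\tau] \le a/\mu + 1 + \sqrt{2/\pi}\,\sigma/\mu$. It also shows that \eqref{eq:gaussspecial} is already a consequence of \eqref{eq:lordenbound} and need not be stated or cited separately. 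The paper's citation costs nothing to write, but it hides this redundancy.
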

    \begin{IEEEproof}
    The bound \eqref{eq:lordenbound} follows from the universal overshoot bound $\E{S_{\tau} - a} \leq \eta(P_X)$, which combines the three bounds cited in \eqref{eq:etadef}, together with Wald's identity. The Gaussian case \eqref{eq:gaussspecial} is proved in \cite{tchamkertenBurnashev}.
    \end{IEEEproof}

The next result collects classical ruin probability bounds for random walks with positive drift.
\begin{theorem}[Spitzer's identity and Lundberg's bound]
\label{thm:spitzer-lundberg-sandwich}
Let $\{X_k\}_{k\ge 1}$ be i.i.d.\ real-valued random variables with mean
$\mu\triangleq\mathbb{E}[X_1]>0$. For $x\in\mathbb{R}$, define the random walk
\begin{align}
S_0=x,\qquad S_n=x+\sum_{k=1}^n X_k,\quad n\ge 1,
\end{align}
and let $\mathbb{P}_x,\mathbb{E}_x$ denote the probability measure and
expectation under which $S_0=x$. Define the hitting times
\begin{align}
\tau_- \triangleq \inf\{n\ge 1: S_n<0\},\qquad
\tau_A \triangleq \inf\{n\ge 1: S_n\ge A\},
\end{align}
and the ultimate ruin function
\begin{align}
    \psi(x) \triangleq \mathbb{P}_x[\tau_-<\infty] \quad \text{ for }x\ge 0. \label{eq:psifunc}
\end{align}
For $0\le x\le A$, define the two-sided ruin probability
\begin{align}
p_{\mathrm{ruin}}(x,A)\triangleq\mathbb{P}_x[\tau_-<\tau_A]. \label{eq:pfbdef}
\end{align}
\begin{enumerate}
\item[(i)]
Assume additionally that $\mathbb{P}[X_1<0]>0$.
Then
\begin{align}
\psi(0)= p_{\mr{ruin}}(0, \infty) = 1-\exp\!\left(-\sum_{n=1}^{\infty}\frac{1}{n}\,
\mathbb{P}_0[S_n<0]\right). \label{eq:spitzer-ruin}
\end{align}
\item[(ii)]
For every $A>0$ and $0\le x\le A$,
\begin{align}
\psi(x)-\psi(A)\ \le\ p_{\mathrm{ruin}}(x,A)\ \le\ \psi(x)
\ \le\ \psi(0). \label{eq:sandwich}
\end{align}
\item[(iii)]
Assume there exists $R>0$ (the adjustment coefficient) such that
$\mathbb{E}\big[e^{-R X_1}\big]=1$.
Then for all $x\ge 0$,
\begin{align}
\psi(x)\le e^{-R x}. \label{eq:lundberg}
\end{align}
Consequently, substituting \eqref{eq:lundberg} into the left-hand
side of \eqref{eq:sandwich},
\begin{align}
\psi(x)-e^{-R A}\ \le\ p_{\mathrm{ruin}}(x,A)\ \le\ \psi(x)
\ \le\ \psi(0). \label{eq:fb-bracket}
\end{align}
\end{enumerate}
The identity \eqref{eq:spitzer-ruin} is a standard specialization of
Spitzer's identity; see, e.g., \cite{spitzer1956,wendel1958}.
The inequality \eqref{eq:lundberg} is the Lundberg bound; see,
e.g., \cite[eq.~(3.2)]{asmussenalbrecher2010}.
\end{theorem}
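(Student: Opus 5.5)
Part (i) is classical, so I will not rederive Spitzer's identity; I will only show how the ruin formula follows from it. Spitzer's identity for the weak descending ladder epoch $T^{-} = \tau_-$ (strict: first $n$ with $S_n<0$ starting from $0$) gives, for $|s|<1$,
\begin{align}
1-\mathbb{E}_0\bigl[s^{\tau_-}\mathbf 1\{\tau_-<\infty\}\bigr]=\exp\Bigl(-\sum_{n\ge1}\frac{s^n}{n}\mathbb P_0[S_n<0]\Bigr).
\end{align}
This is the Baxter--Spitzer factorization. I will then let $s\uparrow1$. By monotone convergence the left side tends to $1-\psi(0)$, and the right side tends to the exponential of the full series. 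The series converges because $\mu>0$, so $\mathbb{P}_0[S_n<0]$ is summable against $1/n$; this is the standard transience criterion, and the limit is legitimate even without finiteness of the series, since then both sides would tend to $0$. The assumption $\mathbb{P}[X_1<0]>0$ only guarantees that the quantity is nontrivial, i.e.\ $\psi(0)>0$. The identity $\psi(0)=p_{\rm ruin}(0,\infty)$ is just the definition with $\tau_\infty=\infty$.

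For (ii) I will argue by path inclusion and the strong Markov property. The inequality $p_{\rm ruin}(x,A)\le\psi(x)$ holds because $\{\tau_-<\tau_A\}\subseteq\{\tau_-<\infty\}$. The inequality $\psi(x)\le\psi(0)$ for $x\ge0$ is monotonicity: coupling walks with the same increments, the walk from $x$ lies above the walk from $0$, so its ruin event is contained in the other's. For the lower bound, I split $\{\tau_-<\infty\}$ into $\{\tau_-<\tau_A\}$ and $\{\tau_A<\tau_-<\infty\}$. Since $\mu>0$, $\tau_A<\infty$ a.s. On the second event, by the strong Markov property at $\tau_A$, the probability is $\mathbb{E}_x[\psi(S_{\tau_A});\tau_A<\tau_-]$. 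This is at most $\psi(A)$ because $S_{\tau_A}\ge A$ and $\psi$ is nonincreasing. Hence $\psi(x)\le p_{\rm ruin}(x,A)+\psi(A)$.

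For (iii), the Lundberg bound comes from a martingale argument. $M_n=e^{-RS_n}$ is a nonnegative martingale under $\mathbb{P}_x$ because $\mathbb{E}[e^{-RX_1}]=1$. Optional stopping at $\tau_-\wedge n$ gives
\begin{align}
e^{-Rx}=\mathbb{E}_x[M_{\tau_-\wedge n}]\ge \mathbb{E}_x[e^{-RS_{\tau_-}};\tau_-\le n]\ge \mathbb{P}_x[\tau_-\le n],
\end{align}
where the last step uses $S_{\tau_-}<0$, so $e^{-RS_{\tau_-}}>1$. Letting $n\to\infty$ yields $\psi(x)\le e^{-Rx}$. Substituting $\psi(A)\le e^{-RA}$ into the left side of (ii) gives the final bracket.

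The main obstacle is the justification in (i). First, the exact version of Spitzer's identity must be the one for the strict descending ladder, with $\mathbb{P}[S_n<0]$ and not $\le0$. Second, the Abelian passage $s\uparrow1$ must be justified; I will cite \cite{spitzer1956,wendel1958} for the former and use monotone convergence for the latter.
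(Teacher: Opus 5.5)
Your proposal is correct and follows the paper's proof essentially step for step: (ii) uses the inclusion $\{\tau_-<\tau_A\}\subseteq\{\tau_-<\infty\}$, coupling monotonicity of $\psi$, and the strong Markov property at $\tau_A$, and (iii) uses optional stopping of $e^{-RS_n}$ at $\tau_-\wedge n$. For (i) the paper only cites Spitzer's identity, so your generating-function form for the strict descending ladder epoch and the monotone passage $s\uparrow 1$ are a slightly more explicit version of the same citation.
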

\begin{IEEEproof}
    See \appref{app:ruin}.
\end{IEEEproof}

The final supporting result bounds the expected overshoot when a random walk exits a two-sided barrier from either side. This is needed to control the cost of falling back into the recovery phase during confirmation.
\begin{theorem}[Two-sided exit of a random walk]
\label{thm:two-sided-overshoot-fixed}
Let $\{X_k\}_{k\ge 1}$ be i.i.d.\ real-valued with $\E{X_1} = \mu > 0$.
Assume $\Prob{X_1 < 0} > 0$.
Define the random walk
\begin{align}
S_0 \triangleq 0, \quad 
S_n \triangleq \sum_{k=1}^n X_k,
\end{align}
and the two-sided stopping time
\begin{align}
\tau &\triangleq \inf\Big\{ n\ge 1 : S_n \notin [-B,A] \Big\}, \label{eq:tauBA}
\end{align}
where $A\ge 0$ and $B\ge 0$. Define the exit events
\begin{align}
\mathcal{E}_- &\triangleq \Big\{ S_\tau < -B \Big\}, \quad
\mathcal{E}_+ \triangleq \Big\{ S_\tau > A \Big\},
\end{align}
and the corresponding overshoots
\begin{align}
D_- &\triangleq -\big(S_\tau + B\big) \quad \text{on }\mathcal{E}_-, \quad
D_+ \triangleq S_\tau - A \quad \text{on }\mathcal{E}_+.
\end{align}
Define the exit probabilities $p_+ \triangleq \mathbb{P}\!\left[\mathcal{E}_+\right]$ and $p_- \triangleq \mathbb{P}\!\left[\mathcal{E}_-\right]$.
\begin{enumerate}
    \item[(i)] \textit{Positive-side overshoot.} Assume $\E{|X_1|^3} < \infty$. Then,
\begin{align}
\mathbb{E}\!\left[D_+ \mid \mathcal{E}_+\right]
&\le
\frac{\eta(P_{X_1})}{p_+}, \label{eq:pos-two-sided-fixed} \\
\mathbb{E}\!\left[D_+ \mid \mathcal{E}_+\right]
&\le \eta_0^{+}(P_{X_1}). \label{eq:pos-two-sided-me}
\end{align}
\item[(ii)] \textit{Negative-side overshoot.} Assume there exists $\lambda>0$ such that 
$\mathbb{E}\!\left[e^{-\lambda X_1}\right] = 1$.
Let $\bar{X}_1$ denote the tilted random variable defined via \eqref{eq:bartilt}, and define the tilted measure $\bar{\mathbb{P}}$ by $\frac{d\bar{\mathbb{P}}}{d\mathbb{P}}\big|_{\mathcal{F}_n} = e^{-\lambda S_n}$, with the associated expectation $\bar{\mathbb{E}}$. Let $\bar{p}_- \triangleq \bar{\mathbb{P}}[\mathcal{E}_-]$.
Assume $\E{|\bar{X}_1|^3} < \infty$ and $\bar{p}_- > 0$. Then,
\begin{align}
\mathbb{E}\!\left[D_- \mid \mathcal{E}_-\right]
&\le
\frac{\eta(P_{-\bar{X}_1})}{\bar{p}_-}, \label{eq:neg-two-sided-fixed} \\
\mathbb{E}\!\left[D_- \mid \mathcal{E}_-\right] &\leq \eta_0^{-}(P_{X_1}). \label{eq:neg-two-sided-me}
\end{align}
\end{enumerate}
\end{theorem}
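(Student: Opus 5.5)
Both parts rest on the same two devices. First, we decompose on the exit time and the pre-exit position, which yields the mean-excess bounds \eqref{eq:pos-two-sided-me} and \eqref{eq:neg-two-sided-me}. Second, we compare with a one-sided first passage and invoke the universal overshoot bound $\eta(\cdot)$ used in the proof of \thmref{thm:lordenBound}, which yields \eqref{eq:pos-two-sided-fixed} and \eqref{eq:neg-two-sided-fixed}. For the negative side we additionally need an exponential change of measure. We first note that $\tau<\infty$ a.s. under both $\mathbb{P}$ and $\bar{\mathbb{P}}$. This holds because $X_1$ is nondegenerate (it has positive mean and $\Prob{X_1<0}>0$), so the walk leaves any bounded interval a.s.

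\emph{Mean-excess bounds.} Write $\mathbb{E}[D_+ 1_{\mathcal{E}_+}] = \sum_{n\ge1}\mathbb{E}\big[1\{\tau> n-1\}\, \mathbb{E}[(X_n - y)1\{X_n>y\}\mid \mathcal{F}_{n-1}]\big]$ with $y = A - S_{n-1}$. Here $y\ge 0$ because $S_{n-1}\in[-B,A]$ on $\{\tau>n-1\}$, and $X_n$ is independent of $\mathcal{F}_{n-1}$. Letting $x\downarrow y$ in \eqref{eq:eta0pos} and using right-continuity gives $\E{X-y\mid X>y}\le\eta_0^+(P_{X_1})$. Hence $\mathbb{E}[(X_n-y)1\{X_n>y\}]\le \eta_0^+\,\mathbb{P}[X_n>y]$. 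Summing over $n$ gives $\mathbb{E}[D_+1_{\mathcal{E}_+}]\le \eta_0^+ p_+$, which is \eqref{eq:pos-two-sided-me}. The negative side is identical with $y = B+S_{n-1}\ge 0$ and the event $\{X_n<-y\}$. This event matches the strict inequality in \eqref{eq:eta0negdef} exactly, so \eqref{eq:neg-two-sided-me} follows.

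\emph{Positive side via one-sided passage.} Let $\sigma_A=\inf\{n: S_n>A\}$ be the first passage of the unrestricted walk. On $\mathcal{E}_+$ we have $\tau=\sigma_A$, so $D_+1_{\mathcal{E}_+}\le (S_{\sigma_A}-A)$, and the right-hand side is nonnegative everywhere. The universal overshoot bound (Lorden, Mogul'skii, and the $\eta_0^+$ bound from \thmref{thm:hitting_time_drift_overshoot}) gives $\E{S_{\sigma_A}-A}\le \eta(P_{X_1})$. A minor check is needed here: the bound must hold for strict passage, which follows since the cited bounds are stated for the ladder overshoot. Dividing by $p_+$ yields \eqref{eq:pos-two-sided-fixed}.

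\emph{Negative side via tilting.} This is the main step. By Wald's likelihood-ratio identity at the finite stopping time $\tau$, for any nonnegative $\mathcal{F}_\tau$-measurable $G$ we have $\mathbb{E}[G] = \bar{\mathbb{E}}[G\, e^{\lambda S_\tau}]$. On $\mathcal{E}_-$ we have $e^{\lambda S_\tau}= e^{-\lambda B}e^{-\lambda D_-}$. Therefore
\begin{align}
\mathbb{E}[D_-\mid\mathcal{E}_-] = \frac{\bar{\mathbb{E}}[D_- e^{-\lambda D_-}1_{\mathcal{E}_-}]}{\bar{\mathbb{E}}[e^{-\lambda D_-}1_{\mathcal{E}_-}]}.
\end{align}
Since $D\mapsto D$ is increasing and $D\mapsto e^{-\lambda D}$ is decreasing, Chebyshev's association inequality under $\bar{\mathbb{P}}[\cdot\mid\mathcal{E}_-]$ bounds this ratio by $\bar{\mathbb{E}}[D_-\mid\mathcal{E}_-]= \bar{\mathbb{E}}[D_-1_{\mathcal{E}_-}]/\bar p_-$. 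Under $\bar{\mathbb{P}}$ the increments are i.i.d. copies of $\bar X_1$ with negative mean, so $-S_n$ is a walk with positive drift. On $\mathcal{E}_-$, $D_-$ is dominated by the overshoot of $-S_n$ over level $B$. The one-sided overshoot bound under $\bar{\mathbb{P}}$, which requires $\E{|\bar X_1|^3}<\infty$, gives $\bar{\mathbb{E}}[D_-1_{\mathcal{E}_-}]\le\eta(P_{-\bar X_1})$, and this yields \eqref{eq:neg-two-sided-fixed}.

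The delicate points I expect are twofold. One is justifying the change of measure at $\tau$, which needs only $\tau<\infty$ a.s. under both measures. The other is ensuring that the weighted-mean comparison runs in the correct direction, which holds because the weight $e^{-\lambda D_-}$ is decreasing in $D_-$.
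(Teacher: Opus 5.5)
Your proposal is correct and follows essentially the same route as the paper. For the ratio bounds you dominate by a one-sided overshoot and apply the universal bound $\eta(\cdot)$. For the mean-excess bounds you condition on $\mathcal{F}_{n-1}$. On the negative side you use the likelihood-ratio identity at $\tau$, then Chebyshev's other inequality, then a Lorden-type bound for the reflected tilted walk.

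Your right-continuity step, which passes from the non-strict event $\{X \ge x\}$ in \eqref{eq:eta0pos} to the strict event $\{X > y\}$, addresses a strict/non-strict mismatch that the paper's proof passes over silently.
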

\begin{IEEEproof}
    See \appref{app:twosided}.
\end{IEEEproof}

\subsection{Analysis of Error Probability and Expected Stopping Time} \label{sec:errorstop}

\paragraph{Error probability}
Fix $\epsilon' \in (0, \frac{1}{2})$. From the definition of the stopping time \eqref{eq:tau} and the decoder's estimate $\hat{W}$ in \eqref{eq:hatW}, with target error probability $\epsilon'$, we bound the error probability as
\begin{align}
    \Prob{\hat{W} \neq W} &= \E{1 - \max_{m \in [M]} \rho_m(\tau)} \leq \epsilon', \label{eq:errpb}
\end{align}
where the inequality follows from the stopping rule $\max_{m} \rho_m(\tau) \geq 1 - \epsilon'$.

\paragraph{Score process and stopping rule}
It remains to derive an upper bound on $\E{\tau}$. Following \cite{yang2022SED, antonini2024}, we define the score (log-posterior odds)
\begin{align}
    U_m(t) &\triangleq \log \frac{\rho_m(t)}{1 - \rho_m(t)}, \label{eq:Umdef} \\
    \tau &= \inf \left\{t \geq 0 \colon \max_{m \in [M]} U_m(t) \geq \log \frac{1-\epsilon'}{\epsilon'}\right\}. \label{eq:tauU}
\end{align}
By convention, $U_m(t) = \infty$ for $t \geq \tau + 1$. The following result from \cite[Lemmas~3, 6, 7]{yang2022SED} shows that conditioned on $W = m$, $\{U_m(t)\}_{t \geq 0}$ is a submartingale with drift at least $C$ when $U_m(t) < 0$ and drift $C_1$ when $U_m(t) \geq 0$. 
\begin{lemma}[Submartingale property {\cite[Lemmas~3, 6, 7]{yang2022SED}}] \label{lem:submar}
Let $\mc{F}_t$ denote the filtration generated by the output sequence $Y^t$. Let $X_{t+1}(j)$ denote the input symbol assigned to message $j \in [M]$ at time $t+1$ according to the SED rule in \eqref{eq:SED_Yang} and \eqref{eq:Xt}. Define the extrinsic probabilities
\begin{align}
    \tilde{\pi}_{x, j}(t) = \begin{cases}
        \frac{\pi_{x}(t) - \rho_j(t)}{1 - \rho_j(t)} \quad &\text{if } j \in S_{x}(t), \\
        \frac{\pi_{x}(t)}{1 - \rho_j(t)} \quad &\text{if } j \notin S_{x}(t),
    \end{cases} \quad x \in \{-, +\}. \label{eq:extrinsic}
\end{align}
Let $x_m \in \{-, +\}$ be the bin containing message $m$, i.e., $m \in S_{x_m}(t)$. Then $\tilde{\pi}_{x_m, m}(t) \leq \frac{1}{2}$.

Moreover, the submartingale increment can be written as
\begin{align}
    &w_m(t, Y_{t+1}) \triangleq U_m(t+1) - U_m(t) \notag \\
    &= \log \frac{P_{x_m}(Y_{t+1})}{\tilde{\pi}_{-, m}(t) P_{-}(Y_{t+1}) + \tilde{\pi}_{+, m}(t) P_{+}(Y_{t+1})}. \label{eq:wtY}
\end{align}
Conditioned on $W = m$,
\begin{align}
    \E{U_m(t+1) | \mc{F}_{t}, W = m} &\geq U_m(t) + C, &\text{if } U_m(t) < 0, \label{eq:driftC} \\
     \E{U_m(t+1) | \mc{F}_{t}, W = m} &= U_m(t) + C_1, &\text{if } U_m(t) \geq 0. \label{eq:driftC1}
\end{align}
\end{lemma}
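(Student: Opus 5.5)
The plan is to derive everything from the Bayes update \eqref{eq:bayesupdate}, conditioning throughout on $\mc{F}_t$ and $W=m$, with $m\in S_{x_m}(t)$. Write $\bar x_m$ for the input opposite to $x_m$. We may assume $t<\tau$, so that $\rho_m(t)\in(0,1)$; otherwise the convention $U_m=\infty$ applies. The proof has four steps: the extrinsic bound, the increment formula, drift $C_1$ on $\{U_m(t)\ge 0\}$, and drift at least $C$ on $\{U_m(t)<0\}$.

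\textbf{Step 1: the bound $\tilde\pi_{x_m,m}(t)\le \frac12$.} Suppose first that $m\in S_+(t)$. Using $\pi_-(t)=1-\pi_+(t)$, the inequality $\frac{\pi_+(t)-\rho_m(t)}{1-\rho_m(t)}\le\frac12$ is equivalent to
\begin{align}
\pi_+(t)-\pi_-(t)\le \rho_m(t).
\end{align}
The left inequality of the SED condition \eqref{eq:SED_Yang} gives $\pi_+(t)-\pi_-(t)\le \min_{j\in S_+(t)}\rho_j(t)\le \rho_m(t)$. The case $m\in S_-(t)$ is symmetric and uses the right inequality of \eqref{eq:SED_Yang}.

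\textbf{Step 2: the increment formula \eqref{eq:wtY}.} By Bayes,
\begin{align}
\rho_m(t+1)=\frac{\rho_m(t)P_{x_m}(y)}{\sum_j \rho_j(t)P_{x_j}(y)}.
\end{align}
Summing the update over $j\neq m$, and grouping these messages by bin, gives
\begin{align}
1-\rho_m(t+1)=\frac{(1-\rho_m(t))\big(\tilde\pi_{-,m}(t)P_-(y)+\tilde\pi_{+,m}(t)P_+(y)\big)}{\sum_j \rho_j(t)P_{x_j}(y)},
\end{align}
since $\sum_{j\in S_x(t),\,j\neq m}\rho_j(t)=(1-\rho_m(t))\tilde\pi_{x,m}(t)$ by \eqref{eq:extrinsic}. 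Taking the log of the ratio of $\rho_m(t+1)$ to $1-\rho_m(t+1)$, the common denominator cancels and \eqref{eq:wtY} follows. This is pure algebra and uses no discreteness of $\mc{Y}$; for continuous outputs the densities simply replace PMFs.

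For the drift, set $a\triangleq\tilde\pi_{x_m,m}(t)\in[0,\frac12]$ and $q_a\triangleq aP_{x_m}+(1-a)P_{\bar x_m}$. Conditioned on $W=m$, the output $Y_{t+1}$ has law $P_{x_m}$. Hence
\begin{align}
\E{w_m(t,Y_{t+1})\mid \mc{F}_t, W=m}=D(P_{x_m}\,\|\,q_a).
\end{align}

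\textbf{Step 3: the case $U_m(t)\ge 0$.} Here $\rho_m(t)\ge\frac12$, and the SED rule forces $S_{x_m}(t)=\{m\}$. Then $a=0$ and $q_0=P_{\bar x_m}$, so the drift equals $D(P_{x_m}\|P_{\bar x_m})=C_1$ by \eqref{eq:C1BMS}. Both orderings of the divergence agree by BMS symmetry.

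\textbf{Step 4: the case $U_m(t)<0$.} Consider $f(a)\triangleq D(P_{x_m}\|q_a)$ on $[0,1]$. KL divergence is convex in its second argument and $q_a$ is affine in $a$, so $f$ is convex. Moreover $f\ge 0$ and $f(1)=0$. A convex, nonnegative function attaining $0$ at the right endpoint is nonincreasing on $[0,1]$. Since $a\le\frac12$, this gives
\begin{align}
f(a)\ge f(\tfrac12)=D(P_{x_m}\|P_Y)=C
\end{align}
by \eqref{eq:capBMS}.

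The main obstacle is Step 4 in the continuous-output setting. One must check that $f$ is finite on $(0,1]$, which holds because $q_a\ge aP_{x_m}$. At $a=0$ its value is $C_1<\infty$ by assumption. With finiteness in hand, the convexity and monotonicity argument goes through without any bounded-LLR assumption.
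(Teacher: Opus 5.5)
Your proof is correct, and nothing is missing. The paper gives no derivation of its own here. It cites Yang \emph{et al.}'s Lemmas~3, 6, 7, which are stated for binary asymmetric channels with a possibly non-uniform capacity-achieving input, and asserts that the same arguments carry over to BMS channels.

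Your four steps are a self-contained version of that transfer:
\begin{itemize}
\item the increment formula is pure Bayes algebra;
\item Step~1 is the SED inequality rewritten as $\tilde\pi_{x_m,m}(t)\le\frac12$;
\item the drift becomes $D(P_{x_m}\,\|\,aP_{x_m}+(1-a)P_{\bar x_m})$, which is nonincreasing in the mixing weight $a$ by convexity;
\item at $a=\frac12$ it equals $C$, because the uniform input is capacity-achieving, and at $a=0$ (forced when $\rho_m(t)\ge\frac12$) it equals $C_1$.
\end{itemize}
What your route adds is an explicit check that nothing in the cited argument uses a finite output alphabet or bounded LLRs; the paper leaves that check to the reader. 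Your Step~4 also shows the bound $\ge C$ holds whenever the SED condition holds, not only when $U_m(t)<0$.

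Two small adjustments to your final paragraph.
\begin{itemize}
\item The finiteness you flag as the main obstacle is not needed for monotonicity. For $0\le a<b\le 1$, the convexity inequality $f(b)\le\frac{1-b}{1-a}\,f(a)$ already holds in $[0,\infty]$.
\item What does deserve a line is that $\E{w_m(t,Y_{t+1})\mid\mc{F}_t,W=m}$ is well-defined and equals the divergence. This holds because the negative part of $\log\frac{P_{x_m}(Y)}{q_a(Y)}$ has expectation at most $\int q_a = 1$ under $Y\sim P_{x_m}$.
\end{itemize}
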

\begin{IEEEproof}
In \cite{yang2022SED}, the result is proved for binary asymmetric channels where the capacity-achieving input distribution is not necessarily uniform. The same arguments apply to BMS channels.
\end{IEEEproof}

\paragraph{Reduction to a uniform per-message bound}
For each message $m \in [M]$, define the individual stopping time
\begin{align}
    \tau_m \triangleq \inf\left\{t \geq 0 \colon U_m(t) \geq \log \frac{1-\epsilon'}{\epsilon'} \right\}. \label{eq:taum}
\end{align}
Since $\tau = \min \limits_{m \in [M]} \tau_m$, we have
\begin{align}
    \E{\tau} &= \frac{1}{M} \sum_{m = 1}^{M} \E{\tau | W = m} \\
    &\leq \frac{1}{M} \sum_{m = 1}^M \E{\tau_m | W = m} \leq \max\limits_{m \in [M]}  \E{\tau_m | W = m}. \label{eq:Etaubound0}
\end{align}
It therefore suffices to derive a uniform bound on the conditional expectation $\E{\tau_m | W = m}$. 

\paragraph{Three-phase decomposition}
By \lemref{lem:submar}, conditioned on $W = m$, the increment behavior of $U_m(t)$ depends on the sign of $U_m(t)$: when $U_m(t) < 0$, the drift is at least $C$; when $U_m(t) \geq 0$, the process behaves as a random walk with mean increment $C_1$. To exploit this two-regime structure, we define the up-crossing and down-crossing times of level 0. Let $T_{\mr{down}}^{(0)} = 0$, and for $i \geq 0$,
\begin{align}
    T_{\mathrm{up}}^{(i)} &\triangleq \inf\{t > T_{\mathrm{down}}^{(i)} \colon U_m(t) \geq 0\}, \label{eq:Tup} \\
    T_{\mathrm{down}}^{(i+1)} &\triangleq \inf\{ t > T_{\mathrm{up}}^{(i)} \colon U_m(t) < 0\}. \label{eq:Tdown}
\end{align}
Here, $T_{\mathrm{up}}^{(i)}$ is the $(i+1)$-th time the score reaches the non-negative region, and $T_{\mr{down}}^{(i)}$ is the $i$-th time it falls below 0. Define the number of recoveries as
\begin{align}
    N_{\mathrm{rec}} \triangleq \sup\{i \geq 0 \colon T_{\mathrm{down}}^{(i)} < \infty\}. \label{eq:Nrec}
\end{align}
By convention, $U_m(T_{\mr{down}}^{(i)}) = \infty$ for $i > N_{\mr{rec}}$.

We partition the time indices $t$ contributing to $\tau_m$ into three sets: 
\begin{align}
    \mc{A}_{\mr{comm}} &\triangleq \{ 1 \leq t \leq \tau_m \colon U_m(\tilde{t}) < 0 \,\, \forall \,\tilde{t} \leq t-1\}, \label{eq:Acomm} \\
    \mc{A}_{\mr{rec}} &\triangleq \{1 \leq t \leq \tau_m \colon U_m(t-1) < 0, \notag \\
    &\quad \exists \,\tilde{t} < t-1 \text{ s.t.\ } U_m(\tilde{t}) \geq 0\}, \label{eq:Arec} \\
    \mc{A}_{\mr{conf}} &\triangleq \{1 \leq t \leq \tau_m \colon U_m(t-1) \geq 0\}. \label{eq:Aconf}
\end{align}
Let
\begin{align}
    \tau_{\mr{comm}} \triangleq |\mc{A}_{\mr{comm}}|, \quad
    \tau_{\mr{rec}} \triangleq |\mc{A}_{\mr{rec}}|, \quad
    \tau_{\mr{conf}} \triangleq |\mc{A}_{\mr{conf}}|.
\end{align}
These correspond to the total durations of the \emph{communication}, \emph{recovery}, and \emph{confirmation} phases from the perspective of the true message $W = m$. Then,
\begin{align}
    \tau_m = \tau_{\mathrm{comm}} + \tau_{\mathrm{rec}} + \tau_{\mathrm{conf}}. \label{eq:taudecomp}
\end{align}
Here, $\tau_{\mathrm{comm}} = T_{\mathrm{up}}^{(0)}$ is the first time $U_m(t)$ reaches the non-negative region, and the recovery and confirmation durations decompose as
\begin{align}
    \tau_{\mathrm{rec}} &= \sum_{i = 1}^{N_{\mr{rec}}} (T_{\mr{up}}^{(i)} - T_{\mr{down}}^{(i)}), \label{eq:taurec}\\
    \tau_{\mathrm{conf}} &= \sum_{i = 1}^{N_{\mr{rec}}} (T_{\mr{down}}^{(i)} - T_{\mr{up}}^{(i-1)}) + (\tau_{m} - T_{\mr{up}}^{(N_{\mr{rec}})}). \label{eq:tauconf_decomp}
\end{align}
Note that only $\tau_{\mr{comm}}$ is ``uninterrupted'' in the sense that the time indices in $\mc{A}_{\mr{comm}}$ are consecutive.

We now bound $\E{\tau_{\mr{comm}} | W = m}$, $\E{\tau_{\mr{conf}} | W = m}$, and $\E{\tau_{\mr{rec}} | W = m}$ separately in the following three subsections.

\subsection{Bound on the Communication Phase} \label{sec:commbound}
The following lemma provides a non-asymptotic upper bound on $\E{\tau_{\mr{comm}} | W = m}$. 
\begin{lemma}[Communication phase] \label{lem:tcomm} \label{lem:commphase}
    Let $P_{Y|X}$ be a BMS channel. Assume that the LLR $\Lambda$ is integrable. This implies that $P_{+}$ and $P_{-}$ are mutually absolutely continuous. Then, for a codebook of size $M$, 
    \begin{align}
        \E{\tau_{\mr{comm}} | W = m} \leq \frac{\log(M-1) + \log 2}{C}. \label{eq:tcommbound}
    \end{align}
\end{lemma}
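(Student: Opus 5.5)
We will apply \thmref{thm:hitting_time_drift_overshoot} to the score process $U_m(t)$, conditioned on $W=m$. The communication phase ends at $\tau_{\mr{comm}} = T_{\mr{up}}^{(0)} = \inf\{t \geq 1 \colon U_m(t) \geq 0\}$, and it starts from $U_m(0) = \log\frac{1/M}{1-1/M} = -\log(M-1)$.

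The difficulty is that \thmref{thm:hitting_time_drift_overshoot} requires a uniform one-step mean-overshoot constant $\eta_0$. With unbounded LLR increments, the overshoot of $U_m$ past $0$ at time $T_{\mr{up}}^{(0)}$ is hard to control. So we avoid the overshoot entirely by working with a bounded transform. Define
\begin{align}
V(t) \triangleq \min\{U_m(t), 0\}
\end{align}
stopped at $\tau_{\mr{comm}}$, or equivalently use a truncated score. Rather than a submartingale with overshoot, we then use a Wald/optional-stopping argument on the drift directly. For $t < \tau_{\mr{comm}}$ we have $U_m(t) < 0$, so by \eqref{eq:driftC}
\begin{align}
\E{U_m(t+1) - U_m(t) \mid \mc{F}_t, W=m} \ge C.
\end{align}
Hence $U_m(t\wedge\tau_{\mr{comm}}) - C\,(t\wedge\tau_{\mr{comm}})$ is a submartingale. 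Optional stopping at the bounded time $t\wedge\tau_{\mr{comm}}$, followed by monotone convergence in $t$, gives
\begin{align}
C\,\E{\tau_{\mr{comm}}} \le \sup_t \E{U_m(t\wedge\tau_{\mr{comm}})} - U_m(0).
\end{align}
The whole task is therefore to show that $\E{U_m(\tau_{\mr{comm}})} \le \log 2$, or at least that the expected value of the score at the crossing is at most $\log 2$.

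The key step, and the main obstacle, is this overshoot estimate. We use \eqref{eq:wtY}. On $\{U_m(t)<0\}$ the SED rule \eqref{eq:SED_Yang} and \lemref{lem:submar} give $\tilde\pi_{x_m,m}(t)\le \tfrac12$. Consequently $\tilde\pi_{\bar x_m,m}(t) \ge \tfrac12$, where $\bar x_m$ denotes the opposite input. Then
\begin{align}
U_m(t+1) = U_m(t) + \log\frac{P_{x_m}(Y)}{\tilde\pi_{x_m,m}P_{x_m}(Y)+\tilde\pi_{\bar x_m,m}P_{\bar x_m}(Y)}.
\end{align}
We would bound $U_m(t+1)$ directly from the posterior form rather than through the increment. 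Since $\rho_m(t+1) = \rho_m P_{x_m}/(\text{normalizer})$ and $U_m(t)<0$ means $\rho_m(t)<\tfrac12$, we get
\begin{align}
e^{U_m(t+1)} = \frac{\rho_m(t) P_{x_m}(Y)}{(1-\rho_m(t))\big(\tilde\pi_{x_m,m}P_{x_m}(Y)+\tilde\pi_{\bar x_m,m}P_{\bar x_m}(Y)\big)}.
\end{align}
This quantity can still be large when $P_{\bar x_m}(Y)/P_{x_m}(Y)$ is small, so a pathwise bound of $\log 2$ fails. We therefore aim for a bound in conditional expectation. Using concavity of $\log$ (Jensen) over $Y\sim P_{x_m}$ on the event of crossing, we would show that
\begin{align}
\E{U_m(t+1)^+ \mid \mc{F}_t, W=m} \le \log 2 \cdot \Prob{U_m(t+1)\ge 0 \mid \mc{F}_t}.
\end{align}
This is the one-step ``mean overshoot past 0 at most $\log 2$'' claim. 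The extrinsic-probability bound $\tilde\pi_{\bar x_m,m}\ge\tfrac12$ is what makes the denominator at least $\tfrac12\,\E{P_{\bar x_m}/P_{x_m}}$-type quantities. Integrability of $\Lambda$, which gives mutual absolute continuity, ensures every term is finite.

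If this direct estimate proves too delicate, the fallback is to invoke \thmref{thm:hitting_time_drift_overshoot} with $b$ slightly below $0$ and $\eta_0=\log 2$, verifying (A2) through the same computation. Once the bound holds, summing over the crossing time yields $\E{U_m(\tau_{\mr{comm}})}\le\log 2$. Substituting this into the optional-stopping inequality then gives \eqref{eq:tcommbound}.
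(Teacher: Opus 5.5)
\textbf{There is a genuine gap.} Your argument depends on the claim $\E{U_m(\tau_{\mr{comm}}) \mid W=m} \le \log 2$. Equivalently, it depends on your one-step estimate, or on (A2) for $U_m$ with $\eta_0=\log 2$ in the fallback. This claim is false for the true score process.

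Whenever $\tilde\pi_{x_m,m}(t)$ is small, the increment \eqref{eq:wtY} is essentially the channel LLR $\log\frac{P_{x_m}(Y)}{P_{-x_m}(Y)}$. Its mean excess is not controlled by $\log 2$, and in general it is unbounded.

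Concretely, take $M=3$ and a BSC$(p)$.
\begin{itemize}
\item At $t=0$ all posteriors equal $\tfrac{1}{3}$. An empty bin violates \eqref{eq:SED_Yang}, so every SED partition has the form $1+2$.
\item Hence some message $m$ is a singleton, with $U_m(0)=-\log 2$ and $\tilde\pi_{x_m,m}(0)=0$.
\item With probability $1-p$ you then get $U_m(1)=\log\frac{1-p}{2p}$, so $\E{U_m(\tau_{\mr{comm}})\mid W=m}\ge(1-p)\log\frac{1-p}{2p}$. For small $p$ this is far above $\log 2$.
\item Your one-step inequality and (A2) with $\eta_0=\log 2$ both already fail at $t=0$.
\end{itemize}

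No Jensen-type refinement can fix this. In such steps the true drift is $C_1$, not $C$. The large overshoot is paid for by exactly that excess drift, and your optional-stopping inequality discards it because it credits only $C$ per step. Truncating to $\min\{U_m(t),0\}$ does not help either: it subtracts precisely the expected overshoot from the drift, so the truncated process no longer has drift at least $C$.

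\textbf{The missing idea is the paper's surrogate submartingale} $U'_m$ with $U'_m(0)=U_m(0)$. It gives up the excess upward motion in exchange for bounded jumps. Keep $w'_m=w_m$ when $\mr{sgn}(y)=-\mr{sgn}(x_m)$; there $w_m\le 0$ by \eqref{eq:PYXcond2}. When $\mr{sgn}(y)=\mr{sgn}(x_m)$, set
\[
w'_m(t,y)=\log\frac{P_{x_m}(y)}{P_Y(y)}-\frac{P_{-x_m}(y)}{P_{x_m}(y)}\log\frac{P_Y(y)}{\sum_{x}\tilde\pi_{x,m}(t)P_{-x}(y)}.
\]
This increment has three properties.
\begin{enumerate}
\item[(a)] $w'_m\le w_m$ pathwise. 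The term $\log\frac{P_Y}{\sum_x\tilde\pi_{x,m}P_x}$ is nonnegative because $\tilde\pi_{x_m,m}(t)\le\tfrac{1}{2}$ and $P_{x_m}(y)\ge P_{-x_m}(y)$. Shrink it by the factor $\frac{P_{-x_m}}{P_{x_m}}\le 1$, then apply AM--GM in the form $\big(\sum_x\tilde\pi_{x,m}P_x\big)\big(\sum_x\tilde\pi_{x,m}P_{-x}\big)\le P_Y^2$.
\item[(b)] The conditional drift is exactly $C$. The two correction terms cancel after the change of measure $P_{x_m}\to P_{-x_m}$ and the BMS reflection $y\mapsto -y$.
\item[(c)] $w'_m\le(1+r)\log\frac{2}{1+r}\le\log 2$ pathwise, where $r=P_{-x_m}(y)/P_{x_m}(y)$.
\end{enumerate}
Now apply \thmref{thm:hitting_time_drift_overshoot} to $U'_m$ with $S_0=-\log(M-1)$, $b=0$, $\mu=C$ and $\eta_0=\log 2$. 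An increment bounded above by $\log 2$ has mean excess at most $\log 2$, so this bounds $\E{\tau'_{\mr{comm}}\mid W=m}$. Finally, $U'_m\le U_m$ gives $\tau_{\mr{comm}}\le\tau'_{\mr{comm}}$, which yields \eqref{eq:tcommbound}.
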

\begin{IEEEproof}
To tighten the upper bound on $\E{\tau_{\mr{comm}} | W = m}$, as in \cite{yang2022SED, antonini2024}, we construct a surrogate submartingale $\{U'_m(t)\}_{t \geq 0}$ satisfying the properties
\begin{enumerate}
    \item for all $t\geq 0$, $U'_m(t) \leq U_m(t)$ a.s. with $U_m'(0) = U_m(0)$,
    \item $\E{U'_m(t + 1) | \mc{F}_t, W = m} = U'_m(t) + C$, \\
    \item the one-step mean-overshoot constant for $U'_m(t)$ is smaller than that of $U_m(t)$:
    \begin{align}
        &\sup \limits_{\substack{t \geq 1, u \geq 0}}\mathbb{E}\!\left[\Delta U'_m(t) - u \,\middle|\, \mathcal{F}_{t-1},\,W = m, \, \Delta U'_m(t) > u\right] \notag \\
        &< \sup \limits_{\substack{t \geq 1,  u \geq 0}}\mathbb{E}\!\left[\Delta U_m(t) - u \,\middle|\, \mathcal{F}_{t-1},\,W = m, \, \Delta U_m(t) > u\right]
    \end{align}
\end{enumerate}
where $\Delta U'_m(t) = U'_m(t) - U'_m(t-1)$ and $\Delta U_m(t) = U_m(t) - U_m(t-1)$ denote the submartingale increments. On $\{W = m\}$, define the stopping time for $U'_m(t)$ as
\begin{align}
    \tau'_{\mr{comm}} \triangleq \inf\{t \geq 0 \colon U'_m(t) \geq 0\}.
\end{align}
By Property 1, 
\begin{align}
     \E{\tau_{\mr{comm}} | W = m} \leq \E{\tau'_{\mr{comm}} | W = m}. \label{eq:tautauprime}
\end{align}
Moreover, applying \thmref{thm:hitting_time_drift_overshoot} to $\E{\tau'_\mr{comm} | W = m}$ with Properties~2 and 3 yields an overall tighter bound on $\E{\tau_{\mr{comm}} | W = m}$ than directly applying it to $\E{\tau_{\mr{comm}} | W = m}$.

\paragraph{Construction of the submartingale $\{U'_m(t)\}_{t \geq 0}$} The following construction is the generalization of the construction in \cite[eq.~(133)-(135)]{yang2022SED} to arbitrary BMS channels. Set $U'_m(0) = U_m(0) = - \log(M-1)$. For $t \geq 0$, $Y_{t+1} = y$, and $X_{t+1}(m) = x_m \in \{-, +\}$, set
\begin{align}
    U'_{m}(t + 1) = U'_m(t) + w'_m(t, y), \label{eq:Uprime}
\end{align}
where
\begin{align}
    &C(y, x_m) = \log \frac{P_{x_m}(y)}{P_Y(y)}, \\
    &P_Y(y) = \frac{1}{2} P_{+}(y) + \frac{1}{2}  P_{-}(y), \notag \\
    &w'_m(t, y) \label{eq:wprime} \\
    &=  \begin{cases}
        C(y, x_m) - \frac{P_{-x_m}(y)}{P_{x_m}(y)} \log \frac{P_Y(y)}{\sum_{x} \tilde{\pi}_{x, m}(t) P_{-x}(y)} \\
        \qquad \qquad \qquad  \text{if } \mathrm{sgn}(y) = \mathrm{sgn}(x_m), \\[4pt]
        C(y, x_m) + \log  \frac{P_Y(y)}{\sum_{x} \tilde{\pi}_{x, m}(t) P_{x}(y)} \\
        \qquad \qquad \qquad \text{if } \mathrm{sgn}(y) = -\mathrm{sgn}(x_m).
    \end{cases} \notag
\end{align}
Comparing \eqref{eq:wprime} with the true increment $w_m(t, y)$ in \eqref{eq:wtY}, the two definitions coincide whenever $\mathrm{sgn}(y) = -\mathrm{sgn}(x_m)$. In the following, we show that the constructed $U'_m(t)$ satisfies Properties~1 and 2. 

Without loss of generality, assume $x_m = +$, so that $Y_{t+1} \sim P_+$ given $W = m$. Define $\tilde{Y}_{t+1} \sim P_-$; by BMS symmetry, $\tilde{Y}_{t+1} \overset{d}{=} -Y_{t+1}$. From the definition of $w'_m$ in \eqref{eq:wprime}, using $\E{C(Y_{t+1}, +)} = C$,
\begin{align}
    &\E{w'_m(t, Y_{t+1}) | \mc{F}_t, W = m} = C - \mathbb{E}\Bigg[\frac{P_-(Y_{t+1})}{P_+(Y_{t+1})} \notag \\
    &\quad \quad \cdot 1\{Y_{t+1} \geq 0\} \cdot \log \frac{P_Y(Y_{t+1})}{\sum \limits_{x \in \{-, +\}} \tilde{\pi}_{x, m}(t) P_{-x}(Y_{t+1})} \Bigg] \notag \\
    & \quad + \E{1\{Y_{t+1} < 0\} \log \frac{P_Y(Y_{t+1})}{\sum \limits_{x \in \{-, +\}} \tilde{\pi}_{x, m}(t) P_{x}(Y_{t+1})}}. \label{eq:wprime_expand}
\end{align}
The change of measure $\E{\frac{P_-(Y_{t+1})}{P_+(Y_{t+1})} h(Y_{t+1})} = \E{h(\tilde{Y}_{t+1})}$ converts the second term in \eqref{eq:wprime_expand} to an expectation under $P_-$. Moreover, the substitution $Y_{t+1} \overset{d}{=} -\tilde{Y}_{t+1}$ together with the BMS identities $P_Y(-y) = P_Y(y)$ and $P_x(-y) = P_{-x}(y)$ converts the third term to the same expression. Therefore, the second and third terms in \eqref{eq:wprime_expand} cancel. More precisely, they differ only on the event $\{\tilde{Y}_{t+1} = 0\}$, where the logarithmic term equals zero. Hence,
\begin{align}
    \E{U'_m(t + 1) | \mc{F}_t, W = m} = U'_m(t) + C, \label{eq:C}
\end{align}
which is Property~2.

To prove Property 1, note that $w'_m(t, y) = w_m(t, y)$ when $\mathrm{sgn}(y) = - \mathrm{sgn}(x_m)$. When $\mathrm{sgn}(y) = \mathrm{sgn}(x_m)$, we have
\begin{align}
    w_m(t, y) &= C(y, x_m) + \log  \frac{P_Y(y)}{\sum \limits_{x \in \{-, +\}} \tilde{\pi}_{x, m}(t) P_{x}(y)} \\
    &\geq C(y, x_m) + \frac{P_{-x_m}(y)}{P_{x_m}(y)} \log  \frac{P_Y(y)}{\sum \limits_{x \in \{-, +\}} \tilde{\pi}_{x, m}(t) P_{x}(y)} \\
    &\geq C(y, x_m) +  \frac{P_{-x_m}(y)}{P_{x_m}(y)}  \log  \frac{\sum \limits_{x \in \{-, +\}} \tilde{\pi}_{x, m}(t) P_{-x}(y)}{P_Y(y)} \\
    &= w'_m(t, y), \label{eq:wprimem}
\end{align}
where the first inequality follows from the BMS channel assumption in \eqref{eq:PYXcond2}. The second inequality follows from the AM-GM inequality
\begin{align}
    (a p + b(1-p))\,(a (1-p) + b p)
    &\leq \left(\frac{a+b}{2}\right)^2
\end{align}
for $a, b \geq 0$ and $p \in [0, 1]$,
applied with $p \leftarrow \tilde{\pi}_{+, m}(t)$, $a \leftarrow P_{+}(y)$ and $b \leftarrow P_{-}(y)$. The above argument also implies that Property~3 holds.
The equality \eqref{eq:C} and the inequality \eqref{eq:wprimem} together show that the constructed submartingale satisfies the desired three properties.

\paragraph{A channel-independent uniform upper bound on $w'_m(t, y)$}
We next derive a uniform upper bound on $w'_m(t, y)$ that is independent of $t$, $y \in \mc{Y}$, and also the BMS channel $P_{Y|X}$. 

Recall that for $y \in \mc{Y}$ such that $\mr{sgn}(y) = - \mr{sgn}(x_m)$, 
\begin{align}
    w'_m(t, y) = \log \frac{P_{x_m}(y )}{\tilde{\pi}_{-, m}(t) P_{-}(y) + \tilde{\pi}_{+, m}(t) P_{+}(y)}.
\end{align}
Without loss of generality, assume that $x_m = -$ and $y \geq 0$. Then, by BMS monotonicity  \eqref{eq:PYXcond2}, $P_{x_m}(y ) \leq  P_{-x_m}(y)$. Therefore, $w'_m(t, y) \leq 0$ in this case. 

In the case where $\mr{sgn}(y) = \mr{sgn}(x_m)$, from the BMS assumptions in \eqref{eq:PYXcond1}--\eqref{eq:PYXcond2}, we have
\begin{align}
    w'_m(t, y) &= C(y, x_m) + \frac{P_{-x_m}(y)}{P_{x_m}(y)} \log \frac{\sum \limits_{x \in \{-, +\}} \tilde{\pi}_{x, m}(t) P_{-x}(y)}{P_Y(y)} \\
    &\leq C(y, x_m) + \frac{P_{-x_m}(y)}{P_{x_m}(y)} \log \frac{P_{x_m}(y)}{P_Y(y)} \label{eq:wineq}\\
    &= C(y, x_m) \left( 1 + \frac{P_{-x_m}(y)}{P_{x_m}(y)} \right), \label{eq:wprimeupper}
\end{align}
where in \eqref{eq:wineq}, we use the fact that $P_{x_m}(y) \geq p P_{x_m}(y ) + (1-p) P_{-x_m}(y) \geq P_{-x_m}(y)$ for any $p \in [0, 1]$. 

For $a > 0$ and $b \geq 0$, consider the function 
\begin{align}
f(a, b) &\triangleq \log \frac{a}{\frac{a}{2} + \frac{b}{2}}  \left(1 + \frac{b}{a} \right) = (1 + r) \log \frac{2}{1 + r},
\end{align}
where $r \triangleq \frac{b}{a} \geq 0$. Define $\varphi(x) = x \log \frac{2}{x}$. Since $\varphi'(x) = \log \frac{2}{x} - 1 < 0$ for all $x \geq 1$, $\varphi$ is strictly decreasing on $[1, \infty)$, so $f(a, b)$ takes its maximum value $\log 2$ at $r = 0$. Notice that the upper bound in \eqref{eq:wprimeupper} is equal to $f(P_{x_m}(y), P_{-x_m}(y))$. Therefore, for any BMS channel $P_{Y|X}$,
\begin{align}
w'_m(t, y) \leq \log 2. \label{eq:log2rho}
\end{align}

\paragraph{Completion of the proof of \lemref{lem:tcomm}} 
The integrability of $w'_m(t, Y)$ under $Y \sim P_{Y|X = x_m}$ follows from \lemref{lem:mixed_logratio_integrable} (Appendix~\ref{app:integrable}) with $P \leftarrow P_{x_m}$, $Q \leftarrow P_{-x_m}$, and $a = 1 - \tilde{\pi}_{x_m, m}(t) \geq \frac{1}{2}$, verifying the conditions of \thmref{thm:hitting_time_drift_overshoot}. Applying \thmref{thm:hitting_time_drift_overshoot} to $U'_m(t)$ with threshold $b = 0$, initial value $S_0 = -\log (M-1)$, drift $\mu = C$, and mean-overshoot constant $\eta_0 = \log 2$ from \eqref{eq:log2rho} yields \eqref{eq:tcommbound}.
\end{IEEEproof}

\subsection{Bound on the Confirmation Phase} \label{sec:confbound}

When $U_m(t) \geq 0$, message $m$ has posterior $\rho_m(t) \geq \frac{1}{2}$. The SED partition places $m$ alone in one of the bins ($S_{x_m}(t) = \{m\}$ for some $x_m$), so the extrinsic probabilities become $\tilde{\pi}_{x_m, m}(t) = 0$ and $\tilde{\pi}_{-x_m, m}(t) = 1$. The increment \eqref{eq:wtY} simplifies to
\begin{align}
    w_m(t, Y_{t+1}) = \log \frac{P_{x_m}(Y_{t+1})}{P_{-x_m}(Y_{t+1})}. \label{eq:wmconf}
\end{align}
By the BMS symmetry \eqref{eq:PYXcond1}, the increment $w_m(t, Y_{t+1})$ in \eqref{eq:wmconf} has the same distribution as $\Lambda$ regardless of whether $x_m = +$ or $x_m = -$. Therefore, during the confirmation phase, $\{U_m(t)\}$ evolves as an i.i.d.\ random walk with increments distributed as $\Lambda$ and drift $C_1$.

\begin{lemma}[Confirmation phase]\label{lem:confphase}
The total expected time spent in the confirmation region satisfies
\begin{align}
    \E{\tau_{\mathrm{conf}} \mid W = m} \leq \frac{1}{C_1}\left(\log\frac{1-\epsilon'}{\epsilon'} + \eta(P_\Lambda)\right). \label{eq:confbound}
\end{align}
\end{lemma}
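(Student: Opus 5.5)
The plan is to glue together the confirmation-phase increments of all excursions into a single i.i.d.\ random walk, and to show that $\tau_{\mathrm{conf}}$ is dominated by that walk's first-passage time above $a \triangleq \log\frac{1-\epsilon'}{\epsilon'}$. \thmref{thm:lordenBound} then gives \eqref{eq:confbound}.

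\emph{Step 1 (concatenated walk).} Condition on $W=m$. Let $t_1 < t_2 < \cdots$ be the successive times $t$ with $U_m(t) \ge 0$ and $t < \tau_m$; these are exactly the indices $t_k+1 \in \mathcal{A}_{\mathrm{conf}}$. Define $\xi_k \triangleq U_m(t_k+1) - U_m(t_k) = w_m(t_k, Y_{t_k+1})$. By \eqref{eq:wmconf} and BMS symmetry, conditionally on $\mathcal{F}_{t_k}$ the increment $\xi_k$ has law $P_\Lambda$, regardless of $x_m$ and of the past. Each $t_k$ is a stopping time, so the strong Markov property makes $\xi_1,\xi_2,\dots$ i.i.d.\ $\sim P_\Lambda$. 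Beyond the last confirmation step, i.e.\ for $k > \tau_{\mathrm{conf}}$, I extend the sequence with fresh i.i.d.\ copies of $\Lambda$ drawn independently, so that an infinite i.i.d.\ walk $S_k = \sum_{j\le k}\xi_j$ is available. With this extension, $\tau_{\mathrm{conf}}$ is the number of genuine confirmation steps, all of which occur among the first $\tau_{\mathrm{conf}}$ indices of the walk. I expect this step to be the main technical point: one has to argue carefully that the random time change and the independent extension preserve the i.i.d.\ structure. I would do this through the conditional characteristic function of $\xi_k$ given $\mathcal{F}_{t_k}$, which is the same deterministic function for every $k$.

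\emph{Step 2 (pathwise domination).} I claim that $\tau_{\mathrm{conf}} \le \sigma_a \triangleq \inf\{k\ge1 : S_k \ge a\}$. Fix a genuine confirmation index $k$ lying in excursion $i$. Then $U_m(t_k+1)$ equals $U_m(T_{\mathrm{up}}^{(i)})$ plus the partial sum of the $\xi_j$'s belonging to excursion $i$ up to index $k$. Every earlier excursion $i'<i$ started at a level $\ge 0$ and ended below $0$, so its total sum of $\xi$'s is negative. Also $U_m(T_{\mathrm{up}}^{(i)}) \ge 0$. Writing $S_k$ as the sum of these negative completed-excursion totals plus the current partial sum therefore gives $U_m(t_k+1) \ge S_k$. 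Consequently, if $S_k \ge a$ at a genuine confirmation index, then $U_m$ has reached the threshold, so $\tau_m \le t_k+1$ and $k$ is the last confirmation index. Hence either $\sigma_a \ge \tau_{\mathrm{conf}}$ directly, or $\sigma_a$ falls in the extended portion of the walk, which again forces $\sigma_a > \tau_{\mathrm{conf}}$.

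\emph{Step 3 (conclusion).} Since $\mathbb{E}[\Lambda] = C_1 > 0$ and $a>0$ (because $\epsilon' < \tfrac12$), \thmref{thm:lordenBound} yields $\mathbb{E}[\sigma_a] \le \frac{1}{C_1}\big(a + \eta(P_\Lambda)\big)$. Its proof only uses the overshoot bound $\mathbb{E}[S_{\sigma_a}-a] \le \eta(P_\Lambda)$ together with Wald's identity. If $\mathbb{E}|\Lambda|^3$ is infinite, the relevant entries in \eqref{eq:etadef} are $+\infty$ and the minimum is still valid, since the Lorden and mean-excess bounds hold under $\eta(P_\Lambda)<\infty$ alone. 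Combining this with Step 2 gives \eqref{eq:confbound}.
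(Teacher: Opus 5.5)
Your proposal is correct and follows essentially the same route as the paper. The paper also concatenates the confirmation-phase increments into an auxiliary i.i.d.\ $P_\Lambda$ walk. It shows this walk lower-bounds $U_m$ at confirmation times because each completed recovery contributes $U_m(T_{\mathrm{up}}^{(k)}) - U_m(T_{\mathrm{down}}^{(k)}) \geq 0$, which is equivalent to your negative excursion totals, and then applies \thmref{thm:lordenBound}. The differences are minor: you start the walk at $0$ rather than at $U_m(T_{\mathrm{up}}^{(0)}) \geq 0$, and you spell out the stopping-time and independent-extension argument for the i.i.d.\ structure, which the paper only asserts.
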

\begin{IEEEproof}
The time indices in $\mathcal{A}_{\mr{conf}}$ are not necessarily consecutive because the process may temporarily drop below zero and recover. To isolate the confirmation-phase increments, we define i.i.d.\ random variables $\Lambda(1), \Lambda(2), \dots$ distributed as $\Lambda$. Let $t(i)$ be the $i$-th smallest index in $\mc{A}_{\mr{conf}}$. Then $w_m(t(i), Y_{t(i)}) = \Lambda(i)$ for $i = 1, \dots, |\mc{A}_{\mr{conf}}|$.

Define the auxiliary random walk
\begin{align}
    \tilde{U}(0) &= U_m(T_\mr{up}^{(0)}), \quad
    \tilde{U}(i) = \tilde{U}(0) + \sum_{j = 1}^{i} \Lambda(j), \quad i \geq 1,
\end{align}
which tracks only the confirmation-phase increments and ignores recovery-phase interruptions. For each $t(i) \in \mc{A}_{\mr{conf}}$, one can write
\begin{align}
    U_m(t(i)) &= \tilde{U}(i) + \sum_{k = 1}^\ell \big(U_m(T_{\mr{up}}^{(k)}) - U_m(T_{\mr{down}}^{(k)})\big), \label{eq:Umti}
\end{align}
where $\ell$ is the number of recovery phases completed up to $t(i)$. Since each term in the summation in \eqref{eq:Umti} is non-negative, $U_m(t(i)) \geq \tilde{U}(i)$.

Define $\tilde{\tau}_{\mr{conf}} \triangleq \inf\{i \geq 0 \colon \tilde{U}(i) \geq \log \frac{1-\epsilon'}{\epsilon'}\}$. Then $\tau_{\mr{conf}} \leq \tilde{\tau}_{\mr{conf}}$ almost surely. Applying \thmref{thm:lordenBound} to the i.i.d.\ random walk $\tilde{U}$ with drift $C_1$ and starting value $\tilde{U}(0) \geq 0$ yields \eqref{eq:confbound}. 
\end{IEEEproof}

\subsection{Bound on the Recovery Phase} \label{sec:recbound}
When the confirmation-phase random walk falls below $0$, a \emph{recovery event} begins. During recovery, $U_m(t) < 0$, and by \lemref{lem:submar}, the drift is at least $C$. The recovery event ends when $U_m$ returns to the non-negative region.

We first state two lemmas needed for the recovery analysis. The first improves the ruin probability bound from $Z(P_{Y|X})$ to $\chi(P_{Y|X})$; the second establishes the upper bound $\alpha(\epsilon')$ defined in \eqref{eq:Reps_halfsplit} on the exit-probability ratio.

\begin{lemma}[Ruin probability bound for BMS channels]
\label{lem:ruin-bhattacharyya}
Let $P_{Y|X}$ be a BMS channel.
Let $\Lambda(1), \Lambda(2), \ldots$ be i.i.d.\ LLRs
$\Lambda(i) \triangleq \log \frac{P_+(Y_i)}{P_-(Y_i)}$, $Y_i \sim P_+$,
and let $S_n \triangleq \sum_{i=1}^n \Lambda(i)$. Then
\begin{align}
\sum_{n=1}^{\infty} \frac{1}{n}\, \mathbb{P}[S_n < 0]
&\le \frac{1}{2}\log \frac{1}{1 - Z(P_{Y|X})}.
\label{eq:series-bhatt}
\end{align}
Consequently,
\begin{align}
\psi(0) \leq 1 - \sqrt{1 - Z(P_{Y|X})} = \chi(P_{Y|X}).
\label{eq:ruin-bhatt}
\end{align}
\end{lemma}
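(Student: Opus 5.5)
The plan is to bound each term of the Spitzer series in \thmref{thm:spitzer-lundberg-sandwich}(i) by a Chernoff bound evaluated at the symmetric tilt point, and then sum the resulting series in closed form.

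\emph{Step 1: Chernoff bound at $s=\tfrac12$.} For every $s>0$, Markov's inequality gives
\begin{align}
\mathbb{P}[S_n<0] \le \mathbb{E}\big[e^{-sS_n}\big] = \big(\mathbb{E}[e^{-s\Lambda}]\big)^n .
\end{align}
Take $s=\tfrac12$. Then
\begin{align}
\mathbb{E}[e^{-\Lambda/2}] = \int P_+(y)\sqrt{P_-(y)/P_+(y)}\,dy = Z(P_{Y|X}).
\end{align}
This choice is natural: by \remref{rem:tiltedLLR} we have $\mathbb{E}[e^{-\Lambda}]=1$, so the convex map $s\mapsto \mathbb{E}[e^{-s\Lambda}]$ equals $1$ at both $s=0$ and $s=1$. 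By the BMS symmetry it is symmetric about $s=\tfrac12$, hence minimized there. The result is
\begin{align}
\mathbb{P}[S_n<0]\le Z^n .
\end{align}

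\emph{Step 2: sum the series.} Summing this bound alone gives
\begin{align}
\sum_{n\ge1}\frac{1}{n}\,\mathbb{P}[S_n<0] \le \sum_{n\ge1}\frac{Z^n}{n} = \log\frac{1}{1-Z},
\end{align}
which is a factor of $2$ too weak. To recover the factor $\tfrac12$, I would use the refined estimate
\begin{align}
\mathbb{P}[S_n<0] \le \tfrac12\,Z^n .
\end{align}
The argument goes through the symmetric measure. Let $Q$ have density $\sqrt{P_+P_-}/Z$, and let $T_n$ be the corresponding random walk. By \remref{rem:tiltedLLR}, $\Lambda$ under $Q$ is symmetric about zero, because $\bar\Lambda \overset{d}{=} -\Lambda$. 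Changing measure,
\begin{align}
\mathbb{P}[S_n<0] = Z^n\,\mathbb{E}_Q\big[e^{T_n/2}\,1\{T_n<0\}\big] \le Z^n\,\mathbb{Q}[T_n<0] \le \tfrac12 Z^n .
\end{align}
The last step uses the symmetry of $T_n$, which gives $\mathbb{Q}[T_n<0]\le \tfrac12$. Summing with the refined bound yields \eqref{eq:series-bhatt}:
\begin{align}
\sum_{n\ge1}\frac{1}{n}\,\mathbb{P}[S_n<0] \le \frac12\sum_{n\ge1}\frac{Z^n}{n} = \frac12\log\frac{1}{1-Z}.
\end{align}

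\emph{Step 3: conclude.} Substituting into Spitzer's formula \eqref{eq:spitzer-ruin} gives
\begin{align}
\psi(0) \le 1-\exp\!\Big(-\tfrac12\log\tfrac{1}{1-Z}\Big) = 1-\sqrt{1-Z},
\end{align}
which is \eqref{eq:ruin-bhatt}. The hypothesis $\mathbb{P}[\Lambda<0]>0$ needed in \thmref{thm:spitzer-lundberg-sandwich}(i) is harmless. If it fails, then $\psi(0)=0$ and the bound holds trivially.

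The main obstacle I expect is the measure-theoretic care in Step 2. One must verify that $Q$ is a genuine probability measure, that the density relation $dP_+^{\otimes n}/dQ^{\otimes n}=Z^n e^{T_n/2}$ holds on the set where $P_+P_->0$, and that the mass of $P_+$ outside this set lies where $\Lambda=+\infty$. That mass therefore cannot contribute to $\{S_n<0\}$. The symmetry of $T_n$ under $Q$ then follows directly from $P_+(y)=P_-(-y)$.
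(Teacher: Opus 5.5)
Your proof is correct and takes essentially the same route as the paper: both prove $\mathbb{P}[S_n<0]\le \tfrac12 Z^n$, sum the series to get $\tfrac12\log\frac{1}{1-Z}$, and substitute into Spitzer's formula \eqref{eq:spitzer-ruin}. Your change of measure to the geometric-mean law $Q$ is the same inequality in different notation. You use $e^{T_n/2}\le 1$ on $\{T_n<0\}$ and $\mathbb{Q}[T_n<0]\le\tfrac12$ by symmetry; the paper instead bounds $P_+\le\sqrt{P_+P_-}$ on $\{\Lambda<0\}$ and $P_-\le\sqrt{P_+P_-}$ on $\{\Lambda>0\}$ for the $n$-fold product channel and adds the two bounds. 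Your plain Chernoff step at $s=\tfrac12$ is superseded by the refined bound and can be dropped.
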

\begin{IEEEproof}
    See \appref{app:bhatta}.
\end{IEEEproof}

\begin{lemma}[Exit-probability ratio for BMS channels]
\label{lem:exit_ratio}
Let $P_{Y|X}$ be a BMS channel with LLR $\Lambda$ and ruin parameter $\chi(P_{Y|X})$. Fix $A > 0$, and consider the i.i.d.\ LLR random walk with two-sided barriers (in \eqref{eq:tauBA}) at $0$ and $A$. Let $p_{\mr{ruin}}(u, A)$ and $\bar{p}_{\mr{ruin}}(u, A)$ (defined in \eqref{eq:pfbdef}) denote the two-sided ruin probabilities under $\mathbb{P}$ and $\bar{\mathbb{P}}$, respectively, where $\bar{\mathbb{P}}$ denotes the measure of the random walk under LLR $\bar{\Lambda}$. Define the exit-probability ratio $f_A(u) \triangleq \frac{p_{\mr{ruin}}(u, A)}{\bar{p}_{\mr{ruin}}(u, A)}$ for $u \in [0, A)$.
\begin{enumerate}
    \item[(i)] \emph{Universal bound.} For every $u \in [0, A)$,
    \begin{align}
        f_A(u) \leq e^{-u} \leq 1. \label{eq:fA_universal}
    \end{align}
    \item[(ii)] \emph{Half-splitting bound.} 
    \begin{align}
        \sup_{u \in [0, A)} f_A(u) \leq \max\!\left\{\frac{\psi(0)}{1 - e^{-A/2}},\; e^{-A/2}\right\}. \label{eq:fA_halfsplit}
    \end{align}
    Combining with the ruin bound $\psi(0) \leq \chi(P_{Y|X})$ from \lemref{lem:ruin-bhattacharyya} yields
    \begin{align}
        \sup_{u \in [0, A)} f_A(u) \leq \max\!\left\{\frac{\chi(P_{Y|X})}{1 - e^{-A/2}},\; e^{-A/2}\right\}. \label{eq:fA_halfsplit_chi}
    \end{align}
\end{enumerate}
\end{lemma}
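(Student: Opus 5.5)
The plan is to prove both parts by an exponential change of measure between $\mathbb{P}$ and $\bar{\mathbb{P}}$ on the stopped $\sigma$-field. By \remref{rem:tiltedLLR} the adjustment coefficient is $\lambda=1$, and the tilted increment satisfies $\bar{\Lambda}\overset{d}{=}-\Lambda$. Part~(ii) will then follow by splitting the range of $u$ at $A/2$ and using a reflection that turns the $\bar{\mathbb{P}}$-walk back into a $\mathbb{P}$-walk.

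\emph{Part (i).} Let $\tau$ be the two-sided exit time of the walk started at $u$ from $[0,A)$. Under $\bar{\mathbb{P}}$ the increments have mean $-C_1<0$ and are non-degenerate, so $\tau<\infty$ $\bar{\mathbb{P}}$-a.s. The likelihood ratio is
\begin{align}
\frac{d\mathbb{P}}{d\bar{\mathbb{P}}}\Big|_{\mathcal{F}_\tau} = e^{S_\tau - u}.
\end{align}
This holds because each step contributes $e^{\Lambda(i)}$, which is the inverse of $\frac{dP_{\bar\Lambda}}{dP_\Lambda}=e^{-\Lambda}$. Hence
\begin{align}
p_{\mathrm{ruin}}(u,A)=\mathbb{P}[S_\tau<0]=\bar{\mathbb{E}}\big[e^{S_\tau-u}\,1\{S_\tau<0\}\big]\le e^{-u}\,\bar{p}_{\mathrm{ruin}}(u,A).
\end{align}
This gives $f_A(u)\le e^{-u}\le 1$. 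To justify the first equality we need $\mathbb{P}[\tau<\infty]=1$, which holds by positive drift. We should also check that the ruin event lies in $\mathcal{F}_\tau\cap\{\tau<\infty\}$, so that the optional-stopping form of the change of measure applies. This is routine since $\tau$ is a.s. finite under both measures.

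\emph{Part (ii), case $u\ge A/2$.} Part (i) directly gives $f_A(u)\le e^{-A/2}$.

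\emph{Part (ii), case $u<A/2$, numerator.} By \thmref{thm:spitzer-lundberg-sandwich}(ii), $p_{\mathrm{ruin}}(u,A)\le\psi(u)\le\psi(0)$.

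\emph{Part (ii), case $u<A/2$, denominator.} I will reflect the walk. Set $T_n \triangleq A-\bar S_n$. Since $\bar\Lambda\overset{d}{=}-\Lambda$, the process $T_n$ is a $\mathbb{P}$-type LLR walk started at $A-u$. The event that $\bar S$ exits below $0$ becomes the event that $T$ exits above $A$. Consequently $\bar p_{\mathrm{ruin}}(u,A)$ equals the probability that the $\mathbb{P}$-walk from $A-u$ exits upward before going below $0$, which is $1-p_{\mathrm{ruin}}(A-u,A)$. Then \thmref{thm:spitzer-lundberg-sandwich}(ii)--(iii) with $R=1$ gives
\begin{align}
p_{\mathrm{ruin}}(A-u,A)\le\psi(A-u)\le e^{-(A-u)}\le e^{-A/2},
\end{align}
so $\bar p_{\mathrm{ruin}}(u,A)\ge 1-e^{-A/2}$. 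Combining numerator and denominator gives $f_A(u)\le\psi(0)/(1-e^{-A/2})$. Taking the maximum over the two cases yields \eqref{eq:fA_halfsplit}. Substituting $\psi(0)\le\chi(P_{Y|X})$ from \lemref{lem:ruin-bhattacharyya} gives \eqref{eq:fA_halfsplit_chi}.

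\emph{Main obstacle.} The delicate step is the reflection identity, because of the asymmetric boundary conventions: ruin is the strict event $S<0$, while the upper exit is $S\ge A$ (or $S>A$ in \eqref{eq:tauBA}). After reflecting, a strict inequality becomes a non-strict one, and for lattice-valued LLRs ties at exactly $0$ or $A$ have positive probability. I would handle this by noting that only an inequality is needed, not an identity. The reflected upper exit of $T$ contains the event that $T$ reaches $[A,\infty)$ before going strictly below $0$. So $\bar p_{\mathrm{ruin}}(u,A)\ge 1-\mathbb{P}_{A-u}[T \text{ hits } (-\infty,0] \text{ first}]$. I would then check that the Lundberg bound $e^{-(A-u)}$ still controls hitting $(-\infty,0]$. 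This holds because the exponential martingale $e^{-T_n}$ gives $\mathbb{P}_x[\text{hit }(-\infty,0]]\le e^{-x}$ by the same optional-stopping proof as Lundberg's. If that boundary case causes trouble, the fallback is to perturb the levels by an arbitrarily small amount and pass to the limit.
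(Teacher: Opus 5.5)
Your proposal is correct and follows the paper's proof essentially step for step. Part~(i) uses the same likelihood-ratio identity at the exit time. You write it with $d\mathbb{P}/d\bar{\mathbb{P}}=e^{S_\tau-u}$ on $\mathcal{F}_\tau$ and use $e^{S_\tau}<1$ on the ruin event, while the paper writes $\bar p_{\mathrm{ruin}}(u,A)=e^{u}\,p_{\mathrm{ruin}}(u,A)\,\mathbb{E}[e^{D_-}\mid\mathcal{E}_-]$ and uses $D_->0$. Part~(ii) uses the same ingredients: the split at $A/2$, $p_{\mathrm{ruin}}(u,A)\le\psi(0)$, the reflection $\bar p_{\mathrm{ruin}}(u,A)=1-p_{\mathrm{ruin}}(A-u,A)$, and Lundberg's bound with $R=1$.

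Your handling of boundary ties is more careful than the paper's. The paper asserts the reflection as an exact identity; this holds when both exits are strict, as for the $\tau$ used in its proof, but is only an inequality for lattice LLRs under the $S_n\ge A$ convention of \eqref{eq:pfbdef}. One small slip: the reflected event $\{T \text{ enters } (A,\infty) \text{ before } (-\infty,0]\}$ is contained in $\{T \text{ reaches } [A,\infty) \text{ before } (-\infty,0)\}$, not the other way round. This does not affect your argument, because the inequality you actually use, $\bar p_{\mathrm{ruin}}(u,A)\ge 1-\mathbb{P}_{A-u}[T \text{ hits } (-\infty,0] \text{ first}]\ge 1-e^{-(A-u)}$, is correct.
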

\begin{IEEEproof}
    See \appref{app:exit_ratio}.
\end{IEEEproof}

Evaluating \eqref{eq:fA_halfsplit_chi} at $A = \log \frac{1-\epsilon'}{\epsilon'}$ yields the bound on $\alpha(\epsilon')$ stated in \eqref{eq:Reps_halfsplit}.

\begin{lemma}[Recovery phase]\label{lem:recphase}
    The expected total time spent in recovery satisfies
    \begin{align}
        &\E{\tau_{\mr{rec}} \mid W = m} \label{eq:recbound} \\
        & \leq \frac{\min\!\big\{\alpha(\epsilon') \cdot \eta(P_\Lambda),\, \chi(P_{Y|X}) \cdot \eta_0^{-}(P_\Lambda)\big\} + \chi(P_{Y|X}) \log 2}{(1-\chi(P_{Y|X}))\,C}. \notag
    \end{align}
\end{lemma}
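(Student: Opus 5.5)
We decompose $\tau_{\mr{rec}} = \sum_{i\ge1} 1\{N_{\mr{rec}}\ge i\}(T_{\mr{up}}^{(i)} - T_{\mr{down}}^{(i)})$ as in \eqref{eq:taurec} and bound each recovery episode by \thmref{thm:hitting_time_drift_overshoot}. On $\{N_{\mr{rec}} \geq i\}$, the score starts the $i$-th recovery at $U_m(T_{\mr{down}}^{(i)}) = -D^{(i)}$, where $D^{(i)}\ge 0$ is the undershoot below zero. Until it returns to $[0,\infty)$, the score is dominated from below by a surrogate of the type built in the proof of \lemref{lem:tcomm}. This surrogate has drift exactly $C$ and increments bounded above by $\log 2$ by \eqref{eq:log2rho}, so its one-step mean overshoot is at most $\log 2$. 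Applying \thmref{thm:hitting_time_drift_overshoot} conditionally on $\mc{F}_{T_{\mr{down}}^{(i)}}$ with $b=0$, $S_0=-D^{(i)}$, $\mu=C$ and $\eta_0=\log 2$ gives
\begin{align}
\E{T_{\mr{up}}^{(i)} - T_{\mr{down}}^{(i)} \,\middle|\, \mc{F}_{T_{\mr{down}}^{(i)}}, W=m} \le \frac{D^{(i)} + \log 2}{C}.
\end{align}
It therefore remains to bound $\sum_i \E{(D^{(i)}+\log 2)1\{N_{\mr{rec}}\ge i\}}$.

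\emph{Counting recoveries.} Each confirmation excursion begins at some $u = U_m(T_{\mr{up}}^{(i-1)}) \in [0, A)$, where $A=\log\frac{1-\epsilon'}{\epsilon'}$. During the excursion the score is the i.i.d.\ LLR random walk, and the excursion ends either below $0$ (a new recovery) or at or above $A$ (stop). The conditional probability of a fall is therefore $p_{\mr{ruin}}(u,A) \le \psi(u)\le\psi(0)\le\chi(P_{Y|X})$, using \thmref{thm:spitzer-lundberg-sandwich}(ii) and \lemref{lem:ruin-bhattacharyya}. The strong Markov property at the up-crossings then gives $\Prob{N_{\mr{rec}}\ge i\mid W=m}\le\chi^i$, so that $\sum_{i\ge1}\Prob{N_{\mr{rec}}\ge i}\le \chi/(1-\chi)$. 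This already produces the term $\chi\log 2/((1-\chi)C)$.

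\emph{Undershoots.} For the $i$-th fall, consider the two-sided walk started at $u$ with barriers $0$ and $A$, shifted to fit \thmref{thm:two-sided-overshoot-fixed} with $B=u$ and $A-u$. There are two ways to bound $\E{D^{(i)}1\{\text{fall}\}}$.

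\emph{Option 1.} By \eqref{eq:neg-two-sided-me}, $\E{D\mid\mc{E}_-}\le\eta_0^-(P_\Lambda)$. Multiplying by $p_{\mr{ruin}}\le\chi$ gives a contribution of at most $\chi\,\eta_0^-$ per episode.

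\emph{Option 2.} By \eqref{eq:neg-two-sided-fixed} together with \eqref{eq:etatilteq},
\begin{align}
\E{D 1\{\mc{E}_-\}} \le p_{\mr{ruin}}(u,A)\,\frac{\eta(P_\Lambda)}{\bar p_{\mr{ruin}}(u,A)} = f_A(u)\,\eta(P_\Lambda) \le \alpha(\epsilon')\,\eta(P_\Lambda),
\end{align}
where the last step uses \lemref{lem:exit_ratio}(ii).

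In either case, each confirmation excursion that has actually started (there are $1+N_{\mr{rec}}$ of them in distribution-dominated fashion, started with probability at most $\chi^{i-1}$ for the $i$-th) contributes at most $\min\{\alpha\eta,\chi\eta_0^-\}$ to the expected undershoot. Summing the geometric series $\sum_{i\ge1}\chi^{i-1}=1/(1-\chi)$ and dividing by $C$ gives \eqref{eq:recbound}.

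\emph{Main obstacle.} The delicate step is the first one: justifying that during recovery the score admits a surrogate with drift $C$ and overshoot constant $\log 2$, starting from an arbitrary negative level rather than from $-\log(M-1)$. The construction in \lemref{lem:tcomm} is pathwise and depends only on the current extrinsic probabilities, with $\tilde\pi_{x_m,m}\le\frac12$ by \lemref{lem:submar}, so it should restart verbatim at $T_{\mr{down}}^{(i)}$. Care is also needed when conditioning on the past through the strong Markov property, since the excursion start $u$ is random. This is handled by bounding uniformly in $u\in[0,A)$, which is exactly what the supremum in \lemref{lem:exit_ratio} provides.
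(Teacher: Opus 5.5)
Your proposal is correct and follows the paper's proof essentially step for step. The shared ingredients are: the episode decomposition of $\tau_{\mr{rec}}$; the restarted surrogate from \lemref{lem:tcomm}, which gives $(-U_m(T_{\mr{down}}^{(i)})+\log 2)/C$ per episode; the geometric bound $\Prob{N_{\mr{rec}}\ge i-1}\le\chi^{i-1}$; and the two undershoot bounds of \thmref{thm:two-sided-overshoot-fixed}(ii) weighted by $p_{\mr{ruin}}(u)$, with \lemref{lem:exit_ratio}(ii) controlling $p_{\mr{ruin}}/\bar{p}_{\mr{ruin}}$ uniformly in $u$. The only cosmetic difference is that you sum the $\log 2$ term as $\sum_{i\ge 1}\Prob{N_{\mr{rec}}\ge i}$ directly, whereas the paper bounds $p_{\mr{ruin}}(u)\log 2\le\chi\log 2$ inside the conditional expectation; this is the same computation.
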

\begin{IEEEproof}
  Using the decomposition of $\tau_{\mr{rec}}$ in \eqref{eq:taurec}, we write
    \begin{align}
        &\E{\tau_{\mr{rec}} | W = m} \notag \\ 
        &= \sum_{i = 1}^{\infty} \E{1\{N_{\mr{rec}} \geq i\} (T_{\mr{up}}^{(i)} - T_{\mr{down}}^{(i)})} \label{eq:taurec0} \\
        &= \sum_{i = 1}^{\infty} \E{1\{N_{\mr{rec}} \geq i\} 1\{N_{\mr{rec}} \geq i-1\}(T_{\mr{up}}^{(i)} - T_{\mr{down}}^{(i)})} \label{eq:taurec1} \\
        &= \sum_{i = 1}^{\infty} \Prob{ N_{\mr{rec}} \geq i-1} \notag \\
        &\quad \cdot \mathbb{E}\Bigg[\Prob{ N_{\mr{rec}} \geq i \middle | N_{\mr{rec}} \geq i-1, U_m(T_{\mr{up}}^{(i-1)})} \notag \\
        &\quad \cdot \E{ T_{\mr{up}}^{(i)} - T_{\mr{down}}^{(i)} \middle | N_{\mr{rec}} \geq i, U_m(T_{\mr{up}}^{(i-1)})}\Bigg].\label{eq:taureclong}
    \end{align}

    We have
    \begin{align}
        &\Prob{N_{\mr{rec}} \geq i-1} \notag \\
        &= \prod_{j = 1}^{i-1} \Prob{N_{\mr{rec}} \geq j \middle | N_{\mr{rec}} \geq j-1} \\
        &= \prod_{j = 1}^{i-1} \E{\Prob{N_{\mr{rec}} \geq j \middle | N_{\mr{rec}} \geq j-1, U_m(T_{\mr{up}}^{(j-1)})}}. \label{eq:Nreclast}
    \end{align}

    Now fix $j \in \mathbb{N}$ and $u \in [0, \infty)$. Given $N_{\mr{rec}} \geq j-1$ and $U_m(T_{\mr{up}}^{(j-1)}) = u$, the probability of another fallback into the recovery phase is exactly a two-sided ruin probability for the confirmation phase random walk, i.e.,
    \begin{align}
         \tilde{U}(i, u) &\triangleq u + \sum_{\ell = 1}^{i} \Lambda(\ell)
    \end{align}
    with the associated stopping time
    \begin{align}
         \tau_{\mr{two-sided}}(u) &\triangleq \inf\left\{i \geq 0 \colon \tilde{U}(i, u) \notin \left[0, \log \frac{1-\epsilon'}{\epsilon'} \right)\right\}. 
    \end{align}
    Define 
    \begin{align}
        p_{\mr{ruin}}(u) &\triangleq \Prob{N_{\mr{rec}} \geq j \middle | N_{\mr{rec}} \geq j-1, U_m(T_{\mr{up}}^{(j-1)}) = u} \label{eq:pruindef}\\
        &=\Prob{\tilde{U}(\tau_{\mr{two-sided}}(u), u) < 0}.
    \end{align}
    Similarly, let $\bar{p}_{\mr{ruin}}(u)$ denote the two-sided ruin probability under the tilted measure $\bar{\mathbb{P}}$.
    
    By \lemref{lem:ruin-bhattacharyya}, for every $u \geq 0$,
    \begin{align}
        p_{\mr{ruin}}(u) \leq \chi(P_{Y|X}). \label{eq:pruinchi}
    \end{align}
Combining with the telescoping product for $\Prob{N_{\mr{rec}} \geq i-1}$ yields
\begin{align}
    \Prob{N_{\mr{rec}} \geq i-1}  \leq \big(\chi(P_{Y|X})\big)^{i-1}. \label{eq:chirec}
\end{align}

    To bound the conditional recovery duration, we reuse the communication-phase argument in \secref{sec:commbound}, now with starting value $U_m(T_{\mr{down}}^{(i)}) < 0$ instead of $-\log (M-1)$, and the threshold is 0. Applying \lemref{lem:tcomm} yields the upper bound
    \begin{align}
        &\E{ T_{\mr{up}}^{(i)} - T_{\mr{down}}^{(i)} \middle | N_{\mr{rec}} \geq i, U_m(T_{\mr{up}}^{(i-1)}), U_m(T_{\mr{down}}^{(i)})}  \\
        &\quad \leq \frac{-U_m(T_{\mr{down}}^{(i)}) + \log 2}{C}. \label{eq:durationbound}
    \end{align}
    We bound $-U_m(T_{\mr{down}}^{(i)}) \geq 0$, the negative-side overshoot of the confirmation-phase random walk at the moment of fallback, using two approaches from \thmref{thm:two-sided-overshoot-fixed}(ii).

    \emph{Bound~1 (via the tilted measure).} Applying \eqref{eq:neg-two-sided-fixed} with $X_1 \leftarrow \Lambda$ and using the identity $\eta(P_{-\bar{\Lambda}}) = \eta(P_\Lambda)$ from \remref{rem:tiltedLLR}, we obtain
    \begin{align}
        &\E{-U_m(T_{\mr{down}}^{(i)}) \middle | N_{\mr{rec}} \geq i, U_m(T_{\mr{up}}^{(i-1)}) = u}  \leq \frac{\eta({P_{\Lambda}})}{\bar{p}_{\mr{ruin}}(u)}. \label{eq:2overshoot_tilt}
    \end{align}

    \emph{Bound~2 (via the negative mean-excess).} Applying \eqref{eq:neg-two-sided-me} with $X_1 \leftarrow \Lambda$ yields
    \begin{align}
        &\E{-U_m(T_{\mr{down}}^{(i)}) \middle | N_{\mr{rec}} \geq i, u} \leq \eta_0^{-}(P_\Lambda). \label{eq:2overshoot_me}
    \end{align}

    We now bound the product $p_{\mr{ruin}}(u) \cdot \E{-U_m(T_{\mr{down}}^{(i)}) \mid \cdots}$ using each overshoot bound.

    \emph{From Bound~1.} Multiplying \eqref{eq:2overshoot_tilt} by $p_{\mr{ruin}}(u)$ gives
    \begin{align}
        &p_{\mr{ruin}}(u) \cdot \frac{\eta(P_\Lambda)}{\bar{p}_{\mr{ruin}}(u)} = \frac{p_{\mr{ruin}}(u)}{\bar{p}_{\mr{ruin}}(u)} \cdot \eta(P_\Lambda)  \leq \alpha(\epsilon') \cdot \eta(P_\Lambda), \label{eq:product_ratio}
    \end{align}
    where the last step uses \lemref{lem:exit_ratio}(ii).

    \emph{From Bound~2.} Multiplying \eqref{eq:2overshoot_me} by $p_{\mr{ruin}}(u) \leq \chi(P_{Y|X})$ from \eqref{eq:pruinchi} gives
    \begin{align}
        p_{\mr{ruin}}(u) \cdot \eta_0^{-}(P_\Lambda) \leq \chi(P_{Y|X}) \cdot \eta_0^{-}(P_\Lambda). \label{eq:product_me}
    \end{align}

    Combining \eqref{eq:product_ratio} and \eqref{eq:product_me} yields
    \begin{align}
        &p_{\mr{ruin}}(u) \cdot \E{-U_m(T_{\mr{down}}^{(i)}) \middle| N_{\mr{rec}} \geq i, u } \notag \\
        &\quad \leq \min\!\big\{\alpha(\epsilon') \cdot \eta(P_\Lambda),\; \chi(P_{Y|X}) \cdot \eta_0^{-}(P_\Lambda)\big\}. \label{eq:min_overshoot_product}
    \end{align}
    Applying the ruin bound \eqref{eq:pruinchi} to the $\log 2$ coefficient in \eqref{eq:durationbound} gives
    \begin{align}
        p_{\mr{ruin}}(u) \log 2 \leq \chi(P_{Y|X}) \log 2.  \label{eq:chilog2}
    \end{align}
    Substituting \eqref{eq:min_overshoot_product} and \eqref{eq:chilog2} into \eqref{eq:taureclong} and summing the geometric series with \eqref{eq:chirec} yields \eqref{eq:recbound}, completing the proof.
    \end{IEEEproof}
    
\subsection{Proof of \thmref{thm:main}} \label{sec:proofmain}

\paragraph{Establishing the bound $N_0(M, \epsilon')$}
Combining the error probability bound from \eqref{eq:errpb} with the three-phase decomposition of the expected stopping time:

From the decomposition \eqref{eq:taudecomp} and the bound \eqref{eq:Etaubound0}:
\begin{align}
    \E{\tau} &\leq \max_{m \in [M]} \Big(\E{\tau_{\mr{comm}} \mid W = m} \notag \\
    &\quad + \E{\tau_{\mr{conf}} \mid W = m}  + \E{\tau_{\mr{rec}} \mid W = m}\Big). \label{eq:totaldecomp}
\end{align}
Substituting the bounds from Lemmas~\ref{lem:commphase}, \ref{lem:confphase}, and \ref{lem:recphase}, we get
\begin{align}
    &\E{\tau} \notag \\
    &\leq \frac{\log(M-1) + \log 2}{C} + \frac{\log\frac{1-\epsilon'}{\epsilon'} + \eta(P_\Lambda)}{C_1} \notag \\
    & + \frac{\min\!\big\{\alpha(\epsilon') \cdot \eta(P_\Lambda),\, \chi(P_{Y|X}) \cdot \eta_0^{-}(P_\Lambda)\big\} + \chi(P_{Y|X}) \log 2}{(1-\chi(P_{Y|X}))\,C} \notag \\
    &= N_0(M, \epsilon'). \label{eq:finalbound}
\end{align}
Since the SED scheme is an $(N_0(M, \epsilon'), M, \epsilon')$-VLF code with $\Prob{\hat{W} \neq W} \leq \epsilon'$, it follows that $N^*(M, \epsilon') \leq N_0(M, \epsilon')$ for every $\epsilon' \in (0, \frac{1}{2})$.

\paragraph{Stop-at-time-zero improvement}
Fix $\epsilon_0 \in [0, \epsilon)$ and set $\epsilon' = \frac{\epsilon - \epsilon_0}{1 - \epsilon_0}$, so that $\epsilon' \in (0, \epsilon]$. Given the $(N_0(M, \epsilon'), M, \epsilon')$-VLF code constructed above, we define a new code using the common randomness $V$ from Definition~\ref{def:VLF} with $\mc{V} = \{0, 1\}$ and $P_V(1) = \epsilon_0$. If $V = 1$, the decoder outputs a fixed message at time $\tau = 0$ without any transmission. If $V = 0$, the SED code with target error probability $\epsilon'$ is executed. The new code has average decoding time
\begin{align}
    \E{\tau_{\mr{new}}} = (1-\epsilon_0) \E{\tau} \leq (1-\epsilon_0) N_0(M, \epsilon'), \label{eq:Etaunew}
\end{align}
and average error probability
\begin{align}
    \Prob{\hat{W} \neq W} &\leq \epsilon_0 \cdot \frac{M-1}{M} + (1-\epsilon_0) \epsilon' \\
    &\leq \epsilon_0 + (1-\epsilon_0) \epsilon' = \epsilon, \label{eq:errnew}
\end{align}
where the last equality follows from the choice of $\epsilon'$. Since \eqref{eq:Etaunew}--\eqref{eq:errnew} hold for every $\epsilon_0 \in [0, \epsilon)$, minimizing over $\epsilon_0$ yields \eqref{eq:Nach}.\hfill $\square$

\section{Low-Complexity Encoder via Lattice-LLR Grouping and SED Repair} \label{sec:algorithm}

The SED partitioning rule in \eqref{eq:SED_Yang} and \eqref{eq:Xt} requires partitioning $[M]$ into two bins $S_{-}(t-1)$ and $S_+(t-1)$ at every time step. A direct implementation that sorts all $M$ posteriors and searches for a valid split has complexity $O(M \log M)$ per step \cite{naghshvar2015EJS, yang2022SED}, which scales exponentially in the number of bits transmitted, $K = \log_2 M$, and is prohibitive for message sizes of practical interest.

Antonini \emph{et al.}~\cite{antonini2024} address this for the BSC by introducing the TOP algorithm, which produces a partition satisfying the SEAD condition \eqref{eq:SED_antonini1}--\eqref{eq:SED_antonini2} at complexity $O(\log^2 M)$ by grouping messages by Hamming type and partitioning at the group level. However, as noted in \secref{sec:YangSED}, the SEAD condition is strictly weaker than the SED condition \eqref{eq:SED_Yang}; in particular, SEAD does not directly yield the uniform conditional drift bound used in \lemref{lem:submar}.

In this section, we describe a low-complexity encoder that satisfies the \emph{exact} SED condition \eqref{eq:SED_Yang} at every time step, while retaining the grouped-implementation efficiency of Antonini \emph{et al.}'s approach. The algorithm proceeds in two stages: (1)~a TOP initialization that produces an SEAD-valid partition at low cost, followed by (2)~a batched repair procedure that transfers messages from minimum-posterior groups until the two-sided SED constraint is met. To enable grouping for general BMS channels (beyond the BSC), we assume throughout this section that the LLR $\Lambda$ takes values on a lattice; this assumption is enforced in practice via output quantization, whose design and performance analysis are deferred to \secref{sec:quantization}.

\subsection{Lattice Assumption and Grouped Representation} \label{sec:lattice}
\begin{definition}[Lattice LLR]\label{def:lattice}
We say that the BMS channel $P_{Y|X}$ has a \emph{lattice LLR} with step $\delta > 0$, interior half-range $L \in \mathbb{N}$, and tail label $L_{\mr{tail}} \in \mathbb{N}$ satisfying $L_{\mr{tail}} \geq L$, if the LLR $\Lambda = \log \frac{P_+(Y)}{P_-(Y)}$ satisfies
\begin{align}
    \Lambda \in \{0, \pm\delta, \dots, \pm(L{-}1)\delta, \pm L_{\mr{tail}}\delta\} \label{eq:lattice}
\end{align}
almost surely under $Y \sim P_+$. The number of distinct LLR values is $B = 2L + 1$.

We denote the set of lattice indices by 
\begin{align}
   \mc{K}_B \triangleq \{0, \pm 1, \dots, \pm(L{-}1), \pm L_{\mr{tail}}\}. \label{eq:kb}
\end{align}

For discrete-output channels such as the BSC and BEC, the lattice is contiguous ($L_{\mr{tail}} = L$) and determined by the channel parameters. For continuous-output channels, the lattice structure and the gap $L_{\mr{tail}} - L \geq 0$ are imposed through output quantization; see \secref{sec:quantization}. 
\end{definition}

Under the lattice assumption, denote the observed lattice index at time $t$ by $n_t \in \mc{K}_B$, so that $\Lambda(Y_t) = n_t \delta$. By Bayes' rule, the log-posterior of message $m$ satisfies
\begin{align}
    \log \rho_m(t) &= \log \rho_m(t{-}1) + \log P_{X_t(m)}(Y_t) - \log Z_t, \label{eq:logposteriorupdate}
\end{align}
where $Z_t = \pi_{-}(t{-}1) P_{-}(Y_t) + \pi_{+}(t{-}1) P_+(Y_t)$ is the normalizing constant. Since
\begin{align}
    \log P_{X_t(m)}(Y_t) = \log P_-(Y_t) + 1\{m \in S_+(t-1)\} \cdot \Lambda(Y_t),
\end{align}
unrolling \eqref{eq:logposteriorupdate} from the uniform initialization $\rho_m(0) = 1/M$ gives
\begin{align}
    \log \rho_m(t) = S(t) + \delta \cdot k_m(t), \label{eq:logpost_lattice}
\end{align}
where
\begin{align}
    S(t) &\triangleq - \log M + \sum_{s=1}^t \big(\log P_-(Y_s) - \log Z_s\big) \label{eq:commonshift}
\end{align}
is a common shift independent of $m$, and
\begin{align}
    k_m(t) \triangleq \sum_{s=1}^{t} 1\{m \in S_+(s-1)\} \cdot n_s \in \mathbb{Z} \label{eq:label}
\end{align}
is the \emph{cumulative LLR label} of message $m$ at time $t$, with $k_m(0) = 0$ for all $m \in [M]$.

The representation \eqref{eq:logpost_lattice} has two immediate consequences.

\begin{lemma}[Grouped posteriors] \label{lem:grouped}
Under the lattice assumption in \defnref{def:lattice}, messages with the same cumulative label $k_m(t) = k_j(t)$ have the same posterior $\rho_m(t) = \rho_j(t)$. Consequently, the $M$ messages can be partitioned into groups indexed by distinct label values $k \in \mathbb{Z}$, and both the partition and the posterior update can be performed at the group level.
\end{lemma}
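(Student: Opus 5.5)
The plan is to read the claim directly off the representation \eqref{eq:logpost_lattice}. This representation rests on the Bayes update \eqref{eq:logposteriorupdate}, which holds at each step. The normalizer $Z_t$ depends only on $Y_t$ and the bin totals, not on $m$. Under the lattice assumption of \defnref{def:lattice}, $\log P_{X_t(m)}(Y_t) = \log P_-(Y_t) + 1\{m \in S_+(t-1)\}\, n_t \delta$. Inducting on $t$ from $\rho_m(0)=1/M$ then gives $\log \rho_m(t) = S(t) + \delta k_m(t)$, with $S(t)$ common to all messages. So if $k_m(t) = k_j(t)$, then $\rho_m(t) = \rho_j(t)$ immediately. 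Zero-probability outputs cause no trouble: $P_+$ and $P_-$ are mutually absolutely continuous on the lattice alphabet, since $\Lambda$ is finite a.s., so all the logarithms are finite.

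For the second assertion, I would define the groups $G_k(t) \triangleq \{m : k_m(t) = k\}$. By the first part, each group is a set of messages with a common posterior $\rho^{(k)}(t) = e^{S(t)+\delta k}$. A state of the encoder can therefore be summarized by the finite list of pairs $(k, |G_k(t)|)$ together with $S(t)$, or equivalently by normalized group posteriors.

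Next I would check that partitioning and updating can be carried out at this level. Suppose a partition assigns to each group $G_k(t)$ some number $a_k$ of its members to $S_+(t)$ and the rest to $S_-(t)$. Which particular members are chosen is irrelevant, because they are exchangeable: they have identical posteriors. Then $\pi_\pm(t)$ and the SED quantities in \eqref{eq:SED_Yang} depend only on the $(a_k)$ and the group posteriors. For the minima, $\min_{j\in S_\pm}\rho_j$ is the posterior of the smallest label that appears with a nonzero count in that bin.

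After observing $n_{t+1}$, the labels update as $k \mapsto k + n_{t+1}$ for the $a_k$ members sent to $S_+$, and stay $k$ for the rest. The new groups are therefore unions of these split pieces, obtained by merging counts that land on the same label. The common shift updates via $Z_{t+1}$, which is computed from $\pi_\pm(t)$. All of this is defined per group, so the recursion closes at the group level.

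The only mildly delicate point is fixing a deterministic convention for which members of a group go to which bin, for instance by lexicographic order of message index within the group. This lets the encoder and decoder agree, and lets the encoder know $X_t(W)$. It is a bookkeeping matter rather than a real obstacle. The substantive content is the induction establishing \eqref{eq:logpost_lattice}, and that is routine.
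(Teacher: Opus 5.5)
Your proposal is correct and follows the same route as the paper. The paper's proof is the one-line observation that the common shift $S(t)$ in \eqref{eq:logpost_lattice} cancels in any log-posterior difference; your induction and your group-level bookkeeping (per-group counts $a_k$, the label shift by $n_{t+1}$, and a deterministic within-group ordering) spell out the ``consequently'' part, which the paper implements later through the prefix-split interval lists of Section~\ref{sec:intervals}.
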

\begin{IEEEproof}
Immediate from \eqref{eq:logpost_lattice}: the common shift $S(t)$ cancels in any log-posterior difference.
\end{IEEEproof}

For each active label $k$ at time $t$, let $N_k(t)$ denote the number of messages with label $k$, and define the per-message posterior and group mass as
\begin{align}
    r_k(t) &\triangleq \frac{e^{\delta k}}{\sum_{g} N_g(t)\, e^{\delta g}}, \label{eq:rk} \\
    q_k(t) &\triangleq N_k(t) \cdot r_k(t). \label{eq:qk}
\end{align}
The partition masses decompose as
\begin{align}
    \pi_{-}(t{-}1) &= \sum_k n_k^{-}(t{-}1)\, r_k(t{-}1), \\
    \pi_+(t{-}1) &= \sum_k n_k^{+}(t{-}1)\, r_k(t{-}1),
\end{align}
where $n_k^{-}(t{-}1)$ and $n_k^{+}(t{-}1)$ are the numbers of group-$k$ messages assigned to $S_-(t{-}1)$ and $S_+(t{-}1)$, respectively, with $n_k^{-}(t{-}1) + n_k^{+}(t{-}1) = N_k(t{-}1)$. Let $G_t$ denote the number of distinct active labels at time $t$.
\begin{lemma}[Group count] \label{lem:groupcount}
$G_t \leq 1 + L_{\mr{tail}}\,t$.
\end{lemma}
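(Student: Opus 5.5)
Since $G_t$ counts distinct active labels at time $t$, it suffices to show that every cumulative label $k_m(t)$ lies in an integer interval of at most $1 + L_{\mr{tail}}\,t$ points. Equivalently, we show that the set of attainable label values has a range of width at most $L_{\mr{tail}}\,t$.

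First, consider the definition \eqref{eq:label}. We have $k_m(t) = \sum_{s=1}^t 1\{m \in S_+(s-1)\}\, n_s$, where the observed indices $n_s \in \mc{K}_B$ are common to all messages. The only message-dependent quantity is the indicator of whether $m$ was placed in $S_+(s-1)$. Hence, for fixed outputs $n_1,\dots,n_t$, each label is a subset sum $\sum_{s \in \mc{T}_m} n_s$ over some $\mc{T}_m \subseteq [t]$. Every such subset sum lies in the interval
\begin{align}
    \Big[\sum_{s \le t} \min\{n_s,0\},\; \sum_{s \le t} \max\{n_s,0\}\Big].
\end{align}
The width of this interval is $\sum_{s=1}^t |n_s| \le L_{\mr{tail}}\, t$, because $|n| \le L_{\mr{tail}}$ for every $n \in \mc{K}_B$ (recall $L_{\mr{tail}} \ge L > L-1$).

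Since labels are integers, the number of distinct values in an integer interval of width $W$ is at most $W+1$. This gives $G_t \le 1 + L_{\mr{tail}}\, t$.

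As a cross-check, induction on $t$ gives the same bound. At $t=0$ all labels equal $0$, so $G_0 = 1$. At step $t$, each label either stays put (message in $S_-$) or shifts by $n_t$ (message in $S_+$). The new label set is therefore contained in $\mc{L}\cup(\mc{L}+n_t)$, and if $\mc{L}$ lies in an interval of width $W$, this union lies in an interval of width $W+|n_t|$.

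The one point needing care is that the count must use the range of the interval, not the size of $\mc{L}\cup(\mc{L}+n_t)$. Counting the union directly could double $G_t$ at each step. Using the interval range, together with integrality of the labels and the bound $|n_t| \le L_{\mr{tail}}$, avoids this doubling.
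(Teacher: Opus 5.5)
Your proof is correct and essentially matches the paper's, which uses the induction you give as a cross-check: the label set starts as $\{0\}$, and its integer range grows by at most $|n_t| \le L_{\mr{tail}}$ per step. Your subset-sum interval $\bigl[\sum_s \min\{n_s,0\}, \sum_s \max\{n_s,0\}\bigr]$ is just the unrolled form of that range argument, and your point about counting integers in the range, not the size of $\mathcal{L}\cup(\mathcal{L}+n_t)$, is what the paper's one-line proof relies on implicitly.
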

\begin{IEEEproof}
At $t = 0$, all messages share the label $k_m(0) = 0$, so $G_0 = 1$. At each time step, the label update \eqref{eq:label} shifts labels of messages in $S_+(t{-}1)$ by $n_t$ with $|n_t| \leq L_{\mr{tail}}$, while labels in $S_-(t{-}1)$ remain unchanged. This increases the range of active integer labels by at most $|n_t| \leq L_{\mr{tail}}$, giving $G_t \leq 1 + L_{\mr{tail}}\,t$.
\end{IEEEproof}
At a typical decoding time $\tau = O(K)$, the group count is $G_{\tau} = O(L_{\mr{tail}} K)$, which is polynomial in $K$ since $L_{\mr{tail}} = O(B)$.

\subsection{TOP Initialization} \label{sec:TOP}
The TOP rule, introduced by Antonini \emph{et al.}~\cite{antonini2024} for the BSC, is a single-threshold partitioning method that operates on the sorted list of group posteriors. We state the rule in the notation of the present paper, adapted for general lattice-LLR BMS channels.

Sort the active labels in decreasing order: $k^{(1)} > k^{(2)} > \dots > k^{(G_{t-1})}$, which by \eqref{eq:rk} corresponds to decreasing per-message posteriors. The TOP rule builds up the bin $S_-(t{-}1)$ by adding groups in this order until the cumulative mass first reaches or exceeds $\frac{1}{2}$.

\begin{definition}[Crossing group and overshoot] \label{def:crossing}
The \emph{crossing group} $k^{\star}$ is the label of the first group for which the cumulative mass reaches $\frac{1}{2}$. Let $P_{\mr{pre}}$ denote the cumulative mass of all groups added before $k^{\star}$, and let $a^{\star} = \lceil (\frac{1}{2} - P_{\mr{pre}}) / r_{k^{\star}} \rceil$ be the minimum number of messages from group $k^{\star}$ needed to reach $\frac{1}{2}$. The \emph{overshoot} is $d = P_{\mr{pre}} + a^{\star} r_{k^{\star}} - \frac{1}{2}$, which satisfies $0 \leq d \leq r_{k^{\star}}$.
\end{definition}

\begin{definition}[TOP partition] \label{def:TOP}
    If the maximum per-message posterior satisfies $r_{k^{(1)}} \geq \frac{1}{2}$, the TOP rule places a single message from the top group into $S_-(t{-}1)$ (singleton safeguard). Otherwise, TOP assigns $a$ messages from the crossing group $k^{\star}$ to $S_-(t{-}1)$, where
    \begin{align}
        a = \begin{cases}
            a^{\star}, &\text{if } d \leq r_{k^{\star}} / 2, \\
            a^{\star} - 1, &\text{if } d > r_{k^{\star}} / 2,
        \end{cases} \label{eq:TOPchoice}
    \end{align}
    with all groups preceding $k^{\star}$ fully assigned to $S_-(t{-}1)$ and all remaining messages assigned to $S_+(t{-}1)$.
\end{definition}

The TOP partition is \emph{contiguous} with respect to the posterior-sorted order: $S_-(t{-}1)$ consists of the messages with the largest posteriors. As shown in \cite[Th.~2]{antonini2024}, TOP satisfies the SEAD condition \eqref{eq:SED_antonini1}. However, it does \emph{not} guarantee the SED condition \eqref{eq:SED_Yang} in general \cite[Sec.~IV-A]{antonini2024}. The failure occurs in the back-off case ($a = a^{\star} - 1$) of \eqref{eq:TOPchoice}: the resulting imbalance
\begin{align}
    \Delta(t{-}1) \triangleq \pi_-(t{-}1) - \pi_+(t{-}1)
\end{align}
can be negative with $|\Delta(t{-}1)|$ exceeding $\min_{j \in S_+(t-1)} \rho_j(t{-}1)$, which violates the SED lower bound in \eqref{eq:SED_Yang}. Enforcing SED may require \emph{non-contiguous} partitions with respect to the posterior-sorted order.

\subsection{SED Repair via Minimum-Mass Moves} \label{sec:repair}
We restore the SED condition by transferring messages from the minimum-posterior group on the violating side. The overall strategy is:
\begin{enumerate}
    \item \textit{TOP initialization:} Run the TOP rule (\defnref{def:TOP}) to obtain an initial partition satisfying SEAD.
    \item \textit{SED repair:} If the SED condition \eqref{eq:SED_Yang} is violated, repair the partition by moving messages in batches until SED holds.
\end{enumerate}
Since an SEAD-valid partition with non-negative imbalance $\Delta(t{-}1) \geq 0$ automatically satisfies SED (see \eqref{eq:SED_Yang} and \eqref{eq:SED_antonini1}), repair is triggered only when $\Delta(t{-}1) < 0$ and the SED lower bound $\Delta(t{-}1) \geq -\min_{j \in S_+(t-1)} \rho_j(t{-}1)$ fails. For completeness, the algorithm also handles the symmetric upper-bound violation.

Moving a single message with posterior $\rho$ from $S_+(t{-}1)$ to $S_-(t{-}1)$ increases $\Delta(t{-}1)$ by $2\rho$. In the grouped implementation, we operate on entire minimum-posterior groups in batches. Suppose the current violation is $\Delta(t{-}1) < -\rho_{\min}^{+}$, where $\rho_{\min}^{+} \triangleq \min_{j \in S_+(t-1)} \rho_j(t{-}1)$. Let $k_{\min}$ be the label achieving $\rho_{\min}^{+}$, let $r = r_{k_{\min}}$, and let $n$ be the number of group-$k_{\min}$ messages currently in $S_+(t{-}1)$. We transfer
\begin{align}
    a = \min\left\{n, \left\lceil \frac{-r - \Delta(t{-}1)}{2r} \right\rceil \right\} \label{eq:repaircount}
\end{align}
of these messages to $S_-(t{-}1)$. If $a = n$ (the batch is \emph{saturating}), the entire minimum group moves to $S_-(t{-}1)$, a new minimum group emerges in $S_+(t{-}1)$, and the procedure repeats. If $a < n$, the SED lower bound is restored and the loop terminates. The symmetric repair for the upper-bound violation $\Delta(t{-}1) > \min_{j \in S_-(t-1)} \rho_j(t{-}1)$ moves messages from $S_-(t{-}1)$ to $S_+(t{-}1)$ analogously.

\begin{lemma}[Repair termination and SED enforcement] \label{lem:repair}
    The batched repair procedure terminates after at most $G_{t-1}$ iterations and produces a partition satisfying the SED condition \eqref{eq:SED_Yang}.
\end{lemma}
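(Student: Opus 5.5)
We argue the lower-bound repair case ($\Delta(t{-}1) < -\rho_{\min}^{+}$); the upper-bound case is symmetric with the roles of $S_-$ and $S_+$ exchanged. The plan has three steps. First, we identify the quantity each batch changes and by how much. Second, we show that every saturating batch exhausts one group permanently, which gives the iteration bound. Third, we show that the terminating non-saturating batch yields a $\Delta$ inside the SED window \eqref{eq:SED_Yang}. Throughout we use \lemref{lem:grouped}, so all messages of one group share the posterior $r_k$. Moving $a$ messages of group $k_{\min}$ from $S_+$ to $S_-$ therefore changes $\Delta$ by exactly $2ar$.

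\emph{Termination.} Each iteration either moves all $n$ remaining $S_+$-members of the current minimum group $k_{\min}$, or it stops. After a saturating move, group $k_{\min}$ has no members left in $S_+$. The next minimum is a group with strictly larger per-message posterior, since labels are distinct and $r_k$ is increasing in $k$. Hence no group is selected twice, and the loop runs at most $G_{t-1}$ times.

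We must also rule out $S_+$ becoming empty. If all of $S_+$ were moved, then $\Delta = 1 > 0$. But each saturating move only happens when the target value $-r-\Delta$ is at least $2nr$. This implies that $\Delta$ after the move is at most $-r < 0$, so a saturating move keeps $\Delta$ negative and $S_+$ nonempty.

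\emph{SED validity at termination.} Let $\Delta$ be the imbalance before the final, non-saturating batch. By choice of $a = \lceil(-r-\Delta)/(2r)\rceil$, we have $2ar \ge -r-\Delta$, so the new imbalance satisfies $\Delta' = \Delta + 2ar \ge -r$. Since $a<n$, group $k_{\min}$ still has members in $S_+$, so $\min_{S_+}\rho = r$ and the lower SED bound holds. For the upper bound, $2(a-1)r < -r-\Delta$ gives $\Delta' < r$. The messages just moved have posterior $r$, and all other members of $S_-$ have posterior at least $r$. The latter point needs care. For pre-TOP members it holds because TOP puts the largest posteriors in $S_-$. For previously moved groups it holds because they are moved in increasing order and precede $k_{\min}$, so their posteriors are at most $r$.

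This last point is the main obstacle: repaired groups smaller than $r$ may now sit in $S_-$. I would handle it by showing that a saturating move of a group $g$ with posterior $r_g$ leaves $\Delta \le -r_g$. Together with $\Delta'<r$, this does not immediately suffice, since $\min_{S_-}\rho$ could be an earlier $r_g<r$.

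The refinement I would try first is to bound the final $\Delta'$ by the smallest moved posterior. Before the final batch, $\Delta<-r$ holds, because repair continues only on a violation. With $2(a-1)r<-r-\Delta$, this gives $\Delta' < r$. I then need $\Delta'\le r_g$ for the earliest moved group $g$. I would instead restrict the final count so that $\Delta'\le \min_{S_-}\rho$. Since $\Delta'$ is increased in steps of $2r$ and the allowed interval $[-r,\,\min_{S_-}\rho]$ has length at least $r+r_g$, this works when $r_g\ge r$. I expect the genuine case analysis to hinge on comparing $r_g$ and $r$. If it fails, I would fall back on the observation that the upper SED bound, as stated via $\min_{S_-}$ in \eqref{eq:SED_Yang}, is only required when $\Delta'\ge0$. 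In that regime the last move is of size less than $2r$ starting from below $-r$, so $\Delta'<r$. I would then reconcile this with the smallest $S_-$ posterior via the monotone move order.
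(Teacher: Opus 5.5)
Your proposal has a genuine gap, and it sits exactly where you flagged it. You never establish the upper SED bound $\Delta'\le\min_{S_-}\rho$ after a lower-side repair. Neither route you sketch can work (the case analysis on $r_g$ versus $r$, or ``reconciling via the monotone move order''), because the last non-saturating lower batch need not produce an SED partition.

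Concretely, take $\delta=\log 3$ and five groups with per-message weights $81,27,9,3,1$ and sizes $1,3,2,2,2$. The posteriors are $3^k/188$, and all masses below are in units of $1/188$.
\begin{itemize}
\item TOP backs off at the crossing (weight-$27$) group, since $d=14>13.5$. This gives $S_-=\{\text{the weight-}81\text{ message}\}$ and $\Delta=-26<-1$.
\item The repair moves both weight-$1$ messages ($\Delta=-22$), then both weight-$3$ messages ($\Delta=-10$).
\item It then moves one weight-$9$ message. This batch is non-saturating and ends at $\Delta=8\ge-9$, so the lower bound holds.
\item However, $\min_{S_-}\rho=1<8$, so the upper bound fails.
\end{itemize}
This is precisely your case $\Delta'\ge 0$ with smaller groups $r_g<r$ already sitting in $S_-$. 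No choice of the final count helps: moving $0$ or $1$ weight-$9$ message gives $\Delta\in\{-10,8\}$, and neither lies in $[-9,1]$.

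Consequently \algref{alg:sedrepair} does not stop there. It moves the weight-$1$ group back ($\Delta=4$), one weight-$3$ message back ($\Delta=-2$), and one weight-$1$ message forward ($\Delta=0$), and only then exits. That is six moves with $G_{t-1}=5$, and the same group is selected three times. So your termination count (``no group is selected twice'') is also invalid once cross-side moves are included.

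Separately, your claim that a saturating move leaves $\Delta\le -r$ is false. Saturation only needs $n\le\lceil(-r-\Delta)/(2r)\rceil$; for example, $n=1$ and $\Delta=-\tfrac32 r$ give a saturating move ending at $\Delta'=\tfrac r2$. What does hold for every move is $\Delta'<r$, and that is what keeps $S_+$ nonempty.

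The missing idea is to analyze the loop as a two-sided process:
\begin{itemize}
\item SED at exit is simply the loop's exit test.
\item Termination follows because $|\Delta|$ strictly decreases at every move. A lower move gives $\Delta<\Delta'<r<-\Delta$, and an upper move gives $-r<\Delta'<\Delta$.
\item Since $\Delta$ is a function of the finite state $(n_k^-)_k$, no state can repeat.
\end{itemize}
This yields termination and SED, but not the bound of $G_{t-1}$ iterations, which the example above exceeds for the algorithm as written.

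Your lower-bound argument coincides with the paper's. The paper also only verifies the violated side after the last batch and then appeals to symmetry, so its short proof does not address this interaction either. You correctly located the weak point, but your proposal does not close it.
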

\begin{IEEEproof}
    Each saturating iteration removes the current minimum group from the violating side, reducing the number of active groups on that side by one. Since there are at most $G_{t-1}$ groups in total, the loop performs at most $G_{t-1}$ iterations before a non-saturating batch occurs. Upon termination of a non-saturating batch moving $a$ messages of posterior $r$ from $S_+(t{-}1)$ to $S_-(t{-}1)$, the new imbalance satisfies $\Delta_{\mr{new}} = \Delta(t{-}1) + 2ar \geq -r$. Since $r$ is the minimum posterior in $S_+(t{-}1)$ (or larger, if the minimum group was fully exhausted), the SED lower bound holds. The upper-bound case is symmetric.
\end{IEEEproof}

\subsection{Identity Tracking via Interval Lists} \label{sec:intervals}
The group label $k$ determines the posterior value but not the message identity. To recover the decoded message index $\hat{W} \in [M]$ at the stopping time $\tau$, the algorithm must track which messages belong to which group. Storing all $M$ message indices individually would negate the complexity savings from grouping.

We adopt a compact representation. For each active label $k$ at time $t$, we maintain an ordered list of disjoint index intervals
\begin{align}
    \mc{J}_k(t) = \big[(s_1, \ell_1), \dots, (s_p, \ell_p)\big], \label{eq:intervallist}
\end{align}
representing the set $\bigcup_{u=1}^p \{s_u, s_u + 1, \dots, s_u + \ell_u - 1\}$. The total message count in group $k$ is $N_k(t) = \sum_{u = 1}^p \ell_u$.

\textit{Prefix split.} When the partition assigns $n_k^{-}(t{-}1)$ messages from group $k$ to $S_-(t{-}1)$, the algorithm takes the first $n_k^-(t{-}1)$ indices from $\mc{J}_k(t{-}1)$ in order, splitting at most one interval.

\textit{Label update and coalescing.} After observing the lattice index $n_t$, the labels of messages in $S_+(t{-}1)$ shift by $n_t$ while those in $S_-(t{-}1)$ remain unchanged. For each active label $g$ at time $t$, the updated interval list is
\begin{align}
    \mc{J}_g(t) = \mc{J}_g^{(-)}(t{-}1) \,\|\, \mc{J}_{g - n_t}^{(+)}(t{-}1), \label{eq:labelupdate}
\end{align}
where $\|$ denotes list concatenation. After concatenation, adjacent intervals $(s, \ell)$ and $(s + \ell, r)$ are \emph{coalesced} into $(s, \ell + r)$.

Let $F_t$ denote the total number of stored interval fragments across all groups at time $t$.
\begin{lemma}[Fragment count] \label{lem:fragments}
    $F_t \leq 1 + 2t$.
\end{lemma}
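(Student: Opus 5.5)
The plan is to prove $F_t \leq 1 + 2t$ by induction on $t$. The base case is $F_0 = 1$: at $t=0$ there is a single group with label $0$, stored as the single interval $(1, M)$. For the inductive step I will show that one round of the encoder (partition, prefix split, label update, coalescing) increases the total fragment count by at most $2$, which gives $F_t \leq F_{t-1} + 2$.

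\emph{Step 1: only prefix splits can create fragments.} The label update \eqref{eq:labelupdate} only concatenates existing lists, and every interval of group $k$ at time $t-1$ ends up in exactly one new list. So concatenation preserves the total count, and coalescing can only decrease it. The prefix split of a group $k$ cuts at most one interval, and it does so only if the group is \emph{partially} assigned, i.e., $0 < n_k^-(t{-}1) < N_k(t{-}1)$. Groups assigned entirely to one bin are moved intact. Hence
\begin{align}
F_t \leq F_{t-1} + \#\{k \colon \text{group } k \text{ is partially assigned at time } t-1\}.
\end{align}

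\emph{Step 2: at most two groups are partially assigned.} This is the main step, and it needs a walk through the algorithm.
\begin{enumerate}
\item After TOP initialization (\defnref{def:TOP}), at most one group is partial. In the generic case this is the crossing group $k^\star$: all earlier groups go fully to $S_-(t{-}1)$ and all later groups fully to $S_+(t{-}1)$. In the singleton-safeguard case it is the top group, which contributes one message.
\item In the repair loop (\secref{sec:repair}), each saturating batch moves the whole remaining part of the current minimum group on the violating side. A group that was entirely on that side therefore becomes entirely on the other side and is not partial. If it was the already-partial group, it now becomes whole on the other side, which reduces the partial count. Only the final, non-saturating batch (with $a < n$) can leave a new group partially assigned.
\end{enumerate}
So the repair adds at most one new partial group. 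This holds whether it is a lower-bound or an upper-bound repair, because the loop stops once SED holds (\lemref{lem:repair}). If that group coincides with the TOP partial group, the count is one; otherwise it is two.

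I expect Step 2 to be the main obstacle. The difficulty is to make sure a single round cannot trigger both a lower-bound and an upper-bound repair sequence, each leaving its own partial group. I would handle this by noting that repair is triggered only in the back-off case $\Delta(t{-}1) < 0$. Once the non-saturating batch achieves $\Delta_{\mathrm{new}} \geq -r$, the upper bound still holds: $\Delta_{\mathrm{new}} < \Delta(t{-}1) + 2r$, and I would check that this is at most $\min_{j \in S_-(t-1)} \rho_j$ because the messages just moved have the minimal posterior $r$. If this check fails, I would instead count fragments through the contiguity of the prefix-split rule, or allow the constant in the lemma to absorb it. Combining Steps 1 and 2 gives $F_t \leq F_{t-1} + 2$, and induction yields $F_t \leq 1 + 2t$.
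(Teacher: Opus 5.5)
Your Steps~1 and~2 follow the paper's argument: only partial prefix splits create fragments, TOP leaves at most one group partial, saturating batches move whole groups, and a single non-saturating batch leaves at most one more. Step~1 is correct.

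The gap is exactly the point you flagged, and the check you propose to close it is false. A non-saturating lower-bound batch on a group of posterior $r$ gives $-r \le \Delta_{\mathrm{new}} < r$. (Your $\Delta_{\mathrm{new}} < \Delta(t{-}1)+2r$ is wrong, since $a\ge 1$.) The fact that the moved messages have posterior $r$ only gives $\min_{j\in S_-(t-1)}\rho_j \le r$, which is the wrong direction. Earlier saturating batches may already have placed a group of smaller posterior into $S_-(t{-}1)$.

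Concretely, take $\delta=\log 1.5$ and active groups of sizes $16,44,100,2$ at labels $7,2,1,0$. All SED tests and \eqref{eq:repaircount} are scale-invariant, so normalize the label-$0$ posterior to $1$.
\begin{enumerate}
\item TOP backs off: it sends $15$ label-$7$ messages to $S_-(t{-}1)$, giving $\Delta\approx-11.80$.
\item \algref{alg:sedrepair} moves both label-$0$ messages (saturating), giving $\Delta\approx-7.80$.
\item It then moves $3$ label-$1$ messages (non-saturating), giving $\Delta\approx 1.20$. The lower bound $\Delta\ge-1.5$ now holds, but $\Delta>\rho^-_{\min}=1$.
\item So the loop moves one label-$0$ message back (non-saturating), giving $\Delta\approx-0.80$, and only then stops.
\end{enumerate}
The final assignment has three partially assigned groups (labels $7$, $1$, $0$). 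Hence $F_t\le F_{t-1}+2$ does not follow when the loop runs until full SED holds, as in \algref{alg:sedrepair} and as you assume via \lemref{lem:repair}.

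Your fallbacks do not close this. The three partial groups are genuine, so contiguity of the prefix split saves nothing, and absorbing the excess into the constant changes the statement.

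The paper's proof sidesteps the issue by taking the repair, as described in \secref{sec:repair}, to terminate as soon as a non-saturating batch restores the lower bound (``If $a<n$, the SED lower bound is restored and the loop terminates''). Under that reading, ``at most one non-saturating batch'' holds by construction and $1+2t$ is immediate. Whether the SED upper bound then holds is a question for \lemref{lem:repair}, not for this lemma.

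To complete your proof, you must either adopt that reading of the repair step explicitly, or prove a bound on the number of partially assigned groups per round under \algref{alg:sedrepair}. Your proposal does neither.
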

\begin{IEEEproof}
    At $t = 0$, there is a single fragment: $\mc{J}_0(0) = [(1, M)]$, so $F_0 = 1$. At each round, new fragments arise only from partial splits. The TOP initialization partially splits at most one group (the crossing group $k^{\star}$), creating at most one new fragment. The SED repair performs at most one non-saturating batch---all saturating batches move entire groups without splitting---creating at most one additional fragment. Hence $F_t \leq 1 + 2t$.
\end{IEEEproof}

\subsection{Complete Algorithm} \label{sec:completealg}
We present the full encoder--decoder procedure. \algref{alg:main} gives the main loop. \algref{alg:groupedstate} computes the grouped representation and performs the TOP initialization. \algref{alg:sedrepair} runs the batched SED repair. \algref{alg:materialize} materializes the partition via prefix splits, performs the label update after channel observation, and coalesces adjacent interval fragments.

We write $\Delta \triangleq \pi_-(t{-}1) - \pi_+(t{-}1)$ for the partition imbalance.

\begin{algorithm}[t]
\caption{Lattice-TOP with SED Repair (Main Loop)}
\label{alg:main}
\begin{algorithmic}[1]
\REQUIRE $K$, $M = 2^K$, lattice parameters $\delta, L, L_{\mr{tail}}$, channel $P_{Y|X}$, threshold $1 - \epsilon$.
\STATE Initialize: $t \leftarrow 0$; single active label $k = 0$ with $N_0 \leftarrow M$, $\mc{J}_0 \leftarrow [(1, M)]$.
\STATE $\{k, r_k, q_k, n_k^-\} \leftarrow$ \textsc{GroupedState \& TopInit}$(\{k, N_k, \mc{J}_k\})$.
\COMMENT{\algref{alg:groupedstate}}
\WHILE{$\max_k r_k < 1 - \epsilon$}
    \STATE $t \leftarrow t + 1$.
    \STATE $\{k, r_k, q_k, n_k^-\} \leftarrow$ \textsc{GroupedState \& TopInit}$(\{k, N_k, \mc{J}_k\})$.
    \COMMENT{\algref{alg:groupedstate}}
    \STATE $n_k^+ \leftarrow N_k - n_k^-$ for all active $k$.
    \STATE $\pi_- \leftarrow \sum_k n_k^- r_k$; \, $\Delta \leftarrow 2\pi_- - 1$.
    \STATE $\{n_k^-, n_k^+, \Delta\} \leftarrow$
    \textsc{SedRepair}$(\{k, r_k, n_k^-, n_k^+, \Delta\})$.
    \COMMENT{\algref{alg:sedrepair}}
    \STATE Transmit $X_t = +$ if $W \in \bigcup_k \mc{J}_k^{(+)}$, \, $X_t = -$ otherwise.
    \STATE Observe $Y_t$; compute lattice index $n_t$; feed back $n_t$.
    \STATE $\{(\mc{J}_k, N_k)\} \leftarrow$
    \textsc{Materialize, Update \& Coalesce}$(\{\mc{J}_k, n_k^-\}, n_t)$.
    \COMMENT{\algref{alg:materialize}}
\ENDWHILE
\STATE $k_{\max} \leftarrow \argmax_k r_k$; output $\hat{W}$ as the first index in $\mc{J}_{k_{\max}}$.
\end{algorithmic}
\end{algorithm}

\begin{algorithm}[t]
\caption{\textsc{GroupedState \& TopInit}}
\label{alg:groupedstate}
\label{alg:topinit}
\begin{algorithmic}[1]
\REQUIRE Active classes $\{(k, N_k, \mc{J}_k)\}$.
\ENSURE Sorted labels (decreasing $k$); values $r_k, q_k$; counts $\{n_k^-\}$.
\STATE Sort active labels: $k^{(1)} > \cdots > k^{(G)}$.
\STATE $\Sigma \leftarrow \sum_k N_k\, e^{\delta k}$.
\FOR{each active label $k$}
    \STATE $r_k \leftarrow e^{\delta k} / \Sigma$; \, $q_k \leftarrow N_k\, r_k$.
\ENDFOR
\STATE Set $n_k^- \leftarrow 0$ for all active $k$.
\IF{$r_{k^{(1)}} \geq \frac{1}{2}$}
    \STATE $n_{k^{(1)}}^- \leftarrow 1$; \textbf{return}.
    \COMMENT{Singleton safeguard}
\ENDIF
\STATE $\mr{cum} \leftarrow 0$.
\FOR{$j = 1$ to $G$}
    \IF{$\mr{cum} + q_{k^{(j)}} < \frac{1}{2}$}
        \STATE $n_{k^{(j)}}^- \leftarrow N_{k^{(j)}}$;
        \, $\mr{cum} \leftarrow \mr{cum} + q_{k^{(j)}}$.
    \ELSE
        \STATE $k^{\star} \leftarrow k^{(j)}$; \textbf{break}.
    \ENDIF
\ENDFOR
\STATE $r^{\star} \leftarrow r_{k^{\star}}$;
\, $a^{\star} \leftarrow \lceil (\frac{1}{2} - \mr{cum}) / r^{\star} \rceil$.
\STATE $d \leftarrow \mr{cum} + a^{\star} r^{\star} - \frac{1}{2}$.
\IF{$d \leq r^{\star} / 2$}
    \STATE $n_{k^{\star}}^- \leftarrow a^{\star}$.
\ELSE
    \STATE $n_{k^{\star}}^- \leftarrow a^{\star} - 1$.
\ENDIF
\end{algorithmic}
\end{algorithm}

\begin{algorithm}[t]
\caption{\textsc{SedRepair}: Batched SED Repair}
\label{alg:sedrepair}
\begin{algorithmic}[1]
\REQUIRE Labels; $r_k$; $n_k^-, n_k^+$; $\Delta$
\ENSURE Repaired counts satisfying \eqref{eq:SED_Yang}
\WHILE{true}
    \STATE $\rho_{\min}^- \leftarrow \min\{r_k : n_k^- > 0\}$;
    \, $\rho_{\min}^+ \leftarrow \min\{r_k : n_k^+ > 0\}$
    \IF{$-\rho_{\min}^+ \leq \Delta \leq \rho_{\min}^-$}
        \STATE \textbf{return} \hfill $\triangleright$ SED satisfied
    \ENDIF
    \IF{$\Delta < -\rho_{\min}^+$}
        \STATE $k_{\min} \leftarrow$ label achieving $\rho_{\min}^+$;
        \, $r \leftarrow r_{k_{\min}}$; \, $n \leftarrow n_{k_{\min}}^+$
        \STATE $a \leftarrow \min\big\{n, \lceil (-r - \Delta) / (2r) \rceil\big\}$
        \STATE $n_{k_{\min}}^+ \!\mathrel{-}= a$; \, $n_{k_{\min}}^- \!\mathrel{+}= a$; \, $\Delta \leftarrow \Delta + 2ar$
    \ELSE
        \STATE $k_{\min} \leftarrow$ label achieving $\rho_{\min}^-$;
        \COMMENT{$\Delta > \rho_{\min}^-$}
        \, $r \leftarrow r_{k_{\min}}$; \, $n \leftarrow n_{k_{\min}}^-$
        \STATE $a \leftarrow \min\big\{n, \lceil (\Delta - r) / (2r) \rceil\big\}$
        \STATE $n_{k_{\min}}^- \!\mathrel{-}= a$; \, $n_{k_{\min}}^+ \!\mathrel{+}= a$; \, $\Delta \leftarrow \Delta - 2ar$
    \ENDIF
\ENDWHILE
\end{algorithmic}
\end{algorithm}

\begin{algorithm}[t]
\caption{\textsc{Materialize, Update \& Coalesce}}
\label{alg:materialize}
\label{alg:update}
\label{alg:coalesce}
\begin{algorithmic}[1]
\REQUIRE For each active $k$: list $\mc{J}_k$, count $n_k^-$; lattice index $n_t$.
\ENSURE Updated $\{\mc{J}_g\}$ and $\{N_g\}$ after observation.
\STATE \COMMENT{\textit{Phase~1: Prefix split}}
\FOR{each active label $k$}
    \STATE Split $\mc{J}_k$ into $\mc{J}_k^{(-)}$ (first $n_k^-$ indices) and $\mc{J}_k^{(+)}$ (rest), splitting at most one interval at the boundary.
\ENDFOR
\STATE \COMMENT{\textit{Phase~2: Label update}}
\FOR{each label $g$ that will be nonempty}
    \STATE $\mc{J}_g \leftarrow \mc{J}_g^{(-)} \,\|\, \mc{J}_{g - n_t}^{(+)}$.
    \COMMENT{Labels in $S_+$ shift by $n_t$}
\ENDFOR
\STATE \COMMENT{\textit{Phase~3: Coalesce adjacent fragments}}
\FOR{each nonempty $\mc{J}_g$}
    \STATE Merge adjacent pairs $(s, \ell), (s{+}\ell, r) \to (s, \ell{+}r)$ in $\mc{J}_g$.
    \STATE $N_g \leftarrow \sum_{(s, \ell) \in \mc{J}_g} \ell$.
\ENDFOR
\STATE Remove empty classes.
\end{algorithmic}
\end{algorithm}

\subsection{Complexity Analysis} \label{sec:complexity}
We summarize the per-round and cumulative complexity of the algorithm.

\textit{TOP initialization.} The TOP portion of \algref{alg:groupedstate} iterates over the sorted group list once: $O(G_{t-1})$ operations.

\textit{SED repair.} By \lemref{lem:repair}, the repair loop runs at most $G_{t-1}$ iterations. Each iteration involves $O(1)$ arithmetic and $O(G_{t-1})$ work to locate the new minimum posterior (or $O(\log G_{t-1})$ with a min-heap). The worst-case cost per round is $O(G_{t-1}^2)$, improving to $O(G_{t-1} \log G_{t-1})$ with a heap.

\textit{Partition materialization, label update, and coalescing.} \algref{alg:materialize} processes each interval in each active group once: $O(F_t)$, where $F_t \leq 1 + 2t$ by \lemref{lem:fragments}.

\textit{Overall per-round complexity.} The dominant cost is the SED repair. Since $G_{t-1} \leq 1 + L_{\mr{tail}}\,(t-1)$ by \lemref{lem:groupcount}, the per-round cost at round $t$ is $O(L_{\mr{tail}}^2\,t^2)$, or $O(L_{\mr{tail}}\,t \log(L_{\mr{tail}} t))$ with a heap.

For any deterministic horizon $T$, the cumulative cost up to time $T$ is
\begin{align}
    \sum_{t=1}^{T} O(L_{\mr{tail}}^2\,t^2) = O(L_{\mr{tail}}^2\,T^3). \label{eq:totalcomplexity}
\end{align}
Hence, on any event of the form $\{\tau \leq T\}$, the cumulative encoder cost up to decoding is bounded by $O(L_{\mr{tail}}^2\,T^3)$, or by $O(L_{\mr{tail}}\,T^2 \log(L_{\mr{tail}} T))$ with a heap. Since the submartingale structure of \lemref{lem:submar} together with Chebyshev's inequality yields $\Prob{\tau > cK} = O(1/K)$ for a channel-dependent constant $c$, the cumulative encoder cost up to decoding is $O(L_{\mr{tail}}^2\,K^3)$, or $O(L_{\mr{tail}}\,K^2 \log(L_{\mr{tail}} K))$ with a heap, with high probability.

For the BI-AWGN channel with a $B$-level quantizer, $L_{\mr{tail}} = O(B)$, so the high-probability cumulative bounds become $O(B^2 K^3)$ and $O(B K^2 \log(BK))$, respectively.

\textit{Memory.} The algorithm stores the interval lists $\{\mc{J}_k\}$ and the group counts $\{N_k\}$. The total storage is $O(G_t + F_t) = O(L_{\mr{tail}}\,t)$, which equals $O(L_{\mr{tail}}\,K)$ at the decoding time. No per-message storage is required; the algorithm never enumerates the $M = 2^K$ messages individually.

\section{Output Quantization for Continuous-Output BMS Channels}
\label{sec:quantization}

The low-complexity encoder in \secref{sec:algorithm} requires a lattice-valued LLR (\defnref{def:lattice}).  Discrete-output BMS channels satisfy this natively; continuous-output channels do not.  We enforce the lattice property through output quantization, designing the quantizer so that the induced LLR on each output symbol equals the lattice value $k\delta$ exactly.  This \emph{exact induced-lattice} constraint preserves the grouped-posterior structure of \secref{sec:lattice}.  Both encoder and decoder apply the same quantizer, so the scheme operates on the induced discrete channel without mismatch.  Prior work on output quantization for binary-input channels~\cite{KurkoskiYagi2014, Rave2009, SinghDabeerMadhow2009, BinshtokShamai1999} does not impose such a constraint.

\subsection{Quantizer Definition and Symmetry}
\label{sec:quant_def}

Recall the LLR statistic $\Lambda(y) \triangleq \log \frac{P_+(y)}{P_-(y)}$, and let $\mu_+$ and $\mu_-$ denote the law of $\Lambda(Y)$ under $Y \sim P_+$ and $Y \sim P_-$, respectively.  By definition of the LLR,
\begin{align}
    \frac{d\mu_-}{d\mu_+}(\lambda) = e^{-\lambda}, \qquad \frac{d\mu_+}{d\mu_-}(\lambda) = e^{\lambda}. \label{eq:mu_change}
\end{align}

Fix an odd integer $B = 2L + 1$, a spacing $\delta > 0$, and an integer $L_{\mr{tail}} \geq L$.  The quantizer outputs the symbol set $\mc{K}_B$ from \eqref{eq:kb}.

A \emph{symmetric $B$-level quantizer} is a measurable partition $\{I_k\}_{k \in \mc{K}_B}$ of $\mathbb{R}$ satisfying the antisymmetry condition
\begin{align}
    I_{-k} = -I_k, \qquad k \in \{0, 1, \dots, L{-}1, L_{\mr{tail}}\}, \label{eq:antisymmetry}
\end{align}
where $-I_k \triangleq \{-\lambda : \lambda \in I_k\}$.  The quantized output is
\begin{align}
    \widetilde{Y} = k \quad \Longleftrightarrow \quad \Lambda(Y) \in I_k, \qquad k \in \mc{K}_B, \label{eq:tildeY_def}
\end{align}
where $\mc{K}_B$ is given in \eqref{eq:kb}.
The antisymmetry \eqref{eq:antisymmetry} ensures that the induced channel $X \mapsto \widetilde{Y}$ is itself BMS.

In practice, the partition is specified by $L$ positive cutoffs
\begin{align}
    0 < a_1 < a_2 < \dots < a_L. \label{eq:cutoffs}
\end{align}
The bins are
\begin{align}
    I_0 &= [-a_1, a_1), \notag \\
    I_k &= [a_k, a_{k+1}), \qquad k = 1, \dots, L{-}1, \notag \\
    I_{L_{\mr{tail}}} &= [a_L, \infty), \qquad I_{-k} = -I_k. \label{eq:bins}
\end{align}

\begin{definition}[Exact induced-lattice constraint]
\label{def:exact_lattice}
The quantizer $\{I_k\}_{k \in \mc{K}_B}$ satisfies the \emph{exact induced-lattice constraint} with spacing $\delta$ if, for every $k \in \mc{K}_B$ with $\mu_+(I_k) > 0$,
\begin{align}
    \log \frac{\mu_+(I_k)}{\mu_-(I_k)} = k\delta. \label{eq:exact_constraint}
\end{align}
\end{definition}

\begin{lemma}[Conditional-expectation characterization]
\label{lem:condexp_char}
The constraint \eqref{eq:exact_constraint} holds if and only if
\begin{align}
    \mathbb{E}\!\left[e^{-\Lambda} \,\middle|\, \Lambda \in I_k\right] = e^{-k\delta} \label{eq:condexp_equiv}
\end{align}
for every $k \in \mc{K}_B$ with $\mu_+(I_k) > 0$.
\end{lemma}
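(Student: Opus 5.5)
The plan is to rewrite $\mu_-(I_k)$ as an expectation under $\mu_+$ using the change-of-measure identity \eqref{eq:mu_change}, and then divide by $\mu_+(I_k)$ to obtain a conditional expectation. Throughout, $\Lambda = \Lambda(Y)$ with $Y \sim P_+$, so $\Lambda \sim \mu_+$. This is the convention of \eqref{eq:Lambdadef}, and the conditional expectation in \eqref{eq:condexp_equiv} is taken under this law.

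\emph{Key step.} Fix $k \in \mc{K}_B$ with $\mu_+(I_k) > 0$. By \eqref{eq:mu_change},
\begin{align}
    \mu_-(I_k) = \int_{I_k} e^{-\lambda}\, \mu_+(d\lambda) = \E{e^{-\Lambda}\, 1\{\Lambda \in I_k\}}.
\end{align}
Dividing by $\mu_+(I_k) = \Prob{\Lambda \in I_k} > 0$ gives
\begin{align}
    \frac{\mu_-(I_k)}{\mu_+(I_k)} = \E{e^{-\Lambda} \,\middle|\, \Lambda \in I_k}.
\end{align}
This quantity is strictly positive and finite, since $e^{-\Lambda} > 0$ and $\E{e^{-\Lambda}} = 1$ by \remref{rem:tiltedLLR}. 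Both ratios in the argument are therefore well defined, and the logarithm below is finite.

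\emph{Equivalence.} Because $\log$ is a bijection on $(0,\infty)$, \eqref{eq:exact_constraint}, i.e.\ $\log \frac{\mu_+(I_k)}{\mu_-(I_k)} = k\delta$, holds exactly when $\frac{\mu_-(I_k)}{\mu_+(I_k)} = e^{-k\delta}$. By the previous display, that is exactly \eqref{eq:condexp_equiv}. The argument is applied index by index, so the equivalence holds simultaneously for all $k$ with $\mu_+(I_k) > 0$.

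The only step needing care is justifying the change of measure \eqref{eq:mu_change}. It requires that $\mu_-$ is absolutely continuous with respect to $\mu_+$, which follows from mutual absolute continuity of $P_+$ and $P_-$ (assumed via $C_1 < \infty$, cf.\ \lemref{lem:tcomm}). The density formula itself is the standard fact that the likelihood ratio of the LLR statistic equals $e^{\lambda}$: for any measurable $A$,
\begin{align}
    \mu_-(A) = \int 1\{\Lambda(y) \in A\}\, \frac{P_-(y)}{P_+(y)}\, P_+(y)\, dy = \int_A e^{-\lambda}\, \mu_+(d\lambda).
\end{align}
No other obstacle is expected.
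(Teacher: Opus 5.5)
Your proposal is correct and follows the paper's proof essentially verbatim: use \eqref{eq:mu_change} to write $\mu_-(I_k)=\int_{I_k}e^{-\lambda}\,d\mu_+(\lambda)=\mu_+(I_k)\,\mathbb{E}[e^{-\Lambda}\mid\Lambda\in I_k]$, divide by $\mu_+(I_k)$, and take logarithms. Your extra remarks are justifications the paper leaves implicit. These are that the ratio lies in $(0,\infty)$, and that the absolute continuity behind \eqref{eq:mu_change} holds, which follows from $C_1<\infty$ together with the BMS identity $D(P_-\|P_+)=D(P_+\|P_-)$.
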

\begin{IEEEproof}
By \eqref{eq:mu_change}, $\mu_-(I_k) = \int_{I_k} e^{-\lambda}\,d\mu_+(\lambda) = \mu_+(I_k)\,\mathbb{E}[e^{-\Lambda} \mid \Lambda \in I_k]$.  Dividing by $\mu_+(I_k)$ and taking logarithms gives the equivalence.
\end{IEEEproof}

\subsection{Induced Capacity and Cell-wise Loss}
\label{sec:induced_cap}

The capacity analysis rests on the scalar function
\begin{align}
    g(\lambda) \triangleq \log \frac{2 e^{\lambda}}{1 + e^{\lambda}}. \label{eq:g_def}
\end{align}
The symmetric capacity of the original BMS channel can be written as
\begin{align}
    C = \mathbb{E}[g(\Lambda)], \qquad \Lambda \sim \mu_+. \label{eq:C_as_g}
\end{align}

Under the exact induced-lattice constraint \eqref{eq:exact_constraint}, the LLR of the quantized symbol $\widetilde{Y} = k$ equals $k\delta$, so the induced capacity is
\begin{align}
    C_{B,\delta} \triangleq I(X; \widetilde{Y}) = \sum_{k \in \mc{K}_B} \mu_+(I_k) \, g(k\delta). \label{eq:C_Bd}
\end{align}
Subtracting \eqref{eq:C_Bd} from \eqref{eq:C_as_g} yields the exact cell-wise decomposition
\begin{align}
    C - C_{B,\delta} = \sum_{k \in \mc{K}_B} \mu_+(I_k) \Big( \mathbb{E}\!\big[g(\Lambda) \mid \Lambda \in I_k\big] - g(k\delta) \Big). \label{eq:loss_cellwise}
\end{align}

Since $g(\lambda) = \log 2 - \beta(e^{-\lambda})$ with $\beta(u) \triangleq \log(1 + u)$, the cell-wise loss can be written as
\begin{align}
    \mathbb{E}\!\big[g(\Lambda) \mid \Lambda \in I_k\big] - g(k\delta) = \beta(e^{-k\delta}) - \mathbb{E}\!\big[\beta(e^{-\Lambda}) \mid \Lambda \in I_k\big]. \label{eq:cell_loss_psi}
\end{align}
By \eqref{eq:condexp_equiv} and the concavity of $\beta$, each summand is non-negative (Jensen's inequality), so $C_{B,\delta} \leq C$.

To connect the quantizer to the grouped encoder of \secref{sec:algorithm}, we identify the quantized output label with the lattice increment and write $n_t \triangleq \widetilde{Y}_t \in \mc{K}_B$, so that the LLR increment at time $t$ equals $n_t \delta$ exactly.

\begin{lemma}[Derivatives of $g$]
\label{lem:g_derivatives}
The function $g$ in \eqref{eq:g_def} satisfies
\begin{align}
    g'(\lambda) &= \frac{1}{1 + e^{\lambda}}, \qquad g''(\lambda) = -\frac{e^{\lambda}}{(1 + e^{\lambda})^2}. \label{eq:gprimegpp}
\end{align}
Define $h(\lambda) \triangleq (1 + e^{\lambda})^{-2}$.  Then
\begin{align}
    g'(\lambda) + g''(\lambda) = h(\lambda), \label{eq:gpp_identity}
\end{align}
and
\begin{align}
    \sup_{\lambda \in \mathbb{R}} |g'''(\lambda)| = \frac{\sqrt{3}}{18}, \qquad \sup_{\lambda \in \mathbb{R}} |h'(\lambda)| = \frac{8}{27}. \label{eq:sup_bounds}
\end{align}
\end{lemma}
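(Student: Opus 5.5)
The derivatives follow by direct differentiation once $g$ is written as $g(\lambda) = \log 2 + \lambda - \log(1+e^{\lambda})$. Then $g'(\lambda) = 1 - \frac{e^{\lambda}}{1+e^{\lambda}} = \frac{1}{1+e^{\lambda}}$. Differentiating once more gives $g''(\lambda) = -\frac{e^{\lambda}}{(1+e^{\lambda})^2}$, which is \eqref{eq:gprimegpp}. Adding the two expressions over the common denominator $(1+e^{\lambda})^2$ gives
\begin{align}
g'(\lambda) + g''(\lambda) = \frac{(1+e^{\lambda}) - e^{\lambda}}{(1+e^{\lambda})^2} = h(\lambda),
\end{align}
which is \eqref{eq:gpp_identity}.

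For the two suprema in \eqref{eq:sup_bounds}, I will substitute $s = \sigma(\lambda) \triangleq \frac{e^{\lambda}}{1+e^{\lambda}} \in (0,1)$. Since $\lambda \mapsto s$ is a bijection onto $(0,1)$ with $\frac{ds}{d\lambda} = s(1-s)$, each supremum over $\lambda \in \mathbb{R}$ becomes a supremum of a polynomial over $s \in (0,1)$.

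For the third derivative, $g'' = -s(1-s)$, so
\begin{align}
g''' = -(1-2s)\,s(1-s) = -(s - 3s^2 + 2s^3).
\end{align}
Let $p(s) = s - 3s^2 + 2s^3$. Its critical points solve $p'(s) = 1 - 6s + 6s^2 = 0$, i.e., $s = \frac{1}{2} \pm \frac{\sqrt{3}}{6}$. At these points $s(1-s) = \frac{1}{4} - \frac{1}{12} = \frac{1}{6}$ and $|1-2s| = \frac{\sqrt{3}}{3}$. Hence $|p| = \frac{\sqrt{3}}{18}$ there. Since $p \to 0$ at both endpoints, this value is the supremum.

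For $h$, note $h = (1-s)^2$, so $h' = 2(1-s)\cdot(-s(1-s)) = -2s(1-s)^2$. Maximizing $s(1-s)^2$ on $(0,1)$ gives $(1-s)(1-3s) = 0$, so $s = \frac{1}{3}$. The value there is $\frac{1}{3}\cdot\frac{4}{9} = \frac{4}{27}$, so $\sup|h'| = \frac{8}{27}$. Again the expression vanishes at both endpoints, so the interior critical point is the global maximum. Nothing here is a real obstacle; the only care needed is the chain-rule factor $s(1-s)$ when differentiating in $s$.
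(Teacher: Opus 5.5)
Your proposal is correct and follows essentially the paper's route: direct differentiation for the derivative formulas and the identity $g'+g''=h$, then a monotone change of variables and one-variable calculus for the two suprema. The only difference is cosmetic: you substitute the logistic variable $s = e^{\lambda}/(1+e^{\lambda})\in(0,1)$ where the paper uses $u=e^{\lambda}>0$, which turns both objectives into polynomials on $(0,1)$ that vanish at the endpoints.
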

\begin{IEEEproof}
Direct differentiation yields \eqref{eq:gprimegpp} and \eqref{eq:gpp_identity}.  The extrema in \eqref{eq:sup_bounds} follow by writing $u = e^{\lambda}$ and optimizing over $u > 0$.
\end{IEEEproof}

\subsection{Capacity-Loss Expansion}
\label{sec:caploss_expansion}

The decomposition \eqref{eq:loss_cellwise} shows that the capacity loss is controlled by how well each representative value $k\delta$ captures the conditional distribution of $\Lambda$ within $I_k$.  To make this precise, we impose regularity conditions on the interior cells and bound the tail contribution separately.

Let $\mc{K}_B^{\circ} \triangleq \{0, \pm 1, \dots, \pm(L{-}1)\}$ denote the interior labels.  For each $k \in \mc{K}_B^{\circ}$, define the centered residual 
\begin{align}
    \xi_k \triangleq \Lambda - k\delta
\end{align}
on $I_k$.  Assume there exist constants $\Gamma < \infty$ and $M_3 < \infty$, and a sequence $\epsilon_B \to 0$, such that for all sufficiently large~$B$:
\begin{enumerate}[label=\textnormal{(R\arabic*)}]
    \item \label{cond:R1} \emph{Interior localization:} $|\xi_k| \leq \Gamma \delta$ on $I_k$ for every $k \in \mc{K}_B^{\circ}$.
    \item \label{cond:R2} \emph{Uniform second-moment approximation:}
    \begin{align}
        \left| \mathbb{E}\!\left[\xi_k^2 \mid \Lambda \in I_k\right] - \frac{\delta^2}{12} \right| \leq \epsilon_B \cdot \frac{\delta^2}{12}, \quad k \in \mc{K}_B^{\circ}. \label{eq:R2}
    \end{align}
    \item \label{cond:R3} \emph{Uniform third-moment bound:}
    \begin{align}
        \mathbb{E}\!\left[|\xi_k|^3 \mid \Lambda \in I_k\right] \leq M_3 \delta^3, \quad k \in \mc{K}_B^{\circ}. \label{eq:R3}
    \end{align}
\end{enumerate}
Conditions \ref{cond:R1}--\ref{cond:R3} require that the interior bins behave approximately like uniform intervals of width $\delta$; the tail cells are handled separately via $\Psi(a_L)$.

Define the channel constant
\begin{align}
    \kappa_{P_{Y|X}} \triangleq \mathbb{E}[h(\Lambda)] = \mathbb{E}\!\left[\frac{1}{(1 + e^{\Lambda})^2}\right], \label{eq:kappa_def}
\end{align}
and the two-sided tail probability  
\begin{align}
    \Psi(t) \triangleq \mu_+(|\Lambda| \geq t)
\end{align}
for $t \geq 0$.

\begin{theorem}[Capacity-loss expansion]
\label{thm:caploss_expansion}
Under conditions \ref{cond:R1}--\ref{cond:R3}, if additionally $\Psi(a_L) = o(\delta^2)$, then
\begin{align}
    C - C_{B,\delta} = \frac{\kappa_{P_{Y|X}}}{24} \, \delta^2 + o(\delta^2). \label{eq:caploss_expansion}
\end{align}
\end{theorem}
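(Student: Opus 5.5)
We start from the exact cell-wise decomposition \eqref{eq:loss_cellwise} and split the sum into interior labels $k \in \mc{K}_B^{\circ}$ and the two tail labels $\pm L_{\mr{tail}}$. The tail contribution is handled first. Since $g$ is bounded, with $|g(\lambda)| \le \log 2 + |\lambda|\wedge \log 2$ on the relevant range (indeed $0\le \log 2 - g(\lambda) \le \log(1+e^{-\lambda})$), each tail cell's loss $\beta(e^{-k\delta}) - \E{\beta(e^{-\Lambda}) \mid \Lambda\in I_k}$ is bounded by a constant. For the positive tail, both $\beta$ terms lie in $[0,\log 2]$. For the negative tail, the constraint \eqref{eq:condexp_equiv} combined with Jensen's inequality and $\beta(u)\le \log 2+\log u$ for $u\ge1$ bounds the loss by $\log 2$. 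Hence the tail contributes $O(\mu_+(I_{L_{\mr{tail}}}) + \mu_+(I_{-L_{\mr{tail}}})) = O(\Psi(a_L)) = o(\delta^2)$.

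For an interior cell we Taylor-expand $g$ around $k\delta$ to third order in the residual $\xi_k$:
\begin{align}
\E{g(\Lambda)\mid I_k} - g(k\delta) = g'(k\delta)\,\E{\xi_k\mid I_k} + \tfrac12 g''(k\delta)\,\E{\xi_k^2\mid I_k} + R_k,
\end{align}
where $|R_k| \le \frac{\sqrt3}{108}\E{|\xi_k|^3\mid I_k} \le \frac{\sqrt3}{108} M_3\delta^3$ by \eqref{eq:sup_bounds} and \ref{cond:R3}. The first-moment term is eliminated using the exact-lattice constraint. Expanding $e^{-\xi_k} = 1 - \xi_k + \tfrac12\xi_k^2 + O(|\xi_k|^3)$, where the $O(\cdot)$ is uniform because $|\xi_k|\le\Gamma\delta$ by \ref{cond:R1}, \eqref{eq:condexp_equiv} gives $\E{e^{-\xi_k}\mid I_k}=1$. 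Therefore
\begin{align}
\E{\xi_k\mid I_k} = \tfrac12\E{\xi_k^2\mid I_k} + O(\delta^3).
\end{align}
Substituting this yields a cell loss of $\tfrac12\big(g'(k\delta)+g''(k\delta)\big)\E{\xi_k^2\mid I_k} + O(\delta^3)$, where we use $|g'|\le1$. By \eqref{eq:gpp_identity} this equals $\tfrac12 h(k\delta)\E{\xi_k^2\mid I_k}+O(\delta^3)$. Note that the sign convention differs: the quantity in \eqref{eq:loss_cellwise} is $\E{g(\Lambda)\mid I_k}-g(k\delta)$, whereas the true loss is positive. We must therefore carefully track that the cell loss is $\frac12 h(k\delta)\E{\xi_k^2}$ with the correct sign, consistent with the Jensen argument after \eqref{eq:cell_loss_psi}. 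Working directly with $\beta(e^{-\lambda})$, whose second derivative combination also produces $h$, is cleaner and will be used if the signs become confusing. Applying \ref{cond:R2}, the cell loss is $\frac{\delta^2}{24}h(k\delta)(1+O(\epsilon_B)) + O(\delta^3)$.

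Summing with weights $\mu_+(I_k)$ gives $\frac{\delta^2}{24}\sum_{k\in\mc{K}_B^\circ}\mu_+(I_k)h(k\delta) + o(\delta^2)$, since the weights sum to at most $1$. The final step, which we expect to be the main technical point, is to replace $\sum_k \mu_+(I_k)h(k\delta)$ by $\kappa_{P_{Y|X}}=\E{h(\Lambda)}$. Because $|h'|\le 8/27$ and $|\Lambda-k\delta|\le\Gamma\delta$ on interior cells, we have $|\E{h(\Lambda)1\{\Lambda\in I_k\}} - \mu_+(I_k)h(k\delta)| \le \frac{8}{27}\Gamma\delta\,\mu_+(I_k)$. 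The tail part satisfies $\E{h(\Lambda)1\{|\Lambda|\ge a_L\}}\le\Psi(a_L)\to0$ since $h\le1$. Hence the sum equals $\kappa_{P_{Y|X}}+o(1)$. Multiplying by $\delta^2/24$ and adding the $o(\delta^2)$ tail term yields \eqref{eq:caploss_expansion}. The subtle points are the uniformity of the Taylor remainders (secured by \ref{cond:R1}) and the first-moment cancellation via the exact constraint; without that constraint a $g'(k\delta)\E{\xi_k}$ term of order $\delta$ could survive.
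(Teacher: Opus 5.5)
Your interior-cell analysis and the Riemann-sum step are correct and match the paper's proof: $\E{e^{-\xi_k}\mid\Lambda\in I_k}=1$ cancels the first moment, $g'+g''=h$ gives $\tfrac12 h(k\delta)\E{\xi_k^2\mid\Lambda\in I_k}+O(\delta^3)$, and then (R2) and $|h'|\le 8/27$ finish. Your sign worry is unfounded. $\E{g(\Lambda)\mid\Lambda\in I_k}-g(k\delta)$ is already the non-negative loss, and $\tfrac12 h(k\delta)\E{\xi_k^2\mid\Lambda\in I_k}\ge 0$ agrees with it.

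The gap is in the tail step.

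First, $g$ is not bounded: $g(\lambda)=\log 2-\log(1+e^{-\lambda})\to-\infty$ as $\lambda\to-\infty$.

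More importantly, the per-cell bound $\log 2$ on the negative tail cell $I_{-L_{\mr{tail}}}$ does not follow from what you cite. Set $U=e^{-\Lambda}$ on that cell. Your inequalities give a loss of at most $\log 2+L_{\mr{tail}}\delta-\E{-\Lambda\mid\Lambda\in I_{-L_{\mr{tail}}}}$. Jensen applied to the constraint $\E{U\mid\cdot}=e^{L_{\mr{tail}}\delta}$ only yields $\E{-\Lambda\mid\cdot}\le L_{\mr{tail}}\delta$, which is the wrong direction.

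In fact, since $U\ge 1$ there, the loss on this cell is at least $\log\E{U\mid\cdot}-\E{\log U\mid\cdot}-\log 2$. This Jensen gap is unbounded in general, for example when $\Lambda$ restricted to $[a_L,\infty)$ has two widely separated atoms. Conditions (R1)--(R3) constrain only interior cells, so the hypotheses do not rule this out.

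The missing ingredient is the cell weight. The exact constraint together with BMS symmetry gives
$\mu_+(I_{-L_{\mr{tail}}})=\mu_-(I_{L_{\mr{tail}}})=e^{-L_{\mr{tail}}\delta}\mu_+(I_{L_{\mr{tail}}})$.
Bound the negative-tail loss crudely by $\beta(e^{L_{\mr{tail}}\delta})=L_{\mr{tail}}\delta+\beta(e^{-L_{\mr{tail}}\delta})$, dropping $\E{\beta(U)\mid\cdot}\ge 0$. With $t=L_{\mr{tail}}\delta$, the weighted contribution is then at most $\mu_+(I_{L_{\mr{tail}}})\,e^{-t}(t+\log 2)$. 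This is $O(\Psi(a_L))$ because $te^{-t}$ is bounded.

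Together with your correct positive-tail bound, this gives a tail sum of $O(\Psi(a_L))=o(\delta^2)$. The paper packages the same computation as $S_{\mr{tail}}\le F(L_{\mr{tail}}\delta)\,\Psi(a_L)\le(\log 2)\,\Psi(a_L)$, with $F$ decreasing and $F(0)=\log 2$.
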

The leading term $\frac{\kappa_{P_{Y|X}}}{24}\,\delta^2$ is governed by the channel-dependent constant $\kappa_{P_{Y|X}}$; the tail labels $\pm L_{\mr{tail}}\delta$ enter only through $\Psi(a_L)$.  A good asymptotic design balances the mesh size $\delta$ against the tail decay of the LLR distribution.

\begin{IEEEproof}
See \appref{app:caploss_proof}.
\end{IEEEproof}

\subsection{BI-AWGN Specialization}
\label{sec:quant_awgn}

We now specialize the framework to the BI-AWGN channel $Y = X + Z$ with $Z \sim \mc{N}(0, \sigma^2)$ and $X \in \{+1, -1\}$.

\subsubsection{LLR distribution and channel constant}
The LLR is $\Lambda(Y) = 2Y/\sigma^2$.  Under $X = +1$,
\begin{align}
    \Lambda \sim \mc{N}\!\left(\frac{2}{\sigma^2},\, \left(\frac{2}{\sigma}\right)^{\!2}\right). \label{eq:awgn_llr_dist}
\end{align}
Writing $\mu \triangleq 2/\sigma^2$ and $s \triangleq 2/\sigma$, we have $\Lambda = \mu + s\,G$ with $G \sim \mc{N}(0,1)$.  The channel constant \eqref{eq:kappa_def} becomes
\begin{align}
    \kappa_{\mr{AWGN}} = \mathbb{E}\!\left[\frac{1}{\big(1 + e^{\mu + s\,G}\big)^2}\right], \qquad G \sim \mc{N}(0,1), \label{eq:AAWGN}
\end{align}
which can be computed by numerical quadrature.  The tail function satisfies, for $t \geq \mu$,
\begin{align}
    \Psi(t) \leq 2\!\left(1 - \Phi\!\left(\frac{t - \mu}{s}\right)\right). \label{eq:awgn_tail_bound_phi}
\end{align}

It is useful to record the induced LLR of the positive tail cell
\begin{align}
    \ell_{\mr{tail}}(t) \triangleq \log \frac{1 - \Phi\!\left(\frac{t - \mu}{s}\right)}{1 - \Phi\!\left(\frac{t + \mu}{s}\right)}, \label{eq:tail_llr_def}
\end{align}
together with the expansions
\begin{align}
    \ell_{\mr{tail}}(t) = t + \frac{2\mu}{t} + O(t^{-2}), \qquad \ell_{\mr{tail}}'(t) = 1 - \frac{2\mu}{t^2} + O(t^{-3}). \label{eq:tail_llr_expansion}
\end{align}

\subsubsection{Exact-lattice construction and asymptotic scaling}

We construct an exact-lattice BI-AWGN quantizer with $B = 2L + 1$ output symbols.  The interior cells carry the contiguous LLR values $0, \pm\delta_B, \dots, \pm(L{-}1)\delta_B$, while the tail cells carry $\pm L_{\mr{tail}}\delta_B$ for an integer $L_{\mr{tail}} \geq L$ to be determined.

\paragraph{Choice of $T_B$ and $L_{\mr{tail}}$}
Set $T_B^{\star} \triangleq \mu + \frac{4}{\sigma}\sqrt{\log B}$ and $J_B(t) \triangleq (L - \tfrac{1}{2}) \frac{\ell_{\mr{tail}}(t)}{t}$.  By \eqref{eq:tail_llr_expansion}, $J_B(t) = L - \tfrac{1}{2} + O(L/t^2)$ and $J_B'(t) = O(L/t^3)$, so on $[T_B^{\star}, T_B^{\star}+1]$ the image of $J_B$ has length exceeding $1$ for large $B$.  By continuity, there exist
\begin{align}
    T_B \in [T_B^{\star},\, T_B^{\star} + 1] \qquad \text{and} \qquad L_{\mr{tail}} \in \mathbb{Z},\quad L_{\mr{tail}} \geq L, \label{eq:TB_Ltail_exist}
\end{align}
such that $J_B(T_B) = L_{\mr{tail}}$.  Define $\delta_B \triangleq T_B/(L - \frac{1}{2})$.  By construction, $\ell_{\mr{tail}}(T_B) = L_{\mr{tail}}\,\delta_B$, so both tail cells satisfy the exact induced-lattice constraint.

\paragraph{Interior thresholds}
Set $a_L \triangleq T_B$.  For $k = L{-}1, L{-}2, \dots, 1$, let $a_k$ be the unique solution of
\begin{align}
    \ell(a_k, a_{k+1}) = k\,\delta_B, \label{eq:recursive_ak}
\end{align}
where $\ell(u, v) \triangleq \log \frac{\mu_+([u,v))}{\mu_-([u,v))}$ is the interval LLR.  The negative thresholds are set by symmetry, and the bins are given by \eqref{eq:bins} with $\delta = \delta_B$.  Uniqueness and existence of the solution in \eqref{eq:recursive_ak} follow from the strict monotonicity of $u \mapsto \ell(u, v)$.  By construction, every cell satisfies the exact induced-lattice constraint.

The following lemma shows that the resulting thresholds are approximately equally spaced.
\begin{lemma}[Threshold geometry]
\label{prop:threshold_geometry}
If $(1 + T_B)\,\delta_B \to 0$ as $B \to \infty$, then uniformly in $k = 1, \dots, L$,
\begin{align}
    a_k = \left(k - \tfrac{1}{2}\right)\delta_B + O\!\big((1 + T_B)\,\delta_B^2\big). \label{eq:ak_expansion}
\end{align}
Consequently, the bin midpoints and bin widths satisfy, uniformly for $k = 1, \dots, L{-}1$,
\begin{align}
    m_k \triangleq \frac{a_k + a_{k+1}}{2} &= k\,\delta_B + O\!\big((1 + T_B)\,\delta_B^2\big), \label{eq:mk_expansion} \\
    w_k \triangleq a_{k+1} - a_k &= \delta_B + O\!\big((1 + T_B)\,\delta_B^2\big). \label{eq:wk_expansion}
\end{align}
\end{lemma}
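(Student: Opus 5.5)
The plan is to first obtain a sharp local expansion of the interval LLR $\ell(u,v)$ for the BI-AWGN channel, and then solve the backward recursion \eqref{eq:recursive_ak} perturbatively around the equally spaced grid $a_k^{\circ}\triangleq(k-\tfrac12)\delta_B$.

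\emph{Step 1 (local expansion of $\ell$).} By \lemref{lem:condexp_char}, $\ell(u,v) = -\log \mathbb{E}[e^{-\Lambda}\mid \Lambda\in[u,v)]$, where under $\mu_+$ the law of $\Lambda$ restricted to $[u,v)$ has density proportional to $\exp(-(\lambda-\mu)^2/(2s^2))$. For $0\le u<v\le T_B+1$ with $w\triangleq v-u$, the log-density has slope $-(\lambda-\mu)/s^2=O(1+T_B)$ on the cell. Hence the conditional density equals $w^{-1}\big(1+\beta(\lambda-m)+O((1+T_B)^2w^2)\big)$, with $m=(u+v)/2$ and $\beta=-(m-\mu)/s^2$. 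Expanding $e^{-\lambda}$ around $m$ and integrating gives
\begin{align}
\ell(u,v) = m - c(m)\,w^2 + O\big((1+T_B)^2 w^3 + w^4\big), \qquad c(m)\triangleq \tfrac{1}{24}-\tfrac{\beta(m)}{12}, \notag
\end{align}
so in particular $\ell(u,v)=m+O((1+T_B)w^2)$. The function $c(m)$ is Lipschitz in $m$ with constant $1/(12s^2)=O(1)$.

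\emph{Step 2 (recursion for the errors).} Write $a_k = a_k^{\circ}+e_k$. By the choice $\delta_B=T_B/(L-\tfrac12)$ we have $e_L=0$. The defining equation $\ell(a_k,a_{k+1})=k\delta_B$ together with Step~1 gives
\begin{align}
e_k = -e_{k+1} + 2c(m_k)w_k^2 + O\big((1+T_B)^2\delta_B^3\big). \notag
\end{align}
This holds provided we already know $w_k=\delta_B(1+o(1))$, which we carry as an induction hypothesis downward in $k$. Unrolling from $e_L=0$ gives an alternating sum, $e_k=\sum_{j=k}^{L-1}(-1)^{j-k}r_j$ with $r_j = 2c(m_j)w_j^2+O(\cdot)$.

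\emph{Step 3 (the main obstacle: no accumulation over $L$ steps).} A naive triangle bound gives $|e_k|\le L\max_j|r_j| = O(T_B(1+T_B)\delta_B)$, which is far too weak. Instead I will pair consecutive terms of the alternating sum. The difference $r_j-r_{j+1}$ is controlled by two contributions. The first is the Lipschitz property of $c$, which gives a term $O(\delta_B\cdot\delta_B^2)$. The second is $|w_j^2-w_{j+1}^2| = O(\delta_B|w_j-w_{j+1}|)$. Moreover, $w_j-w_{j+1} = 2e_{j+1}-e_j-e_{j+2}$ is a second difference of $e$, which the recursion shows is $O(\max_i|r_i-r_{i+1}|)+O((1+T_B)^2\delta_B^3)$. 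Summing the paired differences over $O(L)=O(T_B/\delta_B)$ pairs, and adding one unpaired term $O(\max|r_j|)=O((1+T_B)\delta_B^2)$, yields
\begin{align}
|e_k| = O\big((1+T_B)\delta_B^2\big) + O\big(T_B(1+T_B)^2\delta_B^2\big). \notag
\end{align}
The second term in this bound is the point needing care. If it proves too crude, I would instead sharpen Step~1 so that the $O((1+T_B)^2w^3)$ remainder is itself a smooth function of $m$ times $w^3$; the same pairing then reduces that remainder's contribution as well. The hypothesis $(1+T_B)\delta_B\to0$ is exactly what closes the induction $w_k=\delta_B(1+o(1))$ at each step.

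\emph{Step 4 (conclusion).} Given $a_k=(k-\tfrac12)\delta_B+O((1+T_B)\delta_B^2)$ uniformly in $k$, the expansions \eqref{eq:mk_expansion} and \eqref{eq:wk_expansion} follow immediately. Indeed, $m_k = k\delta_B + (e_k+e_{k+1})/2$ and $w_k=\delta_B+e_{k+1}-e_k$, and both error terms are $O((1+T_B)\delta_B^2)$.
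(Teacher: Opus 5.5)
\textbf{Verdict: genuine gap.} Your Steps~1--2 are sound, and you correctly identify accumulation over $L$ steps as the crux. But Step~3 rests on a false claim.

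\textbf{The false claim.} From $e_j+e_{j+1}=r_j$ you control only $e_j-e_{j+2}=r_j-r_{j+1}$. The width difference is $w_j-w_{j+1}=2e_{j+1}-e_j-e_{j+2}=4e_{j+1}-r_j-r_{j+1}$. This measures the alternating mode $(-1)^j$, which is the homogeneous solution of $e_j+e_{j+1}=0$, and it is not bounded by $\max_i|r_i-r_{i+1}|$. The boundary condition $e_L=0$ excites this mode at full strength: $e_{L-1}=r_{L-1}$ and $e_{L-2}=r_{L-2}-r_{L-1}\approx 0$, so $w_{L-2}-w_{L-1}\approx 2r_{L-1}$. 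Since $\mu=s^2/2$, your coefficient is exactly $c(m)=\frac{m}{12s^2}$. Hence $r_{L-1}\approx\frac{T_B\delta_B^2}{6s^2}$, while $r_{L-2}-r_{L-1}=O((1+T_B)^2\delta_B^3)$ is much smaller.

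\textbf{Why the pairing argument then fails.} The width oscillation feeds back into $r_j-r_{j+1}$ through $2c(m_j)(w_j^2-w_{j+1}^2)\approx 4c(m_j)\delta_B(w_j-w_{j+1})$. It is in phase with your pairing, so these terms do not cancel across pairs. Bounding them in absolute value over $\Theta(T_B/\delta_B)$ pairs gives only $O(T_B^3\delta_B^2)$. Bootstrapping through $\max_k|e_k|$ instead produces a coefficient of order $T_B^2/s^2$ in front of $\max_k|e_k|$, which does not close when $T_B\to\infty$ (the regime of \corref{cor:awgn_theta}). So the excess in your final bound does not come only from the crude Step~1 remainder, and sharpening Step~1 will not remove it.

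\textbf{The missing idea: the feedback is \emph{damping}.} You have to use its sign rather than triangle inequalities.
Let $A_k$ be the backward map $a_{k+1}\mapsto a_k$ defined by $\ell(a_k,a_{k+1})=k\delta_B$. Measure the forcing on the ideal grid, $\theta_k\triangleq A_k\big((k+\tfrac12)\delta_B\big)-(k-\tfrac12)\delta_B$. This quantity:
\begin{itemize}
\item is feedback-free;
\item satisfies $\theta_k\ge 0$, because $\ell(u,v)\le\frac{u+v}{2}$ when $u+v\ge 0$;
\item is increasing in $k$ with $\theta_k=\frac{k\delta_B^3}{6s^2}(1+o(1))$, so both $\max_k\theta_k$ and its total variation are $O((1+T_B)\delta_B^2)$.
\end{itemize}
By the mean value theorem, $e_k=\theta_k-\rho_k e_{k+1}$ with $\rho_k=\partial_v\ell/\partial_u\ell$ at an intermediate point. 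On cells with positive midpoint, $\partial_v\ell/\partial_u\ell=1-\frac{2mw}{3s^2}(1+o(1))\in(0,1]$. In your variables this reads $e_k\approx 2c(m_k)\delta_B^2-(1-8c(m_k)\delta_B)e_{k+1}$. Abel summation of $e_k=\sum_{j=k}^{L-1}(-1)^{j-k}\big(\prod_{i=k}^{j-1}\rho_i\big)\theta_j$ then gives $|e_k|\le 2\max_j\theta_j+\sum_j|\theta_{j+1}-\theta_j|=O((1+T_B)\delta_B^2)$, which is the lemma.

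\textbf{Comparison with the paper.} The paper argues by a one-step backward induction, bracketing $a_k$ within $\pm C_0(1+T_B)\delta_B^2$ of $(k-\frac12)\delta_B-e_{k+1}$ via the intermediate value theorem. As written, that yields only $|e_k+e_{k+1}|\le C_0(1+T_B)\delta_B^2$, not $|e_k|\le C_0(1+T_B)\delta_B^2$. So the accumulation you worried about affects that argument too. What closes it is the contraction $\rho_k\le 1$ together with the slow variation of $\theta_k$.
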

\begin{IEEEproof}
    See \appref{app:proofprop}.
\end{IEEEproof}

\subsubsection{Verification of the regularity conditions}
By \lemref{prop:threshold_geometry}, the interior bins have width $\delta_B(1 + O((1+T_B)\delta_B))$ and midpoints $k\delta_B + O((1+T_B)\delta_B^2)$, so \ref{cond:R1} holds with $\Gamma > \frac{1}{2}$ and \ref{cond:R3} follows from $|\xi_k| = O(\delta_B)$. The Gaussian smoothness estimate in the proof of \lemref{prop:threshold_geometry} shows that the conditional density on each bin is a $1 + O((1+T_B)\delta_B)$ perturbation of the uniform density, giving \ref{cond:R2} with $\epsilon_B = O((1+T_B)\delta_B) \to 0$.

\subsubsection{Asymptotic scaling law}

\begin{corollary}
\label{cor:awgn_theta}
For the exact-lattice BI-AWGN quantizer constructed above,
\begin{align}
    T_B \in \left[\mu + \frac{4}{\sigma}\sqrt{\log B},\, \mu + \frac{4}{\sigma}\sqrt{\log B} + 1\right], \label{eq:TB_location_final}
\end{align}
and $\delta_B = \frac{T_B}{L - \frac{1}{2}} = \frac{8}{\sigma}\,\frac{\sqrt{\log B}}{B}\,(1 + o(1))$.  Moreover, $(1 + T_B)\,\delta_B = O(\log B / B) \to 0$ and $\Psi(T_B) = O\!\big(1/(B^2\sqrt{\log B})\big) = o(\delta_B^2)$.  Consequently, for fixed $\sigma^2 < \infty$,
\begin{align}
    C - C_{B,\delta_B} = \frac{8\,\kappa_{\mr{AWGN}}}{3\sigma^2} \cdot \frac{\log B}{B^2} \cdot (1 + o(1)) = \Theta\!\left(\frac{\log B}{B^2}\right). \label{eq:awgn_scaling}
\end{align}
\end{corollary}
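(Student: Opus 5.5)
The plan is to verify the hypotheses of \thmref{thm:caploss_expansion} for the constructed quantizer and then substitute the explicit value of $\delta_B$. The location \eqref{eq:TB_location_final} of $T_B$ is simply the existence statement \eqref{eq:TB_Ltail_exist}. It is obtained from the intermediate value theorem applied to $J_B$ on $[T_B^{\star}, T_B^{\star}+1]$, using the expansion \eqref{eq:tail_llr_expansion}. Since $T_B = \frac{4}{\sigma}\sqrt{\log B}\,(1+o(1))$ (the constant $\mu$ and the $+1$ are lower order) and $L - \frac12 = \frac{B}{2} - 1 = \frac{B}{2}(1+o(1))$, we get $\delta_B = T_B/(L-\frac12) = \frac{8}{\sigma}\frac{\sqrt{\log B}}{B}(1+o(1))$. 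It follows that $(1+T_B)\delta_B = O(\sqrt{\log B}\cdot\sqrt{\log B}/B) = O(\log B/B) \to 0$. This is exactly the hypothesis of \lemref{prop:threshold_geometry}, so the threshold geometry \eqref{eq:ak_expansion}--\eqref{eq:wk_expansion} is available. Through the verification paragraph preceding the corollary, this gives \ref{cond:R1}--\ref{cond:R3} with $\epsilon_B = O(\log B/B)$.

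For the tail condition, apply \eqref{eq:awgn_tail_bound_phi} at $t = T_B \ge \mu$. Since $(T_B-\mu)/s \ge \frac{4}{\sigma}\sqrt{\log B}\cdot\frac{\sigma}{2} = 2\sqrt{\log B}$, the Mills-ratio bound $1-\Phi(x) \le \phi(x)/x$ gives
\begin{align}
\Psi(T_B) \le \frac{2\,e^{-2\log B}}{\sqrt{2\pi}\,2\sqrt{\log B}} = O\!\left(\frac{1}{B^2\sqrt{\log B}}\right).
\end{align}
Comparing with $\delta_B^2 = \Theta(\log B/B^2)$, the ratio is $O((\log B)^{-3/2}) \to 0$, so $\Psi(a_L) = \Psi(T_B) = o(\delta_B^2)$. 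This is where the constant $4/\sigma$ in $T_B^\star$ is needed: it makes the Gaussian tail decay like $B^{-2}$, matching the $\delta^2$ scale up to a logarithmic gain.

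With all hypotheses in place, \thmref{thm:caploss_expansion} gives $C - C_{B,\delta_B} = \frac{\kappa_{\mr{AWGN}}}{24}\delta_B^2(1+o(1))$. Substituting $\delta_B^2 = \frac{64}{\sigma^2}\frac{\log B}{B^2}(1+o(1))$ yields $\frac{64}{24}\cdot\frac{\kappa_{\mr{AWGN}}}{\sigma^2}\frac{\log B}{B^2} = \frac{8\kappa_{\mr{AWGN}}}{3\sigma^2}\frac{\log B}{B^2}$, which is \eqref{eq:awgn_scaling}. Since $\kappa_{\mr{AWGN}} \in (0,1)$ is a fixed positive constant for fixed $\sigma^2$, the $\Theta$ statement follows.

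The main obstacle is making the regularity verification uniform in $k$. In particular, \ref{cond:R2} requires the conditional density of $\Lambda$ on each interior bin to be uniform up to a factor $1+O((1+T_B)\delta_B)$, uniformly over all bins up to $T_B \sim \sqrt{\log B}$. I would rely on the Gaussian smoothness estimate from the proof of \lemref{prop:threshold_geometry}: on a bin of width $\approx\delta_B$ near $\lambda \le T_B$, the log-density has derivative $O(1+T_B)$, so its relative variation across the bin is $O((1+T_B)\delta_B)$. This controls the second moment of $\xi_k$ to within relative error $\epsilon_B \to 0$, and the $o(\delta^2)$ remainder of \thmref{thm:caploss_expansion} then absorbs this error.
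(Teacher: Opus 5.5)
Your proposal is correct and follows the paper's argument essentially step for step: you read off $T_B$ from the construction, then use $(1+T_B)\delta_B = O(\log B/B)$ to invoke \lemref{prop:threshold_geometry} for (R1)--(R3), bound $\Psi(T_B)$ with the Mills-ratio estimate at $(T_B-\mu)/s \ge 2\sqrt{\log B}$, and substitute $\delta_B^2 = \frac{64}{\sigma^2}\frac{\log B}{B^2}(1+o(1))$ into \thmref{thm:caploss_expansion}. Incidentally, your value $L-\frac12 = \frac{B}{2}-1$ is the correct one; the paper's $(B-1)/2$ is a slip with no asymptotic effect.
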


\begin{IEEEproof}
The interval \eqref{eq:TB_location_final} restates \eqref{eq:TB_Ltail_exist}.  Since $L - \frac{1}{2} = (B - 1)/2$,
\begin{align}
    \delta_B = \frac{T_B}{(B-1)/2} = \frac{8}{\sigma}\,\frac{\sqrt{\log B}}{B}\,(1 + o(1)). \label{eq:deltaB_asymptotic}
\end{align}
Combining \eqref{eq:TB_location_final} and \eqref{eq:deltaB_asymptotic} gives $(1 + T_B)\,\delta_B = O(\log B / B) \to 0$, so \lemref{prop:threshold_geometry} applies and conditions \ref{cond:R1}--\ref{cond:R3} hold with $\epsilon_B \to 0$.

The tail probability is bounded via \eqref{eq:awgn_tail_bound_phi} at $t = T_B$.  Since $(T_B - \mu)/s \geq 2\sqrt{\log B}$, the Gaussian tail bound $1 - \Phi(u) \leq \frac{1}{u\sqrt{2\pi}}\,e^{-u^2/2}$ gives
\begin{align}
    \Psi(T_B) = O\!\left(\frac{1}{B^2\sqrt{\log B}}\right) = o(\delta_B^2). \label{eq:tail_negligible}
\end{align}
Since $\Psi(T_B) = o(\delta_B^2)$, the hypotheses of \thmref{thm:caploss_expansion} are satisfied, giving
\begin{align}
    C - C_{B,\delta_B} = \frac{\kappa_{\mr{AWGN}}}{24}\,\delta_B^2\,(1 + o(1)).
\end{align}
Substituting $\delta_B^2 = \frac{64}{\sigma^2}\,\frac{\log B}{B^2}\,(1 + o(1))$ completes the proof.
\end{IEEEproof}

\figref{fig:caploss_verify} compares the capacity loss of the exact-lattice BI-AWGN quantizer against the analytical predictions of \thmref{thm:caploss_expansion} and \corref{cor:awgn_theta}.  For each odd $B$ from $3$ to $101$, the optimal spacing $\delta^*$ and tail label $L_{\mr{tail}}^*$ are computed numerically via \algref{alg:exactlattice_awgn_inner}, and the exact induced capacity loss $C_{\mr{AWGN}} - C_{B,\delta^*}$ is evaluated (blue curve).  The red curve plots the leading-order term $\frac{\kappa_{\mr{AWGN}}}{24}\,\delta^{*2}$ from~\eqref{eq:caploss_expansion}; by $B = 51$ this accounts for over $90\%$ of the loss, confirming that the higher-order remainder is negligible for moderate $B$.  The green curve shows the fully asymptotic approximation from~\eqref{eq:awgn_scaling}, which overestimates the loss at finite $B$ but captures the correct $O(\log B / B^2)$ decay rate.
 
\begin{figure}[t]
    \centering
    \includegraphics[width=\columnwidth]{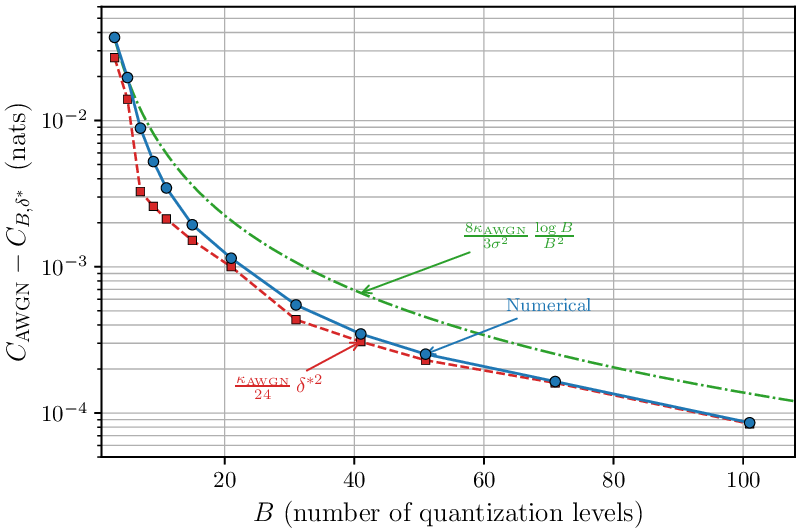}
    \caption{Capacity loss $C_{\mr{AWGN}} - C_{B,\delta^*}$ (in nats) of the exact-lattice BI-AWGN quantizer ($\sigma^2 = 1$) as a function of the number of output levels~$B$.}
    \label{fig:caploss_verify}
\end{figure}

\subsubsection{Numerical design algorithm}
\label{sec:awgn_numeric}

Since $\Lambda(Y) = 2Y/\sigma^2$ is strictly monotone in $Y$, the quantizer can equivalently be specified through thresholds in $Y$-space.  By the antisymmetry, only the positive thresholds $0 < b_{1/2} < b_{3/2} < \dots < b_{L-1/2}$ need to be determined; the negative thresholds are $b_{-(k+1/2)} = -b_{k+1/2}$.  The bins in $Y$-space are
\begin{align}
    \widetilde{\mc{B}}_0 &= [-b_{1/2},\, b_{1/2}), \notag \\
    \widetilde{\mc{B}}_k &= [b_{k-1/2},\, b_{k+1/2}), \quad 1 \leq k \leq L{-}1, \notag \\
    \widetilde{\mc{B}}_L &= [b_{L-1/2},\, \infty), \qquad \widetilde{\mc{B}}_{-k} = -\widetilde{\mc{B}}_k.
\end{align}

Define the Gaussian bin probabilities
\begin{align}
    P^+(u,v) &\triangleq \Phi\!\left(\tfrac{v-1}{\sigma}\right) - \Phi\!\left(\tfrac{u-1}{\sigma}\right), \\
    P^-(u,v) &\triangleq \Phi\!\left(\tfrac{v+1}{\sigma}\right) - \Phi\!\left(\tfrac{u+1}{\sigma}\right),
\end{align}
and the bin LLR $\ell(u,v) \triangleq \log \frac{P^+(u,v)}{P^-(u,v)}$, with the tail variant $\ell_{\mr{tail}}(u) \triangleq \log \frac{P^+(u,\infty)}{P^-(u,\infty)}$.  The exact-lattice conditions become
\begin{align}
    \ell(b_{k-1/2},\, b_{k+1/2}) &= k\delta, \quad k = 1, \dots, L{-}1, \label{eq:exact_interior_awgn} \\
    \ell_{\mr{tail}}(b_{L-1/2}) &= L_{\mr{tail}}\,\delta, \label{eq:exact_tail_awgn}
\end{align}
which is a system of $L$ equations in the $L$ unknowns $b_{1/2}, \dots, b_{L-1/2}$, for fixed $\delta$ and $L_{\mr{tail}}$.

\emph{Bidiagonal structure and design algorithm (\algref{alg:exactlattice_awgn_inner}).}  Since $f_+(y)/f_-(y) = e^{2y/\sigma^2}$ is strictly increasing, each $\ell(u,v)$ is strictly increasing in $u$ and decreasing in $v$, so the system has a bidiagonal structure that permits backward sequential root-finding: $b_{L-1/2}$ is found from the tail equation, then $b_{k-1/2}$ from $\ell(u,\, b_{k+1/2}) = k\delta$ for $k = L{-}1, \dots, 1$, each by bracketed bisection. The outer loop enumerates a small range of integer $L_{\mr{tail}}$ candidates; for each, a golden-section search maximizes $C_{B,\delta}$ over $\delta$.

\begin{algorithm}[t]
\caption{\textsc{ExactLatticeAWGN}: Joint optimization and sequential bisection}
\label{alg:exactlattice_awgn_inner}
\label{alg:exactlattice_awgn_outer}
\begin{algorithmic}[1]
\REQUIRE $B = 2L+1$, $\sigma^2$, search interval $(\delta_{\min}, \delta_{\max})$, search radius $r$, tolerances $\varepsilon, \varepsilon_\delta$.
\ENSURE Optimized $\delta^*$, $L_{\mr{tail}}^*$, thresholds $\mb{b}^*$, induced capacity $C_{B,\delta^*}$.
\STATE $T_B^{\star} \leftarrow \frac{2}{\sigma^2} + \frac{4}{\sigma}\sqrt{\log B}$; \, $L_{\mr{tail}}^{\star} \leftarrow \mr{round}\!\left((L - \tfrac{1}{2})\,\frac{\ell_{\mr{tail}}(T_B^{\star})}{T_B^{\star}}\right)$; \, $C_{\mr{best}} \leftarrow 0$.
\FOR{$L_{\mr{tail}} = \max(L,\, L_{\mr{tail}}^{\star} - r)$ \textbf{to} $L_{\mr{tail}}^{\star} + r$}
    \FOR{each $\delta$ queried by golden-section search on $(\delta_{\min}, \delta_{\max})$}
        \STATE \COMMENT{\textit{Inner solve: sequential bisection at fixed $(\delta, L_{\mr{tail}})$}}
        \STATE Solve $\ell_{\mr{tail}}(u) = L_{\mr{tail}}\,\delta$ for $b_{L-1/2}$ by bracketed bisection.
        \FOR{$k = L{-}1, L{-}2, \dots, 1$}
            \STATE Solve $\ell(u,\, b_{k+1/2}) = k\,\delta$ for $b_{k-1/2}$ by bracketed bisection.
        \ENDFOR
        \STATE Evaluate $C_{B,\delta}$; update $C_{\mr{best}}$, $\delta^*$, $L_{\mr{tail}}^*$, $\mb{b}^*$ if improved.
    \ENDFOR
\ENDFOR
\end{algorithmic}
\end{algorithm}

\section{Extensions and Discussion} \label{sec:extension}

\paragraph{Non-uniform message distributions and joint source-channel coding}
The analysis leading to \thmref{thm:main} depends on $M$ only through the initial log-posterior odds. Inspecting the proof, the term $\frac{\log(M-1)}{C}$ in \eqref{eq:N0def} may be replaced by $\frac{H(\boldsymbol{\rho}(0))}{C}$, where $H(\boldsymbol{\rho}(0))$ is the Shannon entropy of the initial posterior vector $\boldsymbol{\rho}(0)$. The bound then applies directly to joint source-channel coding, as in \cite{antonini2024} for the BSC.

\paragraph{The total number of feedback bits for the BI-AWGN channel} By \corref{cor:awgn_theta}, the quantization penalty on $\log M$ scales as $\Theta(N \log B / B^2)$. Combining with \corref{cor:asymptotic}, the expansion \eqref{eq:asymp} is achievable using $B = \Theta(\sqrt{N}/\log N)$ levels, i.e., $\Theta(\log K)$ feedback bits per channel use and $\Theta(K \log K)$ in total. Designing a scheme where the quantization level varies over time and the total feedback budget is strictly less than $\Theta(K \log K)$ remains open. 

\paragraph{Role of the BMS symmetry}
The supporting results (\secref{sec:supporting}), the SED rule, and the low-complexity encoder (\secref{sec:algorithm}) do not require BMS symmetry. BMS symmetry simplifies \thmref{thm:main} by collapsing two overshoot parameters into one ($\bar{\Lambda} \overset{d}{=} -\Lambda$) and yielding the improved ruin bound $\psi(0) \leq \chi(P_{Y|X})$; for asymmetric channels with bounded LLR, these can be replaced by additional constants~\cite{yang2022SED}. The most critical use is the surrogate submartingale in \secref{sec:commbound}: for continuous-output channels, $C_2 = \infty$, and the BMS cancellation yielding $\log 2$ is what makes the communication-phase bound finite. Extending this to asymmetric channels with unbounded LLR remains open.

\paragraph{Summary}
We have developed a VLF coding framework for BMS channels that combines a non-asymptotic analysis of SED-based posterior matching with a practical encoder construction. The analysis removes the bounded-LLR-increment assumption present in all prior SED-based works, achieving the second-order term $-\frac{C}{C_1}\log N$ with an $O(1)$ third-order remainder. On the algorithmic side, a TOP initialization followed by batched SED repair maintains the exact SED partition at polynomial cost in the number of transmitted bits; for the quantized BI-AWGN construction, the cumulative encoder cost is $O(BK^2 \log(BK))$ with high probability. The output-quantization framework maps continuous-output channels to lattice-valued ones with capacity loss $O(\log B / B^2)$, making the scheme applicable to BI-AWGN and other BMS channels. Open directions include extending the surrogate submartingale construction to asymmetric channels with unbounded LLR and reducing the total feedback budget below $\Theta(K \log K)$ bits.

    \appendices
\section{Proof of \corref{cor:asymptotic}} \label{app:cor}
We optimize the bound in \eqref{eq:Nach} over $\epsilon_0$. Substituting $\delta \triangleq \epsilon - \epsilon_0 > 0$ and using $\frac{1-\epsilon'}{\epsilon'} = \frac{1-\epsilon}{\delta}$, the bound becomes
\begin{align}
    &N^*(M, \epsilon) \leq (1{-}\epsilon{+}\delta) \notag \\
    &\quad \cdot \!\left[\frac{\log M {+} \log 2}{C} + \frac{\log \frac{1-\epsilon}{\delta} + \eta(P_\Lambda)}{C_1} + \zeta(\delta)\right], \label{eq:Nbound_delta}
\end{align}
where
\begin{align}
    \zeta(\delta) 
    &\triangleq \frac{\min\!\big\{\alpha(\epsilon'(\delta)) \cdot \eta(P_\Lambda),\, \chi(P_{Y|X}) \cdot \eta_0^{-}(P_\Lambda)\big\}}{(1-\chi(P_{Y|X})) \, C} \notag \\
    &\quad + \frac{\chi(P_{Y|X}) \log 2}{(1-\chi(P_{Y|X})) \, C}\label{eq:zetadef}
\end{align}
with $\epsilon'(\delta) = \delta / (1 - \epsilon + \delta)$. Since $\alpha(\epsilon') \to \chi(P_{Y|X})$ as $\epsilon' \to 0$, we have $\zeta(\delta) \to \zeta_0$ as $\delta \to 0$, where $\zeta_0$ is obtained by replacing $\alpha(\epsilon')$ with $\chi(P_{Y|X})$ in \eqref{eq:zetadef}; the error is $O(\sqrt{\delta}) = o(1)$ since $\delta = O(N^{-1})$.  Replacing $\zeta(\delta)$ by $\zeta_0$ in \eqref{eq:Nbound_delta}, differentiating with respect to $\delta$, and setting the derivative to zero yields
\begin{align}
    &\frac{\log M + \log 2}{C} + \frac{\log \frac{1-\epsilon}{\delta} + \eta(P_\Lambda)}{C_1} + \zeta_0 = \frac{1-\epsilon+\delta}{C_1 \delta}. \label{eq:optcond}
\end{align}
The left-hand side of \eqref{eq:optcond} equals $\frac{N}{1-\epsilon+\delta}$, giving $N = \frac{(1-\epsilon + \delta)^2}{C_1 \delta}$, a quadratic in $\delta$ whose smaller root satisfies
\begin{align}
    \delta^* = \frac{(1-\epsilon)^2}{NC_1} + O(N^{-2}), \label{eq:deltastar}
\end{align}
which translates to \eqref{eq:eps0star}.

To obtain $\log M$ as a function of $N$, we solve \eqref{eq:optcond} for $\log M$ and get
\begin{align}
    &\frac{\log M + \log 2}{C} = \frac{1-\epsilon+\delta}{C_1\delta}  - \frac{\log\frac{1-\epsilon}{\delta} + \eta(P_\Lambda)}{C_1} - \zeta_0.\label{eq:logMexpr}
\end{align}
Substituting \eqref{eq:deltastar} into each term on the right-hand side:
\begin{align}
    \frac{1-\epsilon+\delta^*}{C_1 \delta^*} &= \frac{N}{1-\epsilon} - \frac{1}{C_1} + O(N^{-1}), \label{eq:term1} \\
    \log \frac{1-\epsilon}{\delta^*} &= \log N + \log \frac{C_1}{1-\epsilon} + O(N^{-1}). \label{eq:term2}
\end{align}
Substituting \eqref{eq:term1}--\eqref{eq:term2} into \eqref{eq:logMexpr}, multiplying through by $C$, and collecting terms yields \eqref{eq:Keps} and hence \eqref{eq:asymp}.

\section{Proof of Remark~\ref{rem:log2}} \label{app:log2}
In \eqref{eq:log2rho}, we have a channel-independent bound $w_m'(t, y) \leq \log 2$ for all $y \in \mathbb{R}$, which is used in \thmref{thm:main} and \corref{cor:asymptotic}. We here improve the universal bound for the BSC and BI-AWGN channels. 

By symmetry, we assume $x_m = +$ is transmitted. For the BSC, we have
\begin{align}
    w_m'(t, -1) \leq w_m'(t, 1) = (1 + r) \log \frac{2}{1 + r},
\end{align}
where $r = \frac{P_{-}(1)}{P_{+}(1)} = \frac{p}{1-p}$, proving the claim for the BSC.

For the BI-AWGN channel, we use \thmref{thm:hitting_time_drift_overshoot} on the increment $w_m'(t, Y)$ and get
\begin{align}
    \eta_0 = \E{w_m'(t, Y) \middle | Y \geq 0},
\end{align}
where $Y \sim P_+$.
Define $r(y) \triangleq \frac{P_-(y)}{P_+(y)} = e^{-\frac{2y}{\sigma^2}}$. Then,
\begin{align}
    w_m'(t, Y) = (1 + r(Y)) \log \frac{2}{1 + r(Y)}.
\end{align}
Since the function $\varphi(u) = (1 + u) \log \frac{2}{1 + u}$ is concave, applying Jensen's inequality, we get
\begin{align}
    \E{w_m'(t, Y) | Y \geq 0} \leq \varphi(\E{r(Y) | Y \geq 0}). \label{eq:Eg}
\end{align}
We have
\begin{align}
    P_+[Y \geq 0] &= \Phi\left(\frac{1}{\sigma}\right), \\
    \E{r(Y) 1\{Y \geq 0\}} &= P_-[Y \geq 0] = \Phi\left(-\frac{1}{\sigma}\right), 
\end{align}
giving $\E{r(Y) | Y \geq 0} = \frac{\Phi\left(-\frac{1}{\sigma}\right)}{\Phi\left(\frac{1}{\sigma}\right)}$. Substituting this into \eqref{eq:Eg} completes the proof.

    \section{Proof of \thmref{thm:hitting_time_drift_overshoot}} \label{app:stopUniform}
We split the proof into two steps.

\emph{Step 1: A uniform bound on the expected overshoot (truncated in time).}
For $N\in\mathbb{N}$, define the truncated overshoot
$R_b^{(N)} \triangleq (S_{\tau_b}-b)\,1\{\tau_b\le N\}$.
On $\{\tau_b=k\}$, $S_{k-1}\le b$ and $S_k>b$, so $S_k - b = \Delta S_k - (b-S_{k-1})$ with $\Delta S_k > b-S_{k-1}\ge 0$. Therefore,
\begin{align}
R_b^{(N)}
=
\sum_{k=1}^N \big(\Delta S_k-(b-S_{k-1})\big)^+\,1\{\tau_b=k\}.
\end{align}
Taking expectation and conditioning on $\mathcal{F}_{k-1}$ yields
\begin{align}
\mathbb{E}\!\left[R_b^{(N)}\right] =
\sum_{k=1}^N
\mathbb{E}\!\left[
1\{\tau_b=k\}\,
\mathbb{E}\!\left[
\big(\Delta S_k-(b-S_{k-1})\big)^+ \,\middle|\, \mathcal{F}_{k-1}
\right]
\right].
\label{eq:overshoot_conditionalize}
\end{align}
Fix $k$. Let $Y_{k-1}\triangleq b-S_{k-1}$, which is $\mathcal{F}_{k-1}$-measurable.
Using \eqref{eq:A2_mean_excess} and the identity
\begin{align}
&\mathbb{E}\!\left[(\Delta S_k-Y_{k-1})^+ \mid \mathcal{F}_{k-1}\right] \notag \\
&=
\mathbb{P}\!\left[\Delta S_k>Y_{k-1}\mid \mathcal{F}_{k-1}\right] \notag \\
&\quad \cdot \mathbb{E}\!\left[\Delta S_k-Y_{k-1}\,\middle|\,\mathcal{F}_{k-1},\,\Delta S_k>Y_{k-1}\right],
\end{align}
we obtain
\begin{align}
&\mathbb{E}\!\left[(\Delta S_k-Y_{k-1})^+ \mid \mathcal{F}_{k-1}\right] \\
&\le
\eta_0\,
\mathbb{P}\!\left[\Delta S_k>Y_{k-1}\mid \mathcal{F}_{k-1}\right] \\
&\le
\eta_0,
\qquad \text{a.s.}
\label{eq:pospart_bound}
\end{align}
Substituting \eqref{eq:pospart_bound} into \eqref{eq:overshoot_conditionalize} gives
\begin{align}
\mathbb{E}\!\left[R_b^{(N)}\right]\le
\sum_{k=1}^N \eta_0\,\mathbb{P}[\tau_b=k]
=
\eta_0\,\mathbb{P}[\tau_b\le N]
\le
\eta_0.
\label{eq:overshoot_trunc_bound}
\end{align}

\emph{Step 2: Optional sampling at $\tau_b\wedge N$ and passage to the limit.}
Define
\begin{align}
T_N \triangleq \tau_b \wedge N.
\end{align}
By \eqref{eq:A1_drift}, the process
\begin{align}
M_n \triangleq S_n - \mu n
\end{align}
is an integrable submartingale.
Since $T_N\le N$ almost surely, Doob's optional sampling theorem for bounded stopping times
(see, e.g., Doob~\cite{Doob1953} or Williams~\cite[Sec.~10.9]{Williams1991})
yields
\begin{align}
\mathbb{E}[M_{T_N}] \ge \mathbb{E}[M_0],
\end{align}
equivalently,
\begin{align}
\mathbb{E}[S_{T_N}]
\ge
S_0 + \mu\,\mathbb{E}[T_N].
\label{eq:OST_lower}
\end{align}

Next, we upper bound $\mathbb{E}[S_{T_N}]$.
On $\{\tau_b>N\}$, we have $S_N\le b$, hence $S_{T_N}=S_N\le b$.
On $\{\tau_b\le N\}$, we have $S_{T_N}=S_{\tau_b}=b+R_b^{(N)}$.
Therefore, pathwise,
\begin{align}
S_{T_N}
&\le
b + R_b^{(N)}.
\end{align}
Taking expectations and using \eqref{eq:overshoot_trunc_bound} gives
\begin{align}
\mathbb{E}[S_{T_N}]
\le
b + \mathbb{E}[R_b^{(N)}]
\le
b + \eta_0.
\label{eq:upper_STN}
\end{align}
Combine \eqref{eq:OST_lower} and \eqref{eq:upper_STN}:
\begin{align}
S_0 + \mu\,\mathbb{E}[T_N]
\le
b+\eta_0,
\end{align}
hence
\begin{align}
\mathbb{E}[T_N]
\le
\frac{b-S_0+\eta_0}{\mu},
\qquad \forall N\in\mathbb{N}.
\label{eq:TN_bound}
\end{align}

If $S_0>b$, then $\tau_b=1$ and \eqref{eq:Etau_bound} is trivial.
Assume $S_0\le b$, so that $\tau_b\ge 1$ and $T_N\uparrow \tau_b$ pointwise.
By the monotone convergence theorem,
\begin{align}
\mathbb{E}[\tau_b]
=
\lim_{N\to\infty}\mathbb{E}[T_N]
\le
\frac{b-S_0+\eta_0}{\mu}.
\end{align}
This proves \eqref{eq:Etau_bound} (with $(b-S_0)^+=b-S_0$ in this case).
Finally, since $\mathbb{E}[\tau_b]<\infty$, we must have $\mathbb{P}[\tau_b<\infty]=1$.

\section{Proof of \thmref{thm:spitzer-lundberg-sandwich}} \label{app:ruin}
\begin{IEEEproof}
(i) This is a classical consequence of Spitzer's identity;
see \cite{spitzer1956,wendel1958}.

(ii) The rightmost inequality $\psi(x)\le\psi(0)$ is the monotonicity
of~$\psi$: the coupling $S_n^{(x)}=x+\sum_{k=1}^n X_k$ shows that
$\psi(x)\le\psi(y)$ whenever $x\ge y$.

The upper bound $p_{\mathrm{ruin}}(x,A)\le\psi(x)$ is immediate:
$\{\tau_-<\tau_A\}\subseteq\{\tau_-<\infty\}$, so
$p_{\mathrm{ruin}}(x,A)\le\psi(x)$.

For the lower bound, partition $\{\tau_-<\infty\}$ into
$\{\tau_-<\tau_A\}$ and $\{\tau_A<\tau_-\}$ to obtain
\begin{align}
\psi(x)-p_{\mathrm{ruin}}(x,A)
&=\mathbb{P}_x[\tau_-<\infty,\ \tau_A<\tau_-].
\end{align}
By the strong Markov property at $\tau_A$,
\begin{align}
\psi(x)-p_{\mathrm{ruin}}(x,A)
&=\mathbb{E}_x\big[1\{\tau_A<\tau_-\}\,\psi(S_{\tau_A})\big]
\le \psi(A),
\end{align}
where the inequality uses $S_{\tau_A}\ge A$ and monotonicity of $\psi$.
Rearranging yields the lower bound in~\eqref{eq:sandwich}.

(iii) Define $M_n\triangleq e^{-R S_n}$. Since $\mathbb{E}[e^{-R X_1}]=1$,
the process $(M_n)_{n\ge 0}$ is a nonnegative martingale under~$\mathbb{P}_x$
with $M_0=e^{-Rx}$. For each $N\ge 1$, the stopping time $\tau_-\wedge N$
is bounded, so Doob's optional stopping theorem
\cite[Th.~10.9]{Williams1991} gives
\begin{align}
\mathbb{E}_x\big[M_{\tau_-\wedge N}\big]=e^{-Rx}.
\end{align}
On $\{\tau_-\le N\}$, we have $S_{\tau_-}<0$, so
\begin{align}
    M_{\tau_-}=e^{-R S_{\tau_-}}\ge 1.
\end{align}
Dropping the nonnegative contribution
on $\{\tau_->N\}$ and using $M_{\tau_-}\ge 1$ on $\{\tau_-\le N\}$,
\begin{align}
e^{-Rx}
=\mathbb{E}_x\big[M_{\tau_-\wedge N}\big]
\ge\mathbb{E}_x\big[1\{\tau_-\le N\}\,M_{\tau_-}\big]
\ge\mathbb{P}_x[\tau_-\le N].
\end{align}
Letting $N\to\infty$, by monotone convergence of the events
$\{\tau_-\le N\}\nearrow\{\tau_-<\infty\}$, we obtain
$\psi(x)\le e^{-Rx}$, which is~\eqref{eq:lundberg}. The
bound~\eqref{eq:fb-bracket} follows by substituting~\eqref{eq:lundberg}
into the left-hand side of~\eqref{eq:sandwich}.
\end{IEEEproof}

\section{Proof of \thmref{thm:two-sided-overshoot-fixed}} \label{app:twosided}

We use the notation from the theorem statement throughout. Define the one-sided passage times
\begin{align}
\tau_A^+ &\triangleq \inf\Big\{ n\ge 1 : S_n > A \Big\}, \\
\tau_B^- &\triangleq \inf\Big\{ n\ge 1 : S_n < -B \Big\}.
\end{align}
On $\mathcal{E}_+$, we have $\tau = \tau_A^+$ and $D_+ = S_\tau - A$; on $\mathcal{E}_-$, we have $\tau = \tau_B^-$ and $D_- = -(S_\tau + B)$.

\subsection*{Part~(i): Positive-side overshoot}

\paragraph{Ratio bound \eqref{eq:pos-two-sided-fixed}}
Define the one-sided upward overshoot $R_A \triangleq S_{\tau_A^+} - A$.
On $\mathcal{E}_+$, $D_+ = R_A$; on $\mathcal{E}_-$, $D_+ \, 1\{\mathcal{E}_+\} = 0 \leq R_A$. Hence,
\begin{align}
    D_+ \, 1\{\mathcal{E}_+\} \leq R_A. \label{eq:Dp_dominate}
\end{align}
Taking expectations and dividing by $p_+ = \Prob{\mathcal{E}_+}$ yields
\begin{align}
    \E{D_+ \mid \mathcal{E}_+} = \frac{\E{D_+ \, 1\{\mathcal{E}_+\}}}{p_+} \leq \frac{\E{R_A}}{p_+} \leq \frac{\eta(P_{X_1})}{p_+},
\end{align}
where the last step applies \thmref{thm:lordenBound} to the one-sided overshoot $R_A$.

\paragraph{Mean-excess bound \eqref{eq:pos-two-sided-me}}
On the event $\{\tau = k\} \cap \mathcal{E}_+$, we have $S_{k-1} \leq A$ (since the walk has not yet exited) and $S_k > A$. Define the $\mathcal{F}_{k-1}$-measurable threshold
\begin{align}
    y_{k-1} \triangleq A - S_{k-1} \geq 0.
\end{align}
Then $D_+ = X_k - y_{k-1}$ with $X_k > y_{k-1}$. Since the increments $\{X_k\}$ are i.i.d.\ and independent of $\mathcal{F}_{k-1}$, conditioning on $\mathcal{F}_{k-1}$ and the event $\{X_k > y_{k-1}\}$ gives
\begin{align}
    \E{D_+ \mid \tau = k, \mathcal{E}_+, \mathcal{F}_{k-1}} &= \E{X_1 - y_{k-1} \mid X_1 > y_{k-1}} \\
    &\leq \eta_0^{+}(P_{X_1}), \label{eq:Dp_me_cond}
\end{align}
where the inequality holds for every realization $y_{k-1} \geq 0$ by the definition \eqref{eq:eta0pos}. Taking iterated expectations yields
\begin{align}
    \E{D_+ \mid \mathcal{E}_+} \leq \eta_0^{+}(P_{X_1}).
\end{align}

\subsection*{Part~(ii): Negative-side overshoot}

\paragraph{Mean-excess bound \eqref{eq:neg-two-sided-me}}
By the same argument as the positive-side mean-excess bound, with $z_{k-1} \triangleq B + S_{k-1} \geq 0$ playing the role of $y_{k-1}$ and $D_- = -(X_k + z_{k-1})$, we obtain $\E{D_- \mid \mathcal{E}_-} \leq \eta_0^{-}(P_{X_1})$.

\paragraph{Ratio bound \eqref{eq:neg-two-sided-fixed}}
The ratio bound proceeds through a change-of-measure argument under the tilted probability $\bar{\mathbb{P}}$ defined in the theorem statement.

\emph{Step~1 (likelihood-ratio identity at $\tau$).}
Define $L_n \triangleq e^{-\lambda S_n}$.
Since $\E{e^{-\lambda X_1}} = 1$, the process $\{L_n\}_{n\ge 0}$ is a nonnegative mean-one
$\mathbb{P}$-martingale. For any nonnegative $\mathcal{F}_\tau$-measurable
random variable~$H$, the likelihood-ratio identity at the stopping
time~$\tau$ gives \cite[Th.~III.1.3]{asmussenalbrecher2010}
\begin{align}
\mathbb{E}\!\left[H\right]
&=
\bar{\mathbb{E}}\!\left[H\,e^{\lambda S_\tau}\right] .
\label{eq:lr-stopping-fixed}
\end{align}
On $\mathcal{E}_-$, $S_\tau = -B - D_-$, so
$e^{\lambda S_\tau} = e^{-\lambda B}\,e^{-\lambda D_-}$.
Applying~\eqref{eq:lr-stopping-fixed} with $H=1\{\mathcal{E}_-\}$ and with $H=D_-\,1\{\mathcal{E}_-\}$, then dividing, yields
\begin{align}
\mathbb{E}\!\left[D_- \mid \mathcal{E}_-\right]
&=
\frac{
\bar{\mathbb{E}}\!\left[
D_- e^{-\lambda D_-}\mid \mathcal{E}_-
\right]
}{
\bar{\mathbb{E}}\!\left[
e^{-\lambda D_-}\mid \mathcal{E}_-
\right]
} .
\label{eq:ratio-minus-fixed}
\end{align}

\emph{Step~2 (Chebyshev's other inequality).}
By Chebyshev's other inequality \cite{chebyshev} applied to the
conditional law of $D_-$ given $\mathcal{E}_-$ under $\bar{\mathbb{P}}$,
\begin{align}
\bar{\mathbb{E}}\!\left[
D_- e^{-\lambda D_-}\mid \mathcal{E}_-
\right]
&\le
\bar{\mathbb{E}}\!\left[D_- \mid \mathcal{E}_-\right]\,
\bar{\mathbb{E}}\!\left[
e^{-\lambda D_-}\mid \mathcal{E}_-
\right] .
\end{align}
Combining with \eqref{eq:ratio-minus-fixed} yields
\begin{align}
\mathbb{E}\!\left[D_- \mid \mathcal{E}_-\right]
&\le
\bar{\mathbb{E}}\!\left[D_- \mid \mathcal{E}_-\right]
=
\frac{\bar{\mathbb{E}}\!\left[D_- \,1\{\mathcal{E}_-\}\right]}
{\bar{p}_-} .
\label{eq:cond-to-uncond-minus-fixed}
\end{align}

\emph{Step~3 (domination by a one-sided lower overshoot under $\bar{\mathbb{P}}$).}
By the same domination argument as Part~(i), with $R_B^- \triangleq -(S_{\tau_B^-}+B)$,
\begin{align}
\bar{\mathbb{E}}\!\left[D_- \,1\{\mathcal{E}_-\}\right]
&\le
\bar{\mathbb{E}}\!\left[R_B^-\right] .
\label{eq:dominate-minus-fixed}
\end{align}

\emph{Step~4 (apply \thmref{thm:lordenBound} to the reflected walk under $\bar{\mathbb{P}}$).}
Under $\bar{\mathbb{P}}$, the increments $\{X_k\}$ are i.i.d.\ with the distribution of $\bar{X}_1$. Let $Y_k \triangleq -X_k$, so the $Y$-increments are i.i.d.\ with distribution $P_{-\bar{X}_1}$ under $\bar{\mathbb{P}}$. The mean of $Y_1$ under $\bar{\mathbb{P}}$ satisfies
\begin{align}
\bar{\mathbb{E}}\!\left[Y_1\right]
&=
-\bar{\mathbb{E}}\!\left[X_1\right]
=
-\mathbb{E}\!\left[X_1 e^{-\lambda X_1}\right]
> 0. \label{eq:ineq00}
\end{align}
To verify the strict inequality in~\eqref{eq:ineq00}, note that $m(\theta)\triangleq\mathbb{E}[e^{\theta X_1}]$ is convex with $m(0) = m(-\lambda) = 1$, so $m'(-\lambda) \le 0$; equality would force $X_1=0$ a.s., contradicting $\mathbb{E}[X_1]>0$.

Moreover, $R_B^-$ is the overshoot of the $Y$-walk $\sum_{k=1}^n Y_k$ past level~$B$.
Applying \thmref{thm:lordenBound} to this walk under $\bar{\mathbb{P}}$ gives
\begin{align}
\bar{\mathbb{E}}\!\left[R_B^-\right]
&\le \eta(P_{-\bar{X}_1}).
\end{align}
Combining with \eqref{eq:cond-to-uncond-minus-fixed}
and \eqref{eq:dominate-minus-fixed}
yields~\eqref{eq:neg-two-sided-fixed}.

\section{Integrability of mixed log-ratio functionals} \label{app:integrable}
\begin{lemma}
\label{lem:mixed_logratio_integrable}
Let $P$ and $Q$ be probability laws on a common measurable space.
Assume $P\ll Q$ and $Q\ll P$.
Let $X\sim P$ and $Y\sim Q$.
Assume
\begin{align}
\log\frac{P(X)}{Q(X)}
\ \stackrel{d}{=}\
\log\frac{Q(Y)}{P(Y)},
\label{eq:sym_llr}
\end{align}
and
\begin{align}
\E{\left|\log\frac{P(X)}{Q(X)}\right|}<\infty.
\label{eq:llr_int}
\end{align}
For $a\in[1/2,1]$, define
\begin{align}
Z_a(X)
&\triangleq
\log\frac{P(X)}{\frac{1}{2}P(X)+\frac{1}{2}Q(X)} \notag \\
&-\frac{Q(X)}{P(X)}\,
\log\frac{\frac{1}{2}P(X)+\frac{1}{2}Q(X)}
{aP(X)+(1-a)Q(X)},
\label{eq:Za_def_align}
\end{align}
and
\begin{align}
W_a(X)
\triangleq
\log\frac{P(X)}{(1-a)P(X)+aQ(X)}.
\label{eq:Wa_def}
\end{align}
Then, for every $a\in[1/2,1]$,
\begin{align}
\E{|Z_a(X)|}<\infty,
\qquad
\E{|W_a(X)|}<\infty.
\end{align}
\end{lemma}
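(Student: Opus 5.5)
The plan is to reduce both functionals to the integrable log-likelihood ratio $\ell(X)\triangleq\log\frac{P(X)}{Q(X)}$ plus bounded terms, using the ratio $r\triangleq Q(X)/P(X)=e^{-\ell(X)}\in(0,\infty)$. This ratio is well defined $P$-a.s. by mutual absolute continuity. Every log-argument in \eqref{eq:Za_def_align}--\eqref{eq:Wa_def} is a ratio of two convex combinations of $P(X)$ and $Q(X)$. Any such combination $cP+(1-c)Q$ with $c\in[0,1]$ lies between $\min\{P,Q\}$ and $\max\{P,Q\}$. Hence the absolute value of each logarithm is at most $|\ell(X)|$, and this is the key pointwise bound.

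For $W_a$, with $a\in[1/2,1]$, the denominator $(1-a)P+aQ$ lies between $\min\{P,Q\}$ and $\max\{P,Q\}$. Therefore $|W_a(X)|\le|\ell(X)|$, and $\E{|W_a(X)|}\le\E{|\ell(X)|}<\infty$ follows from \eqref{eq:llr_int}. For the first term of $Z_a$, the same argument gives $\bigl|\log\frac{P}{\frac12P+\frac12Q}\bigr|\le|\ell(X)|$.

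The main obstacle is the second term of $Z_a$, since the prefactor $r=Q(X)/P(X)$ is unbounded. The crude pointwise bound yields $r\,|\ell(X)|$, whose $P$-expectation equals $\E{|\log\frac{Q(Y)}{P(Y)}|}$ by the change of measure $\E{r\,h(X)}=\E{h(Y)}$ with $Y\sim Q$. This is precisely where the symmetry hypothesis \eqref{eq:sym_llr} enters. Under it, $\log\frac{Q(Y)}{P(Y)}\overset{d}{=}\ell(X)$, so $\E{r|\ell(X)|}=\E{|\ell(X)|}<\infty$. The change of measure is legitimate because $Q\ll P$, so $\E{r\,h(X)}=\int h\,dQ$ for nonnegative $h$. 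Combining the three bounds by the triangle inequality gives $\E{|Z_a(X)|}\le 2\E{|\ell(X)|}<\infty$.

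The only care needed is the measure-theoretic bookkeeping. First, the densities $P$ and $Q$ should be taken with respect to a common dominating measure such as $P+Q$. Second, on the $P$-null set where $P(X)=0$ or $Q(X)=0$ the expressions may be undefined, and this set contributes nothing. Third, the bound for the logarithm of a ratio of convex combinations should be stated carefully. If $m\le u,v\le M$ with $m=\min\{P,Q\}$ and $M=\max\{P,Q\}$, then $|\log(u/v)|\le\log(M/m)=|\ell|$.
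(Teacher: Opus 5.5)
Your proof is correct, and it takes a different route from the paper's.

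\emph{The paper's route.} It substitutes $T=Q(X)/P(X)$ and writes $Z_a=\log\frac{2}{1+T}-T\,g_a(T)$ with $g_a(t)=\log\frac{(1+t)/2}{a+(1-a)t}$, then splits into cases.
For $a\in[1/2,1)$ it shows $g_a'\ge 0$, which gives $|g_a|\le C_a=\max\{|\log(2a)|,|\log(2(1-a))|\}$. The term with the unbounded prefactor is then bounded by $C_a\,\E{T}=C_a$, with no use of \eqref{eq:sym_llr}.
Only at $a=1$ does it need $\E{T(\log T)^+}<\infty$. It obtains this from the same change of measure plus symmetry that you use.
It bounds $W_a$ by $-\log(1-a)+\log(2a)+|\log T|$ via a case split on $T\le 1$.

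\emph{Your route.} You replace all of this with one sandwich observation: each logarithm is of a ratio of two convex combinations of $P$ and $Q$. This gives $|W_a|\le|\ell|$ and $|Z_a|\le(1+r)|\ell|$ pointwise, after which a single change of measure finishes the argument.

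\emph{What each buys.} The paper's case analysis shows that symmetry, or any integrability of $\log\frac{Q}{P}$ under $Q$, is needed only at the endpoint $a=1$.
Your argument is shorter and never uses $a\ge 1/2$. It also yields explicit bounds that are uniform in $a$: $\E{|W_a(X)|}\le\E{|\ell(X)|}$ and $\E{|Z_a(X)|}\le 2\,\E{|\ell(X)|}$.

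That uniformity is a genuine advantage. In the proof of Lemma~\ref{lem:tcomm}, the parameter $a=1-\tilde{\pi}_{x_m,m}(t)$ is a random $\mathcal{F}_t$-measurable quantity. The optional-sampling step in Theorem~\ref{thm:hitting_time_drift_overshoot} needs unconditional integrability of the increment $w'_m(t,Y_{t+1})$. A bound independent of $a$ secures this directly. The paper's pointwise constants $C_a$ and $-\log(1-a)$ blow up as $a\to 1$.
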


\begin{IEEEproof}
Since $P\ll Q$ and $Q\ll P$, the likelihood ratio
$T \triangleq Q(X)/P(X) \in (0,\infty)$ a.s.
Note that
\begin{align}
    \mathbb{E}[T] = \int \frac{Q(x)}{P(x)} P(x)\,dx = 1, \label{eq:ET1}
\end{align}
and $|\log T| = |\log(P(X)/Q(X))|$, so $\mathbb{E}[|\log T|] < \infty$ by \eqref{eq:llr_int}.

\emph{Integrability of $W_a(X)$.}
From \eqref{eq:Wa_def}, $W_a(X) = -\log\!\big((1{-}a) + a T\big)$.
For $a = 1$, this reduces to $\log(P(X)/Q(X))$, which is integrable by \eqref{eq:llr_int}.
For $a \in [1/2, 1)$, bounding by cases $T \leq 1$ and $T > 1$ gives $|W_a(X)| \leq -\log(1{-}a) + \log(2a) + |\log T|$, so $\mathbb{E}[|W_a(X)|] < \infty$.

\emph{Integrability of $Z_a(X)$.}
Substituting $T = Q(X)/P(X)$ into \eqref{eq:Za_def_align} yields
\begin{align}
    Z_a(X) = \log\frac{2}{1+T} - T\,g_a(T), \label{eq:Za_T}
\end{align}
where $g_a(t) \triangleq \log\frac{(1+t)/2}{a + (1{-}a)t}$.
We bound the two terms separately.

\emph{First term.}
Since $\frac{2}{1+t} \in (0,2)$ for $t > 0$, we have
$\big|\log\frac{2}{1+t}\big| \leq \log 2 + |\log t|$,
so $\mathbb{E}\big[\big|\log\frac{2}{1+T}\big|\big] \leq \log 2 + \mathbb{E}[|\log T|] < \infty$.

\emph{Second term, $a \in [1/2, 1)$.}
A direct calculation gives $g_a'(t) = \frac{2a-1}{(1+t)(a+(1{-}a)t)} \geq 0$, so $g_a$ is nondecreasing on $(0, \infty)$ with
$\lim_{t \downarrow 0} g_a(t) = -\log(2a)$ and
$\lim_{t \to \infty} g_a(t) = -\log(2(1{-}a))$.
Hence $|g_a(t)| \leq C_a$ for all $t > 0$, where $C_a \triangleq \max\{|\log(2a)|,\, |\log(2(1{-}a))|\}$.
Therefore, $\mathbb{E}[T\,|g_a(T)|] \leq C_a\,\mathbb{E}[T] = C_a < \infty$ by \eqref{eq:ET1}.

\emph{Second term, $a = 1$.}
Here $g_1(t) = \log\frac{1+t}{2}$. Since $|\log\frac{1+t}{2}| \leq \log 2 + (\log t)^+$, we have $\mathbb{E}[T\,|g_1(T)|] \leq \log 2 + \mathbb{E}[T(\log T)^+]$.
By the change-of-measure identity $\mathbb{E}[T\, h(X)] = \mathbb{E}_Q[h(Y)]$ and the symmetry assumption \eqref{eq:sym_llr}, $\mathbb{E}[T(\log T)^+] \leq \mathbb{E}_Q[|\log\frac{Q(Y)}{P(Y)}|] = \mathbb{E}[|\log\frac{P(X)}{Q(X)}|] < \infty$ by \eqref{eq:llr_int}.

Combining the bounds on both terms of \eqref{eq:Za_T} gives $\mathbb{E}[|Z_a(X)|] < \infty$ for all $a \in [1/2, 1]$.
\end{IEEEproof}

\section{Proof of \lemref{lem:ruin-bhattacharyya}} \label{app:bhatta}
We write $Z \triangleq Z(P_{Y|X})$ for brevity throughout.

\paragraph{Step~1 (half-Bhattacharyya inequality)}
On $\{\Lambda(y) < 0\}$, $P_+(y) < P_-(y)$, so $P_+(y) \leq \sqrt{P_+(y) P_-(y)}$; integrating gives $\Prob{\Lambda < 0} \leq \int_{\Lambda(y) < 0} \sqrt{P_+ P_-}\, dy$. By the BMS symmetry \eqref{eq:PYXcond1}, $\Prob{\Lambda < 0} = \int_{\Lambda(y) > 0} P_-(y)\, dy \leq \int_{\Lambda(y) > 0} \sqrt{P_+ P_-}\, dy$. Adding the two bounds and using \eqref{eq:Zbhatt} yields
\begin{align}
    \Prob{\Lambda < 0} \leq \frac{Z}{2}. \label{eq:half_bhatt}
\end{align}

\paragraph{Step~2 (product-channel extension)}
The $n$-fold product channel is BMS with Bhattacharyya parameter $Z^n$, so \eqref{eq:half_bhatt} gives $\Prob{S_n < 0} \leq Z^n/2$.  Summing, $\sum_{n=1}^{\infty} \frac{1}{n}\, \Prob{S_n < 0} \leq \frac{1}{2}\log \frac{1}{1-Z}$, which is \eqref{eq:series-bhatt}. Substituting into Spitzer's formula \eqref{eq:spitzer-ruin} gives
\begin{align}
    \psi(0) &= 1 - \exp\!\left(-\sum_{n=1}^{\infty}\frac{1}{n}\,\Prob{S_n<0}\right) \notag \\
    &\leq 1 - \exp\!\left(-\frac{1}{2}\log\frac{1}{1-Z}\right) \notag \\
    &= 1 - (1-Z)^{1/2} = \chi(P_{Y|X}),
\end{align}
which is \eqref{eq:ruin-bhatt}.
\section{Proof of \lemref{lem:exit_ratio}} \label{app:exit_ratio}

Throughout this proof, we consider the i.i.d.\ LLR random walk $S_n = \sum_{k=1}^n \Lambda_k$ with $\Lambda_k \sim P_\Lambda$ under $\mathbb{P}$, and the two-sided stopping time $\tau = \inf\{n \geq 1 \colon u + S_n \notin [0, A]\}$. We write $\mc{E}_- = \{u + S_\tau < 0\}$ for the lower-exit event and $D_- = -(u + S_\tau) > 0$ for the overshoot on $\mc{E}_-$.

\paragraph{Reflection identity}
By \remref{rem:tiltedLLR}, $\bar{\Lambda} \overset{d}{=} -\Lambda$ under $\mathbb{P}$, so the tilted walk $u + \bar{S}_n$ under $\bar{\mathbb{P}}$ has the same law as $u - S_n$ under $\mathbb{P}$. Since $u - S_n = A - ((A - u) + S_n)$, it follows that
\begin{align}
    \bar{p}_{\mr{ruin}}(u, A) = 1 - p_{\mr{ruin}}(A - u, A), \label{eq:pruin_reflection}
\end{align}
and hence
\begin{align}
    f_A(u) = \frac{p_{\mr{ruin}}(u, A)}{1 - p_{\mr{ruin}}(A - u, A)}. \label{eq:fA_representation}
\end{align}

\paragraph{Part~(i): universal bound}
By \remref{rem:tiltedLLR}, $\mathbb{E}[e^{-\Lambda}] = 1$, so $\{e^{-S_n}\}_{n \geq 0}$ is a $\mathbb{P}$-martingale. Applying the likelihood-ratio identity at the stopping time $\tau$ yields
\begin{align}
    \bar{p}_{\mr{ruin}}(u, A) = \mathbb{E}\!\left[e^{-S_\tau} \, 1\{\mc{E}_-\}\right]. \label{eq:lr_identity_exit}
\end{align}
On $\mc{E}_-$, we have $S_\tau = -u - D_-$, hence $e^{-S_\tau} = e^{u + D_-}$. Substituting into \eqref{eq:lr_identity_exit} and conditioning on $\mc{E}_-$ gives
\begin{align}
    \bar{p}_{\mr{ruin}}(u, A) = e^{u} \cdot p_{\mr{ruin}}(u, A) \cdot \mathbb{E}\!\left[e^{D_-} \mid \mc{E}_-\right]. \label{eq:pruin_bar_exact}
\end{align}
Since $D_- > 0$ on $\mc{E}_-$, we have $\mathbb{E}[e^{D_-} \mid \mc{E}_-] \geq 1$. Dividing both sides of \eqref{eq:pruin_bar_exact} by $\bar{p}_{\mr{ruin}}(u, A)$ yields
\begin{align}
    f_A(u) = \frac{e^{-u}}{\mathbb{E}\!\left[e^{D_-} \mid \mc{E}_-\right]} \leq e^{-u},
\end{align}
which is \eqref{eq:fA_universal}.

\paragraph{Part~(ii): half-splitting bound}
We split $[0, A]$ into two halves and apply a different bound on each. Recall that $\psi(0) = p_{\mr{ruin}}(0, \infty)$ is the one-sided ruin probability at the origin, given by Spitzer's formula \eqref{eq:spitzer-ruin}.

\emph{Case $u \in [0, A/2]$.}
For the numerator in \eqref{eq:fA_representation}, \thmref{thm:spitzer-lundberg-sandwich}(ii) gives
\begin{align}
    p_{\mr{ruin}}(u, A) \leq \psi(u) \leq \psi(0). \label{eq:numerator_bound}
\end{align}
For the denominator, applying \eqref{eq:sandwich} and the Lundberg bound \eqref{eq:lundberg} with adjustment coefficient $R = 1$ (by \remref{rem:tiltedLLR}, $\lambda = 1$ for the BMS LLR walk), we obtain
\begin{align}
    p_{\mr{ruin}}(A - u, A) \leq \psi(A - u) \leq e^{-(A-u)} \leq e^{-A/2}, \label{eq:denominator_bound}
\end{align}
where the last step uses $A - u \geq A/2$. Substituting \eqref{eq:numerator_bound} and \eqref{eq:denominator_bound} into \eqref{eq:fA_representation} gives
\begin{align}
    f_A(u) \leq \frac{\psi(0)}{1 - e^{-A/2}}. \label{eq:fA_left_half}
\end{align}

\emph{Case $u \in [A/2, A]$.}
Applying part~(i) directly,
\begin{align}
    f_A(u) \leq e^{-u} \leq e^{-A/2}. \label{eq:fA_right_half}
\end{align}

Taking the larger of \eqref{eq:fA_left_half} and \eqref{eq:fA_right_half} yields \eqref{eq:fA_halfsplit}. Substituting $\psi(0) \leq \chi(P_{Y|X})$ from \lemref{lem:ruin-bhattacharyya} into \eqref{eq:fA_left_half} gives \eqref{eq:fA_halfsplit_chi}.

\section{Proof of \thmref{thm:caploss_expansion}} \label{app:caploss_proof}
Write $C - C_{B,\delta} = S_{\mr{int}} + S_{\mr{tail}}$, where
\begin{align}
    S_{\mr{int}} &\triangleq \sum_{k \in \mc{K}_B^{\circ}} \mu_+(I_k) \notag \\
    &\quad \cdot \Big( \mathbb{E}\!\big[g(\Lambda) \mid \Lambda \in I_k\big] - g(k\delta) \Big), \label{eq:Sint_def} \\
    S_{\mr{tail}} &\triangleq \sum_{k \in \{-L_{\mr{tail}},\, L_{\mr{tail}}\}} \mu_+(I_k) \notag \\
    &\quad \cdot \Big( \mathbb{E}\!\big[g(\Lambda) \mid \Lambda \in I_k\big] - g(k\delta) \Big). \label{eq:Stail_def}
\end{align}

\emph{Step~1 (tail bound).}
By \eqref{eq:cell_loss_psi} and the concavity of $\beta(u) = \log(1+u)$, each tail cell loss is non-negative and bounded above by $\beta(e^{-k\delta})$. Using the BMS antisymmetry $\mu_+(I_{-L_{\mr{tail}}}) = e^{-L_{\mr{tail}}\delta}\,\mu_+(I_{L_{\mr{tail}}})$, the identity $\beta(e^t) = t + \beta(e^{-t})$ for $t \geq 0$, and the bound $\Psi(a_L) = \mu_+(I_{L_{\mr{tail}}})(1 + e^{-L_{\mr{tail}}\delta})$, one obtains $S_{\mr{tail}} \leq \Psi(a_L) \cdot F(L_{\mr{tail}}\delta)$, where $F(t) \triangleq \beta(e^{-t}) + \frac{t\,e^{-t}}{1+e^{-t}}$ is decreasing with $F(0) = \log 2$. Hence
\begin{align}
    0 \leq S_{\mr{tail}} \leq (\log 2)\,\Psi(a_L) = o(\delta^2). \label{eq:tail_bound}
\end{align}

\emph{Step~2 (interior cell loss via Taylor expansion).}
Fix $k \in \mc{K}_B^{\circ}$.  By \lemref{lem:condexp_char}, $\mathbb{E}[e^{-\xi_k} \mid \Lambda \in I_k] = 1$.  Expanding $e^{-u} = 1 - u + u^2/2 + O(u^3)$ and using $|\xi_k| = O(\delta)$ from \ref{cond:R1} gives
\begin{align}
    \mathbb{E}[\xi_k \mid \Lambda \in I_k] = \tfrac{1}{2}\,\mathbb{E}[\xi_k^2 \mid \Lambda \in I_k] + O(\delta^3). \label{eq:step2}
\end{align}
Similarly, expanding $g(k\delta + \xi_k)$ to second order via Taylor's theorem, substituting \eqref{eq:step2}, and using the identity $g'(k\delta) + g''(k\delta) = h(k\delta)$ from \eqref{eq:gpp_identity} yields
\begin{align}
    \mathbb{E}\!\big[g(\Lambda) \mid \Lambda \in I_k\big] - g(k\delta) = \tfrac{1}{2}\,h(k\delta)\,\mathbb{E}[\xi_k^2 \mid \Lambda \in I_k] + O(\delta^3), \label{eq:step3_main}
\end{align}
where the $O(\delta^3)$ error uses $g' \leq 1$, $|g'''| \leq \frac{\sqrt{3}}{18}$ from \lemref{lem:g_derivatives}, and \ref{cond:R3}.  Applying \ref{cond:R2} to replace $\mathbb{E}[\xi_k^2 \mid \Lambda \in I_k]$ by $\delta^2/12$, multiplying by $\mu_+(I_k)$, and summing over $k \in \mc{K}_B^{\circ}$:
\begin{align}
    S_{\mr{int}} = \frac{\delta^2}{24} \sum_{k \in \mc{K}_B^{\circ}} \mu_+(I_k)\,h(k\delta) + o(\delta^2). \label{eq:step4}
\end{align}

\emph{Step~3 (Riemann sum approximation).}
By the Lipschitz bound $\sup|h'| = \frac{8}{27}$, the localization $|Q_B(\Lambda) - \Lambda| \leq \Gamma\delta$ on $\{|\Lambda| < a_L\}$ (where $Q_B$ rounds $\Lambda$ to the nearest lattice point), and $0 \leq h \leq 1$:
\begin{align}
    \left| \sum_{k \in \mc{K}_B^{\circ}} \mu_+(I_k)\,h(k\delta) - \kappa_{P_{Y|X}} \right| \leq \frac{8\Gamma}{27}\,\delta + \Psi(a_L) = o(1). \label{eq:step5}
\end{align}
Combining \eqref{eq:tail_bound}, \eqref{eq:step4}, and \eqref{eq:step5} gives $C - C_{B,\delta} = \frac{\kappa_{P_{Y|X}}}{24}\,\delta^2 + o(\delta^2)$.

\section{Proof of \lemref{prop:threshold_geometry}} \label{app:proofprop}
We first establish a local expansion for the interval LLR.  Let $I = [u, v)$ with midpoint $m = \frac{u+v}{2}$, width $w = v - u$, $|m| \leq T_B + 1$, and $0 < w \leq 2\delta_B$.  Centering via $\Xi \triangleq \Lambda - m$ on $I$ and writing
\begin{align}
    \ell(u, v) = m - \log \mathbb{E}[e^{-\Xi}] \label{eq:local_interval_centering}
\end{align}
yields
\begin{align}
    \ell(u, v) = m + O\!\big((1 + T_B)\,w^2\big), \label{eq:local_interval_expansion}
\end{align}
uniformly over such intervals.  To see this, note that the conditional density of $\Lambda$ given $\Lambda \in I$ is proportional to the Gaussian density $f_+(\lambda)$ of $\Lambda$ under~$P_+$.  Since
\begin{align}
    \sup_{|\lambda| \leq T_B + 1} \left| \frac{d}{d\lambda} \log f_+(\lambda) \right| = O(1 + T_B), \label{eq:log_density_bound}
\end{align}
the conditional law of $\Xi$ is a multiplicative $1 + O((1 + T_B)\,w)$ perturbation of the uniform law on $[-w/2,\, w/2]$.  Consequently, $\mathbb{E}[\Xi] = O((1 + T_B)\,w^2)$ and $\mathbb{E}[\Xi^2] = O(w^2)$.  Since $|\Xi| \leq w/2$, expanding $e^{-\Xi} = 1 - \Xi + \frac{\Xi^2}{2} + O(w^3)$ and taking expectation gives $\mathbb{E}[e^{-\Xi}] = 1 + O((1 + T_B)\,w^2)$, from which \eqref{eq:local_interval_expansion} follows via \eqref{eq:local_interval_centering}.

We prove \eqref{eq:ak_expansion} by backward induction.  Write $a_k = (k - \frac{1}{2})\,\delta_B + e_k$.  The base case $e_L = 0$ holds since $a_L = T_B = (L - \frac{1}{2})\,\delta_B$.  Suppose $|e_{k+1}| \leq C_0\,(1 + T_B)\,\delta_B^2$ for a constant $C_0$ to be chosen.  Define $\tilde{a}_k \triangleq (k - \frac{1}{2})\,\delta_B - e_{k+1}$, so that $[\tilde{a}_k,\, a_{k+1})$ has midpoint $k\,\delta_B$ and width $\delta_B + O((1 + T_B)\,\delta_B^2)$.  By \eqref{eq:local_interval_expansion}, $\ell(\tilde{a}_k,\, a_{k+1}) = k\,\delta_B + O((1 + T_B)\,\delta_B^2)$.  Considering the bracketing values $a_k^{\pm} \triangleq \tilde{a}_k \pm C_0\,(1 + T_B)\,\delta_B^2$ and applying \eqref{eq:local_interval_expansion} again shows that for $C_0$ sufficiently large, $\ell(a_k^-,\, a_{k+1}) < k\,\delta_B < \ell(a_k^+,\, a_{k+1})$.  Since $u \mapsto \ell(u,\, a_{k+1})$ is continuous and strictly increasing, the intermediate value theorem gives $|e_k| \leq C_0\,(1 + T_B)\,\delta_B^2$, closing the induction and proving \eqref{eq:ak_expansion}.  The expansions \eqref{eq:mk_expansion} and \eqref{eq:wk_expansion} follow immediately.

\bibliographystyle{IEEEtran}
\bibliography{mac} 

\vspace{-2em}

\begin{IEEEbiographynophoto}{Recep Can Yavas}
(Member, IEEE) received the B.S. degree (Hons.) in electrical engineering from Bilkent University, Ankara, Turkey, in 2016. He received the M.S. and Ph.D. degrees in electrical engineering from California Institute of Technology (Caltech) in 2017 and 2023, respectively. He was a research fellow at CNRS@CREATE, Singapore, between 2022 and 2024. He was with the Department of Computer Science at National University of Singapore as a research fellow. Since January 2026, he has been an Assistant Professor with the Department of Electrical and Electronics Engineering at Bilkent University, Ankara, Turkey. 
His research interests include information theory, communications, and multi-armed bandits.
\end{IEEEbiographynophoto}

\end{document}